\newcolumntype{+}{!{\vrule width 2pt}}
\newlength\savedwidth
\renewcommand{\@biblabel}[1]{\quad#1.}
\begin{document}
\vspace*{0.125in}

% Manuscript Organization: {Title, Authors, Affiliations}

\begin{flushleft}

{\Large
    \textbf\newline{Comparing AI versus Optimization Workflows for Simulation-Based Inference of Spatial-Stochastic Systems}
}
\newline
\\
Michael A. Ramirez-Sierra\textsuperscript{1,2},
Thomas R. Sokolowski\textsuperscript{1*}
\\
\bigskip
\textbf{1} Frankfurt Institute for Advanced Studies (FIAS), Ruth-Moufang-Straße 1, 60438 Frankfurt am Main, Germany
\\
\textbf{2} Goethe-Universität Frankfurt am Main, Faculty of Computer Science and Mathematics, Robert-Mayer-Straße 10, 60054 Frankfurt am Main, Germany
\\
\bigskip

* ramirez-sierra@fias.uni-frankfurt.de

\end{flushleft}

% \linenumbers % Careful! % [LANE NUMBERS (START)]

% Manuscript Organization: {Abstract}

\section*{Abstract}

Model parameter inference is a universal problem across science. This challenge is particularly pronounced in developmental biology, where faithful mechanistic descriptions require spatial-stochastic models with numerous parameters, yet quantitative empirical data often lack sufficient granularity due to experimental limitations. Parameterizing such complex models therefore necessitates methods that elaborate on classical Bayesian inference by incorporating notions of optimality and goal-orientation through low-dimensional objective functions that quantitatively capture the target behavior of the underlying system. In this study, we contrast two such inference workflows and apply them to biophysics-inspired spatial-stochastic models. Technically, both workflows are simulation-based inference (SBI) methods. The first method leverages a modern deep-learning technique known as sequential neural posterior estimation (SNPE), while the second is based on a classical optimization technique called simulated annealing (SA). We evaluate these workflows by inferring the parameters of two complementary models for the inner cell mass (ICM) lineage differentiation in the blastocyst-stage mouse embryo. This developmental biology system serves as a paradigmatic example of a highly robust and reproducible cell-fate proportioning process that self-organizes under strongly stochastic conditions, such as intrinsic biochemical noise and cell-cell signaling delays. Our results indicate that while both methods largely agree in their predictions, the modern SBI workflow provides substantially richer inferred distributions at an equivalent computational cost. We identify the computational scenarios that favor the modern SBI method over its classical counterpart. Finally, we propose a plausible approach to integrate these two methods, thereby synergistically exploiting their parameter space exploration capabilities.

% Manuscript Organization: {Author summary}

\section*{Author summary}

Mechanistic models provide an in-depth understanding of important biophysical systems. In fields such as developmental biology, these models are inherently complex, as they require genuinely spatial-stochastic descriptions of the underlying systems. This complexity poses significant challenges for inferring model parameters. Recently, modern deep-learning techniques have been integrated with simulation-based inference, creating an exciting new approach for estimating parameters of such models. Their overall goal is to computationally replicate target empirical observations and to develop powerful prediction tools for uncovering hidden system dynamics. However, these modern approaches remain broadly general and are often difficult to implement for specific spatial-stochastic problems, particularly within developmental biology. This difficulty raises the question of how much more valuable these modern approaches are compared to classical techniques. In this study, we compare one modern approach, \emph{AI-MAPE}, inspired by the sequential neural posterior estimation (SNPE) algorithm, against one classical approach, \emph{SA-SGM}, inspired by the simulated annealing (SA) algorithm. Our findings show that, while the inferred parameter sets generally agree between the two approaches, the AI-powered method, at comparable computational effort, provides significantly richer and more regular inferred distributions. This results in more detailed information about parameter interactions and synergies than the SA-inspired method.

% Manuscript Organization: {Introduction}

\section*{Introduction}

In contrast to phenomenological approaches, biophysics-inspired mechanistic models represent the dynamics of biological systems with higher fidelity, albeit at the expense of more complex modeling techniques. In fact, an in-depth understanding of biological problems requires modeling approaches that not only match existing empirical data but also provide quantifiable and testable predictions beyond the initial scope of experimental observations \cite{torregrosa_mechanistic_2021}.

In developmental biology, where system dynamics often exhibit strong robustness and reproducibility despite highly variable conditions, spatial-stochastic descriptions are essential \cite{mjolsness_prospects_2019}. These descriptions necessitate access to rich experimental data. Recent advances in open-source big data tools and ubiquitous computational power have enabled the incorporation of vast amounts of data into computational biology models \cite{bonnaffoux_wasabi_2019, cang_multiscale_2021, forsyth_iven_2021, jiang_identification_2022, dirk_recognition_2023, alamoudi_fitmulticell_2023, prescott_efficient_2024}. However, high-dimensional stochastic models require not only large and comprehensive datasets but also fine-grained, spatial multi-scale time series information for in-depth mechanistic approaches \cite{verdier_simulation-based_2023, wang_missing_2024}. Despite these advancements, detailed quantitative data remain scarce for many biological systems, leading to a predominance of qualitative descriptions.

Given these conditions, model parameter inference arguably becomes the most challenging step in constructing mechanistic, spatial-stochastic models for developmental biophysics problems \cite{wang_massive_2019, coulier_multiscale_2021, sukys_approximating_2022, coulier_systematic_2022, raimundez_posterior_2023}. Classical approaches such as approximate Bayesian computation (ABC), simulated annealing (SA), and heuristic tuning have been widely used \cite{schnoerr_approximation_2017, cranmer_frontier_2020, jiang_identification_2022, franzin_landscape-based_2023, alamoudi_fitmulticell_2023, stillman_generative_2023, prescott_efficient_2024}. However, these methods have limited capabilities in handling high-dimensional parameter spaces.

To address this challenge, biologically-informed neural networks (BINNs) and deep-learning simulation-based inference (DL-SBI) workflows have become increasingly relevant \cite{cranmer_frontier_2020, lagergren_biologically-informed_2020, perez_efficient_2022, wang_missing_2024}. Recent DL-SBI algorithms, such as the sequential neural posterior estimation (SNPE) variant C, are particularly favorable for performing inference in likelihood-free problems by combining the flexibility and generalization power of artificial neural networks (ANNs) with autoregressive models, also known as masked autoregressive flow (MAF) techniques \cite{papamakarios_masked_2018, cranmer_frontier_2020}. The SNPE algorithm allows for non-amortized inference over several rounds, focusing on a single target observation (see Methods section \nameref{subsection:simulation_pipeline} for additional details), and significantly improves computational efficiency over its amortized counterpart \cite{greenberg_automatic_2019, deistler_truncated_2022, boelts_simulation-based_2023, xiong_efficient_2023, dirmeier_simulation-based_2024}.

While DL-SBI approaches have been influential in various fields such as neuroscience \cite{goncalves_training_2020, kaiser_simulation-based_2023}, astrophysics \cite{kolmus_tuning_2024}, and structural biology \cite{dingeldein_simulation-based_2023, dingeldein_amortized_2024}, their application in developmental biology has been limited. Nonetheless, their potential is already recognized, particularly in problems with limited quantitative data such as morphogenesis \cite{stillman_generative_2023, ramirez-sierra_ai-powered_2024}. These inference workflows are evolving rapidly, but their applicability remains challenging \cite{tejero-cantero_sbi_2020, gorecki_amortized_2023}. Few studies so far have addressed the advantages and drawbacks of DL-SBI methods compared to classical methods and whether it is worth the effort to adopt these novel approaches.

In this study, we compare two inference workflows. The first, AI-MAPE, is inspired by the SNPE algorithm \cite{greenberg_automatic_2019} and uses maximum-a-posteriori estimation (MAPE) for model parameter selection \cite{lueckmann_benchmarking_2021}. The second, SA-SGM, is inspired by the simulated annealing (SA) algorithm \cite{franzin_landscape-based_2023} and incorporates elements typical of reinforcement learning, using a self-devised methodology inspired by the sample geometric median (SGM) for model parameter selection \cite{minsker_geometric_2015, minsker_robust_2016}. Both methods are simulation-based inference (SBI) workflows, as detailed in the \nameref{section:methods} section.

We explore the conditions under which AI-MAPE is superior to SA-SGM and how it can enhance mechanistic modeling approaches for spatial-stochastic, biophysical systems. In such models, identifying parameter sets or distributions that drive simulated dynamics to match empirical observations and typical patterns is essential \cite{invernizzi_skipping_2022, stillman_generative_2023}. In particular, these parameter sets should replicate target system behavior which emerges despite biochemical noise and cell-cell signaling delays, reflecting the robustness and reproducibility of biological systems \cite{tolley_methods_2024}. Moreover, mechanistic models help uncover non-obvious relations and synergies among system variables, playing an integral role in biological research by disentangling causal relationships between stimuli and responses. These relationships dictate the temporal evolution of living cells and coordinate individual functional roles at higher levels of biological organization \cite{sgro_intracellular_2015, kaiser_simulation-based_2023}.

To benchmark their effectiveness, we apply both methods to perform parameter inference for two complementary spatial-stochastic gene regulation models of the developing mouse blastocyst \cite{chowdhary_journey_2022}. This paradigmatic biological system exemplifies a self-organizing and reproducible process with robust cell-lineage proportioning and reliable cell-sorting \cite{bruckner_information_2024}. Our focus is on the subsystem controlling the proportions of cell-fate subpopulations upon differentiation, a subject of our companion study \cite{ramirez-sierra_ai-powered_2024}, which explores the implications of stochasticity (or noise) and spatial inhomogeneities for correct gene expression patterns during early mouse embryo development. Thus, our simulation framework for studying cell-type proportioning under genuinely spatial-stochastic conditions forms the basis for our technical cross-comparison of the inference methods described here and their applicability to developmental biology systems.

Finally, we propose an approach to integrate both methods, potentially enhancing their advantages (powerful parameter-space interpolation and intra-round biased sampling) \cite{franzin_revisiting_2019, boelts_flexible_2022, boelts_simulation-based_2023}, while mitigating common pitfalls (over-confident predictions and limited generalizability of parameter estimates) \cite{albert_simulated_2015, frank_input-output_2013, kaiser_simulation-based_2023, xiong_efficient_2023, dirmeier_simulation-based_2024}.

% Manuscript Organization: {Methods}

\section*{Methods} \label{section:methods}

\graphicspath{{./Manuscript_Graphics/}} % Careful!

% Graphics!

In this section, we provide a detailed summary of the two inference workflows (AI-MAPE versus SA-SGM) employed for this study. Common aspects between these two workflows are presented first (see also Fig~\ref{fig1}[A-D]), while their distinctive and particular attributes are explained separately later (see also Fig~\ref{fig1}[AI-MAPE, SA-SGM]). Only a concise description of the underlying biological problem and its nonlinear-multiscale model representation is provided here, limiting the presented details to a necessary minimum. Further interested readers are referred to our companion study \cite{ramirez-sierra_ai-powered_2024} for a comprehensive treatment of the biological characteristics and the applied computational modeling methodology of the considered system.

\subsection*{Mouse ICM-lineage differentiation relies on robust and reproducible stem-cell fate proportioning} \label{subsection:stem_cell_fate_proportioning}

The developmental biology problem under study is a paradigmatic example of self-organizing systems: the blastocyst-stage mouse embryo exhibits a robust inner cell mass (ICM) fate differentiation program establishing reproducible fate proportions and operating in tandem with a reliable cell-sorting mechanism, without any maternal cues \cite{chowdhary_journey_2022, bruckner_information_2024}.

For this particular work, we integrate only the main gene regulatory processes controlling cell-lineage specification and proportioning for the mouse ICM-derived progenies (epiblast ``EPI'' and primitive endoderm ``PRE'') under spatially resolved conditions, bypassing any description of mechanical interactions among cells (cellular division, proliferation, or cell motility), which act as extrinsic noise factors.

% Fig~\ref{fig1}.
\begin{adjustwidth}{-1.75in}{0in} % Careful!
\includegraphics[width = 7in, height = 8.75in]{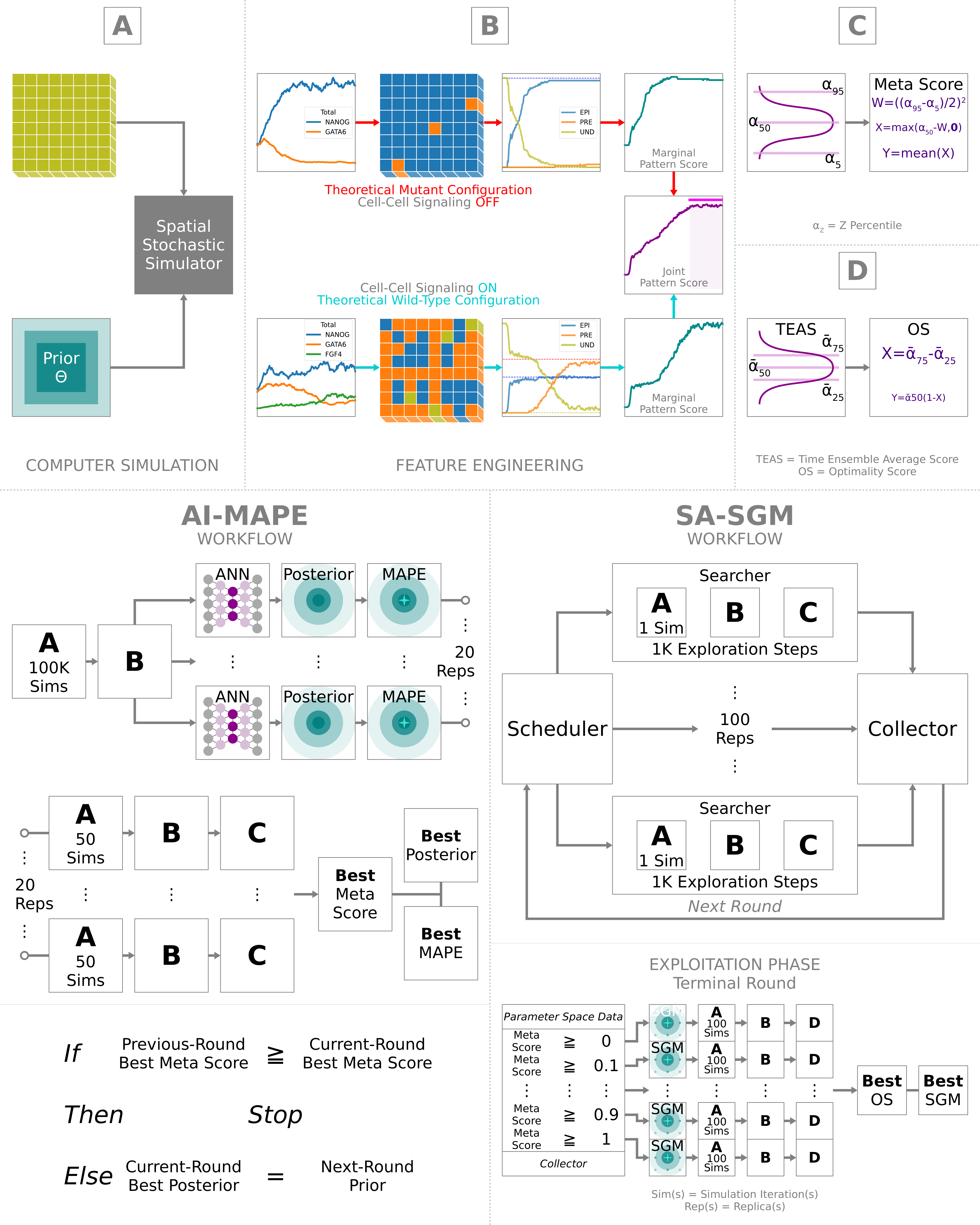} \centering % Careful! % {2.75, 4, 5, 7.5, 8.75}
\end{adjustwidth} % Careful!
\begin{figure}[hpt!]
\begin{adjustwidth}{-1.75in}{0in} % Careful!
\caption{{\bf Visual summary of workflows: AI-MAPE versus SA-SGM.} {\bf [A-D]} Common aspects between both workflows. {\bf [A, B]} Simulation data generation and processing pipelines. Computer simulation involves a representation of the biological system (developing ICM) as a static 2D laminar cell lattice of biochemical reaction volumes coupled via a diffusive signaling molecule (FGF4), and the prescription of a prior distribution over the model parameter space, which together feed the spatial-stochastic simulator. Feature engineering encompasses a series of data transformations: reshaping the simulation trajectories onto a regular time mesh; determining system observables at cell scale from relevant protein count time series (total NANOG, GATA6, and/or FGF4 levels); deciding cell-fate classification at tissue scale based on particular protein count thresholds (\emph{EPI lineage} NANOG $\ge$ 388 and GATA6 $\le$ 329, \emph{PRE lineage} NANOG $\le$ 329 and GATA6 $\ge$ 842, \emph{UND (undifferentiated) lineage} otherwise); constructing pattern score time series ensembles. {\bf [C]} Meta Score. We condense into a sole scalar the success of each sampled parameter vector, with respect to the target system behavior, calculating it on an ensemble of pattern score time series. This serves as a stopping criterion for AI-MAPE, and an exploration-guiding criterion for SA-SGM. {\bf [D]} TEAS and OS. We characterize accuracy and precision of the target developmental behavior by calculating these sample statistics on empirical pattern-score distributions. They also allow us to compare results between both workflows, facilitating the selection of the best or optimal parameter sets. See Methods section for complete details. {\bf [AI-MAPE]} During the simulation data generation, processing, and feature engineering phase, we produce 100 thousand spatial-stochastic trajectories, creating a single pattern-score time series ensemble. During the inference phase, this ensemble of pattern-score time series trains 20 separate ANN replicas, allowing us to obtain several posterior distribution approximations. For each posterior, we calculate its MAPE and from it we generate 50 \emph{de novo} simulations, gathering a fresh pattern-score time series ensemble. To select the optimal MAPE parameter set, we pick the empirical pattern-score distribution with the best (highest) associated meta score. If the previous-round best meta score is larger than or equal to the current-round best meta score, then we stop the workflow; otherwise the current-round best posterior is assigned as the next-round prior (or proposal), iterating the procedure. {\bf [SA-SGM]} During the exploration phase, an annealing scheduler coordinates 100 separate searcher replicas, guiding their first round and first step of parameter space probing by using the prescribed prior distribution. Searchers continue until exhausting 1000 exploring steps, transferring their data into an annealing collector; this collector informs the scheduler about starting locations for the next round of parameter space search. During the exploitation phase, the parameter space dataset (model parameter samples) from all searchers and all rounds is partitioned into hierarchically increasing clusters based on particular meta score thresholds. For each cluster, we calculate its SGM and from it we generate 100 \emph{de novo} simulations, gathering a fresh pattern-score time series ensemble. To select the optimal SGM parameter set, we pick the empirical pattern-score distribution with the best (highest) associated OS.}
\label{fig1}
\end{adjustwidth} % Careful!
\end{figure}
% \clearpage

Focusing on the specification-proportioning process of EPI and PRE lineages from the ICM progenitor population, we constructed a biophysics-inspired, stochastic-mechanistic description of its (cell scale) gene regulation network and its (tissue scale) diffusion-based communication. The main drivers of the ICM differentiation process are the self-activation of \textit{Nanog} and \textit{Gata6} genes (primary markers of EPI and PRE lineages, respectively), together with their mutual repression \cite{saiz_coordination_2020}. Another fundamental driver of this process is an FGF4-mediated feedback loop, which enables cell-cell communication to control the associated cellular fate proportioning \cite{plusa_common_2020}.

In our modeling approach, cell-scale interactions cover the essential processes driving gene expression dynamics: binding and unbinding of transcription-factor-promoter complexes; synthesis and degradation of mRNA and protein molecules; activation and inactivation of enzymes and receptors; phosphorylation and dephosphorylation of protein complexes. At the tissue scale, interactions cover three distinct cellular signaling modes: autocrine, paracrine, and juxtacrine-like (membrane-level exchange of ligand molecules) communication. For tractability, the developing ICM tissue is represented as a static, two-dimensional (2D) spatial lattice of biochemical reaction voxels or volumes (serving as individual cells), which are coupled via FGF4 signaling, mimicking a monolayer or 2D cellular culture (see also Fig~\ref{fig2} for additional details).

Our companion study \cite{ramirez-sierra_ai-powered_2024} delves deeply into the biological connotations of the parameter interactions uncovered via the AI-MAPE method treated here. These parameter interactions provide mechanistic insights into the realization and maintenance of the target cell-fate proportions for the underlying developmental system.

% Fig~\ref{fig2}.
\begin{adjustwidth}{-1.75in}{0in} % Careful!
\includegraphics[width = 4.5in, height = 6.75in]{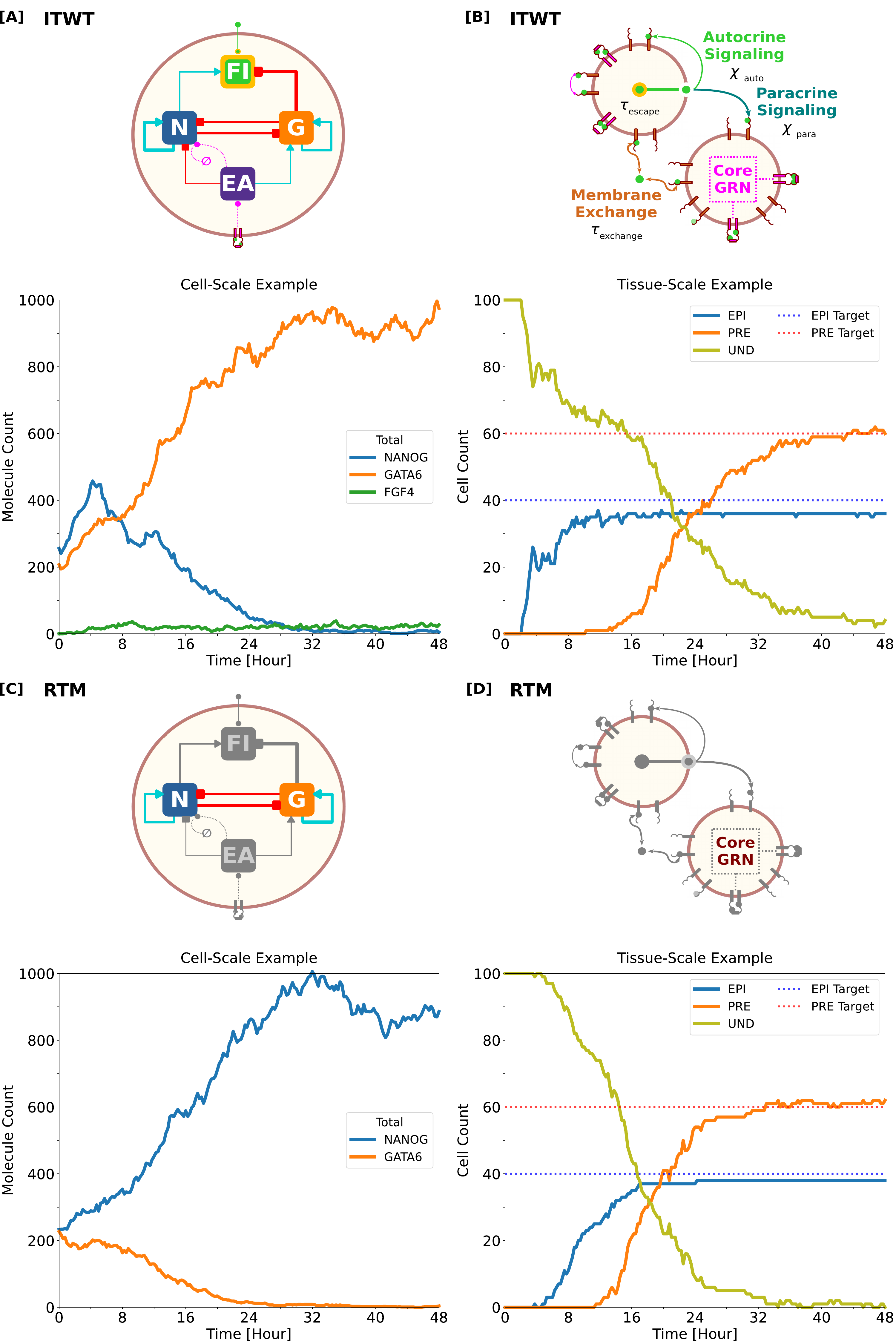} \centering % Careful! % {2.75, 4, 5, 7.5, 8.75}
\end{adjustwidth} % Careful!
\begin{figure}[hpt!]
\begin{adjustwidth}{-1.75in}{0in} % Careful!
\caption{{\bf ITWT and RTM systems: cell- and tissue-scale stochastic trajectories.} {\bf [A, B]} The ITWT is a wild-type-like model of cell-fate proportioning in the mouse embryo inner cell mass (ICM). {\bf [C, D]} The RTM is a mutant-like model lacking FGF4 signaling (no intercellular communication). {\bf [A, C]} Core GRN interactions (top rows); example stochastic trajectories of intracellular or cell-scale dynamics (bottom rows); only main proteins are shown for each system (NANOG, GATA6, and FGF4). {\bf [B, D]} Basic signaling-topology features (top rows); there are three distinct cell-cell communication modes (autocrine signaling, paracrine signaling, and membrane-level exchange of ligand molecules); example stochastic trajectories of extracellular or tissue-scale dynamics (bottom rows); there are three possible cell fates (Epiblast ``EPI'', Primitive Endoderm ``PRE'', and Undifferentiated ``UND''). Notice the disabled components (gray color) for the RTM compared with the ITWT.}
\label{fig2}
\end{adjustwidth} % Careful!
\end{figure}
% \clearpage

\subsection*{Simulation data generation and processing pipelines} \label{subsection:simulation_pipeline}

Fig~\ref{fig1} illustrates the key stages and general considerations of our exploratory Bayesian inferential frameworks, with which we carry out parameter inference for two separate models. Both models aim to recapitulate the final ratio (cellular counts) between the two ICM progeny lineages (EPI and PRE) within a specific time window, as identified in the fully formed mouse blastocyst, but they differ in how they achieve this outcome. The Reinferred Theoretical Mutant (RTM) is a toy system lacking cell-cell communication, and therefore attains the ICM cell-fate ratio in a completely cell-autonomous fashion. Conversely, the Inferred Theoretical Wild Type (ITWT), which represents the actual biologically relevant system, crucially relies on cell-cell communication via the signaling molecule FGF4 and integrates all the key empirical observations of its behavior.

\subsubsection*{Prior distribution construction}

The chosen priors are multivariate uniform distributions with four (RTM) and nineteen (ITWT) dimensions or vector components. Each vector component has an associated predefined value range; these ranges are set based on literature findings and typical values known from similar systems. See S~Table~\ref{S1_Table} summarizing model parameter definitions and explanations.

\subsubsection*{Simulation data generation}

To generate spatial-stochastic trajectories of the system dynamics, we constructed an event-driven simulator based on the Next Subvolume Method (NSM) \cite{fange_stochastic_2010, erban_stochastic_2020}, which exploits the Reaction-Diffusion Master Equation (RDME) formalism \cite{barrows_parameter_2023}. The RDME methodology is a mesoscale stochastic modeling approach for spatially resolved biochemical systems, faithfully incorporating intrinsic noise and allowing for exact simulation of temporal dynamics trajectory data \cite{kang_multiscale_2019}. Our approach expands the NSM scheme to a spatial setting where individual cells are represented as well-mixed reaction voxels or volumes, and the cell-cell communication via signaling-molecule diffusion is modeled as a stochastic jump process between cellular neighbors; see \cite{ramirez-sierra_ai-powered_2024} for additional details.

\subsubsection*{Simulation data processing and feature engineering}

Our simulator generates multiscale high-dimensional time series data. However, these raw synthetic high-dimensional observations are generally ineffective for model parameter inference due to the curse of dimensionality and the lack of experimental datasets with fine-grained resolution. Simulators do not capture all granular characteristics of true data-generating processes, and experimental studies are usually incapable of simultaneously measuring all critical system variables necessary for mechanistic representations \cite{verdier_simulation-based_2023}.

This problem results in low interpretability, high misspecification, and poor generalizability of the underlying models \cite{tolley_methods_2024}. An attractive solution to this challenge is feature engineering: transforming high-dimensional temporal-spatial stochastic trajectories into low-dimensional projections. While tools such as PCA and ANN embedding are increasingly popular for automatic extraction of features (or summary statistics) \cite{varolgunes_interpretable_2020, goncalves_training_2020, tolley_methods_2024, dingeldein_amortized_2024}, there is no full assurance that these tools will find all necessary and sufficient data conditions correctly mapping particular waveforms or time-series shapes to relevant model parameter values \cite{goncalves_training_2020, verdier_simulation-based_2023, tolley_methods_2024}.

As such, \emph{ad hoc} approaches can leverage unique domain knowledge to formulate (or handcraft) summary statistics tailored to the target system behavior. These handcrafted summary statistics must be used in conjunction with diagnostic analytics to track the proper progression of the posterior estimation scheme \cite{tolley_methods_2024}.

Here we engineer a suitable feature, referred to as the ``pattern score function'', which allows us to monitor posterior estimation performance via a unique scalar, referred to as the ``meta score''. To construct the pattern score function, we take advantage of three critical empirical characteristics of the studied biological system: (1) emerging ICM fates exhibit a highly reproducible ratio of $2:3$ for EPI and PRE lineages \cite{saiz_coordination_2020}; (2) expected EPI-PRE fate proportions are reached within a 48-hour time window, roughly 12 or 8 hours before the final time point preceding the implantation stage \cite{allegre_nanog_2022}; (3) inoperative cell-cell signaling via FGF4 stimulates almost all cells to adopt the EPI fate \cite{bessonnard_icm_2017}.

This third characteristic requires us to embrace the notion of multiple ``target rules'' for the ITWT system, referring to different model configurations. It reflects that naive cellular pluripotency (i.e., a potency akin to the EPI fate) is the default state of the ICM population when cell-cell signaling is lost through directed mutations. A biologically meaningful model of this system must simultaneously recapitulate both wild-type and mutant-like cases. This notion necessitates running simulations for both model configurations in parallel to find relevant parameter value sets attaining and sustaining both goal ratios. This procedure can be extended to three or more rules as shown next (see \nameref{subsubsection:multiple_model_configuration_combination}). Note that for the RTM system, only one target rule applies, as it already lacks FGF4-mediated cell-cell signaling by design.

\subsubsection*{Pattern score defines a suitable objective function}

We simulate, via parallel computing, 48-hour composite (event-driven) trajectories of all complementary model configurations for grids of 25 cells; see again Fig~\ref{fig1}[A]. Using sequential data transformations, we first reshape the simulation trajectories onto a regular time mesh with a sampling period of 0.25 hours, which is sufficient for capturing the relevant tissue-scale temporal dynamics of the considered process.

Each simulated cell has an associated multidimensional time series representing the dynamics of all biomolecular counts; these counting variables trace the gene regulatory network (GRN) elements and the signaling pathway components.

Although exact simulations require tracking all involved biochemical species, we focus only on two observables that fully characterize target system behavior (correct cell-fate specification) at the cellular scale: NANOG, a key protein marker of the EPI lineage, and GATA6, a key protein marker of the PRE lineage. These two protein counts are compared against predefined copy-number thresholds for classifying individual cells into one of three fates: EPI lineage (NANOG $\ge$ 388 and GATA6 $\le$ 329), PRE lineage (NANOG $\le$ 329 and GATA6 $\ge$ 842), and UND (undifferentiated) lineage otherwise.

At the tissue scale, full characterization of target system behavior (correct cell-fate proportioning) requires three observables at each simulated time point: EPI, PRE, and UND population counts. These three cell-counting variables are the most important components of our feature engineering pipeline. In other words, the assigned cell fates are converted into cell-fate population counts at the tissue scale to fully characterize the target system behavior.

Without loss of generality, for each fate at every simulated time point, the pattern score time series is the output of a vector-valued objective function, whose two inputs are total and target cell-fate counts per model configuration or rule.

Formally, we index the three possible cellular population fates as 0, 1, and 2 for EPI, PRE, and UND, respectively. Let $\boldsymbol{Z}_{t}=(Z_{t,0},Z_{t,1},Z_{t,2})$ be a discrete random vector that takes values in $\mathbb{N}^{3}$, representing the total cell-fate counts at a particular discrete time point $t \in \mathbb{N}$. Let $\boldsymbol{w}_{m}=(w_{m,0},w_{m,1},w_{m,2}) \in \mathbb{N}^3$ be a discrete vector representing the target cell-fate count for a particular model configuration; we index the possible rules by $m \in \{0,1,\ldots\}$.

Thus, for each model configuration, the (marginal) pattern score is a continuous random variable $S_{t,m}$ representing a nonlinear transformation mapping from an absolute-difference vector $|\boldsymbol{z}_{t}-\boldsymbol{w}_{m}|=(|z_{t,0}-w_{m,0}|,|z_{t,1}-w_{m,1}|,|z_{t,2}-w_{m,2}|)$ to a point $s_{t,m}$ in the closed interval $[0,1] \in \mathbb{R}$:

\begin{align} % Marginal Score
S_{t,m} & = \frac{\exp{\left(-{\|\boldsymbol{Z}_{t}-\boldsymbol{w}_{m}\|}_{1}/{\|\boldsymbol{w}_{m}\|}_{1}\right)}-\overset{\star}{w}}{1-\overset{\star}{w}}
\\[13pt]
\overset{\star}{w} & = \exp{\left(-2\max{\left(\boldsymbol{w}_{m}\right)}/{\|\boldsymbol{w}_{m}\|}_{1}\right)}
\label{eq:marginal_score}
\end{align}
Here, ${\|\cdot\|}_{1}$ indicates the $\ell^{1}$ vector norm, and ${\|\boldsymbol{w}_{m}\|}_{1}/{\|\boldsymbol{Z}_{t}\|}_{1} = 1$.

\subsubsection*{Combining multiple model configurations into a joint pattern score} \label{subsubsection:multiple_model_configuration_combination}

If the desired system behavior is characterized by only one model configuration (like for the RTM), then the ``marginal'' pattern score $S_{t,m}$ is the unique and final feature necessary for discriminating the target system behavior. In contrast, when there are two or more model configurations (like for the ITWT), the multiple resulting (univariate) marginal time series must be combined into a single ``joint'' pattern score.

For this purpose, we perform parameter space exploration by exploiting two companion $\ell^{1}$- and $\ell^{2}$-inspired penalty methods for combining multiple model configurations, applied through two separate and independent runs for comparison. These methods penalize the arithmetic mean of all the marginal pattern scores by subtracting a correction value that stresses distinct search priorities: the $\ell^{1}$ (or L1) penalty judges parameter set quality based on the worst or smallest value among all the model configuration scores, while the $\ell^{2}$ (or L2) penalty judges parameter set quality based on a value representing the central tendency among all the model configuration scores but biased towards the minimum.

For the case of just two model configurations, the following formula completely describes both penalizing methods:

\begin{equation} % Joint Score
S_{t} =
\begin{cases}
\begin{aligned}
\left(\frac{S_{t,0}+S_{t,1}}{2}\right) - \left|\frac{S_{t,0}-S_{t,1}}{2}\right| \quad & \textrm{if} \quad \ell^{1}, \\[13pt]
\sqrt{\left(\frac{S_{t,0}+S_{t,1}}{2}\right)^{2} - \left(\frac{S_{t,0}-S_{t,1}}{2}\right)^{2}} \quad & \textrm{if} \quad \ell^{2}.
\end{aligned}
\end{cases}
\label{eq:joint_score}
\end{equation}
Here, $\left|\frac{S_{t,0}-S_{t,1}}{2}\right|$ and $\left(\frac{S_{t,0}-S_{t,1}}{2}\right)^{2}$ are the possible penalty terms.

However, this formula does not consistently generalize or scale to three or more rules. Fortunately, it is easy to prove that, under simple assumptions, the $\ell^{1}$-inspired penalty method is equivalent to taking the minimum among marginal pattern scores, and the $\ell^{2}$-inspired penalty method is equivalent to taking the geometric mean among marginal pattern scores, which can be extrapolated to three or more rules; i.e.:

\begin{equation} % Joint Score Equivalence
\begin{gathered}
\begin{aligned}
\min(S_{t,0},S_{t,1}) & = \left(\frac{S_{t,0}+S_{t,1}}{2}\right) - \left|\frac{S_{t,0}-S_{t,1}}{2}\right|
\\[13pt]
\sqrt{S_{t,0}S_{t,1}} & = \sqrt{\left(\frac{S_{t,0}+S_{t,1}}{2}\right)^{2} - \left(\frac{S_{t,0}-S_{t,1}}{2}\right)^{2}}
\end{aligned}
\end{gathered}
\label{eq:joint_score_equivalence}
\end{equation}
Please see \nameref{section:supporting_information} for these proofs.

\subsubsection*{Meta score}

We monitor the posterior estimation performance via the ``meta score''. This meta score condenses into a single scalar the success of each parameter vector, with respect to the target system behavior. As such, the meta score serves as a stopping criterion for the AI-MAPE workflow, and it also serves as an exploration-guiding criterion for the SA-SGM workflow.

The meta score, $\overline{S}$, is a statistic calculated on a collection or ensemble of pattern score time series via the following formula:

\begin{equation}
\overline{S} = \operatorname{mean}\left(\max{\left(\boldsymbol{\alpha}_{50}-\boldsymbol{\beta},\vec{0}\right)}\right)
\label{eq:meta_score}
\end{equation}
Where $\overline{s}$ is the realization of the random variable $\overline{S}$: $\overline{s} \in [0,1] \subset \mathbb{R}$. The symbol $\boldsymbol{\alpha}_{50}$ denotes the 50-percentile (or median) vector of a particular pattern score ensemble at every simulated time point. The symbol $\boldsymbol{\beta} = ((\boldsymbol{\alpha}_{95}-\boldsymbol{\alpha}_{5})/2)^{2}$ denotes an element-wise penalty vector, intending to favor high cell-fate proportioning accuracy and precision. The symbol $\vec{0}$ denotes the zero or null vector. Finally, we apply the maximum operator to every component of the two input vectors and calculate the arithmetic mean.

We remark that the calculation of the meta score must be applied only to the last 16~h window of any simulation trajectory ensemble for this particular work. This time-window constraint is critical because it informs the ANN that the target behavior of the underlying biological system must be reached and sustained (within a reasonable buffer) 12 to 8 hours before the last simulated time point, as described previously, without considering any other past simulation state.

\subsection*{Inference workflow: AI-MAPE} \label{subsection:inference_pipeline_AI-MAPE}

We now illustrate the key elements of our AI-MAPE workflow. Our goal is not only to find a model parameter set that reproduces a biologically relevant target behavior, but also to discover compensatory mechanisms between model parameters and understand their sensitivity or robustness against value perturbations.

To extract this rich information cheaply and effectively, we opted for the sequential neural posterior estimation (SNPE) algorithm variant C \cite{greenberg_automatic_2019}, combined with maximum-a-posteriori probability estimation (MAPE) \cite{lueckmann_benchmarking_2021}. To facilitate the integration of the SNPE algorithm and the MAPE calculation into one workflow, we employed a state-of-the-art SBI toolbox \cite{tejero-cantero_sbi_2020}. This toolbox also assisted with other analyses, such as quantification of parameter uncertainties and computation of parameter correlations.

The SNPE algorithm is a member of the deep-learning simulation-based inference (DL-SBI) class. DL-SBI algorithms take advantage of artificial neural networks (ANNs) to solve inverse problems \cite{kolmus_tuning_2024} and navigate model parameter spaces efficiently. These techniques are especially useful for likelihood-free (or intractable likelihood) settings, employing forward-model simulations to assist the training of ANNs for directly approximating the likelihood itself \cite{papamakarios_sequential_2019, boelts_flexible_2022}, the likelihood-ratio \cite{miller_contrastive_2023}, or the posterior \cite{greenberg_automatic_2019, deistler_truncated_2022}. Although each approach has unique benefits and disadvantages, the SNPE algorithm's ability to directly approximate the posterior parameter distribution is arguably its most advantageous feature when applying scarce and expensive model simulations \cite{kaiser_simulation-based_2023, tolley_methods_2024}.

To this end, the SNPE approach exploits a deep learning architecture (conditional neural network estimator) for approximately encoding a functional mapping between model parameters and model simulations (or, as in our case, handcrafted summary features) through normalizing flows \cite{tolley_methods_2024}, thereby obtaining a direct estimation of the (multidimensional) posterior distribution over the parameter space \cite{greenberg_automatic_2019, goncalves_training_2020}.

In principle, a single large simulation dataset can be harnessed for training an ANN within only one round, employing a procedure known as amortized inference via NPE, a non-sequential variant of SNPE \cite{greenberg_automatic_2019, goncalves_training_2020}. However, even with a high simulation data budget, single-round inference is generally not efficient enough for obtaining a useful approximation of the model parameter posterior distribution \cite{boelts_flexible_2022, kaiser_simulation-based_2023}.

This problem can be solved by iteratively refining a posterior estimate employing a multi-round strategy via SNPE, biasing parameter search by focusing on a single target observation congruent with the feature engineering pipeline \cite{kaiser_simulation-based_2023}. Yet, this non-amortized solution comes with two further caveats. First, any other target observations might not be properly captured by the trained ANN \cite{greenberg_automatic_2019, boelts_simulation-based_2023}. Second, this procedure creates proposal distributions by adjusting posterior distributions between rounds, but when the model parameter space is high dimensional (having over 10 components), this adjustment usually leads to ``posterior leakage'': drawn samples often fall outside the proposal range, making it prohibitively expensive to utilize SNPE for highly detailed models \cite{boelts_simulation-based_2023, dirmeier_simulation-based_2024}. Several approaches have been proposed to overcome this last caveat \cite{deistler_truncated_2022, xiong_efficient_2023}, but here we use a simple-yet-effective self-devised alternative, described next.

\subsubsection*{AI-MAPE multi-round strategy}

In this study, we perform 4 inference rounds for RTM, and 8 or 10 inference rounds for ITWT (owing to its larger number of unknown parameters). For each round, we generate 100 thousand simulation trajectories. Once the simulation stage is complete at each round, the ANN training data is generated all at once using every available trajectory. If the current round number is 2 or above, we augment the ANN training by merging the simulation data from the current and the (immediately) previous rounds.

Utilizing the same dataset, we train exactly 20 ANNs in parallel at each round. Despite using the same ANN architecture, the goal is to exploit the inherent stochasticity of the actual inference stage to select the best possible posterior distribution: the ANN training algorithm employs mini-batch stochastic gradient descent, and the pattern score trajectories include implicit randomness. A similar approach has been recently applied to some analogous problems \cite{wang_massive_2019}.

A key component is constructing a synthetic target observation to correctly train the ANN and obtain a useful approximation of the full posterior distribution. Our data processing pipeline creates a suitable latent ``feature'' space representation of the raw simulation data in terms of ``scores''. Therefore, it is straightforward to construct a time series with support spanning the relevant simulation period (32-48~h), where each entry represents the value corresponding to the ideal target behavior.

Each trained ANN provides unrestricted access to samples of an approximate posterior distribution, making it feasible to calculate its model parameter MAPE. For each MAPE, we generate exactly 50 additional simulations and construct its associated empirical pattern-score (time-series) distribution. To select the best learned posterior distribution at the current round, which is implicitly conditional on the target observation, we assess the quality of each MAPE by computing the meta score over its associated ensemble of 50 additional simulations. The best MAPE (or the best posterior) is the one with the highest meta score.

The posterior with the highest associated meta score is subsequently used as the prior for the next round. If the current-round meta score does not improve (i.e., it is smaller than or equal to the previous-round meta score), then this condition serves as a stopping criterion for the inference workflow, and no further rounds are conducted.

To avoid the problem of ``posterior leakage'' for our highly detailed mechanistic models, we feed the SNPE algorithm with a proposal that completely matches the original (first-round) prior. This remedy has proven effective and simple, though it may sacrifice some sampling efficiency.

\subsection*{Inference workflow: SA-SGM} \label{subsection:inference_pipeline_SA-SGM}

We now illustrate the key elements of our second workflow, the SA-SGM, which is based on the simulated annealing (SA) algorithm \cite{franzin_landscape-based_2023} and the sample geometric median (SGM) \cite{minsker_geometric_2023}.

The SA algorithm and its many variants belong to the metaheuristic optimization class \cite{franzin_revisiting_2019}. These algorithms are popular because they provide problem-agnostic guidelines that are easy to implement for complex (black-box) scenarios, while being effective for tackling computationally difficult optimization questions \cite{franzin_revisiting_2019, franzin_landscape-based_2023}. They can also be seen as stochastic local search algorithms \cite{franzin_landscape-based_2023}, where their success depends on balancing the explore-exploit dilemma, similar to reinforcement learning settings \cite{arumugam_bayesian_2023}. The searcher or agent must comprehensively explore the objective parameter space to discover informative regions, and efficaciously exploit available information to quickly reach its objective.

Moreover, the geometric median is a popular and robust estimator of location or central tendency for multivariate data, easily scaling from one- to high-dimensional distributions, serving as a generalization of the usual univariate median \cite{minsker_geometric_2015, minsker_robust_2016, minsker_geometric_2023}.

Adopting the concept of the geometric median and taking inspiration from the SA algorithm class (while controlling the exploration-exploitation trade-off), we implemented an inference workflow to challenge the AI-MAPE counterpart. We employed the common ``fast annealing schedule'' \cite{kochenderfer_algorithms_2019} and separated the inference problem into several rounds, just like in the AI-MAPE workflow: 4 inference rounds for RTM, and 8 or 10 inference rounds for ITWT. Given that it is always possible (and computationally inexpensive) to estimate the geometric median of an arbitrary empirical distribution \cite{minsker_geometric_2023}, the selection of the best discovered model parameter set relies on computing this estimate on adaptable sample sizes, a technique we simply call the ``sample geometric median'' (SGM).

In the following, we describe the SA-SGM multi-round inference strategy and our way of computing the SGM in detail.

\subsubsection*{SA-SGM multi-round strategy}

Per round, we initialize exactly 100 separate searchers in parallel, which independently sample the prior parameter distribution. Each searcher performs exactly 1000 exploration steps or model simulations, but the original prior is only involved during the first exploration step of the first round.

Once a current (single) simulation is complete, it undergoes the full data processing and feature engineering pipelines. We calculate its associated meta score, which guides the annealing scheduler towards the next exploration step. However, unlike classical SA variants, exploration does not stop within a round, even when the searcher discovers a parameter space point with the highest possible meta score (i.e. 1), in order to gather contextual information about local parameter neighborhoods.

At the end of a current-round simulation stage, when all the searchers have completed all their allowed exploration steps, we merge the collected information from all the agents into a (current-round) global dataset, which is yet split into two subparts. The first part covers the drawn parameters associated with the 90th or higher percentile of the empirical meta score distribution. The second part covers the remaining cut of the empirical meta score distribution.

There is no general routine for choosing an appropriate percentile threshold, so it acts as a hyperparameter that can be adjusted manually or heuristically between consecutive rounds. In our case, this percentile threshold also balances the explore-exploit trade-off: for a given searcher, the next-round model parameter set is uniformly randomly sampled from the current-round global (meta score) dataset, where the elements associated with the 90th or higher percentile are drawn with 90\% probability, and the remaining with 10\% probability. As such, the empirical meta score distribution acts as a next-round prior, and an exploration-guiding criterion for the SA-SGM workflow during successive rounds.

Hypothetically, a current-round inference stage would exploit the built (empirical) meta score distribution by picking the model parameter set associated with the highest current-round meta score. In case of a tie, any of the top-scoring sets is picked with equal probability. Just like in the AI-MAPE workflow, for this pick, we would generate exactly 50 additional simulations and construct its empirical pattern-score (time-series) distribution. If the current round number is 2 or above and the current-round meta score pick does not improve (i.e., it is smaller than or equal to the previous-round meta score pick), then this condition serves as a stopping criterion for the inference workflow, and there is no next round.

However, this hypothetical scenario is short-sighted and unreliable, as it only considers a single parameter sample at each round for making a decision about its objective knowledge of the explored space, without integrating any other cues. It does not ensure the most appropriate recapitulation of the target behavior for the underlying biological system. For these reasons, we discarded this hypothetical scenario and opted for an approach that comprehensively exploits the information collected during the exploration rounds, as detailed in the following section.

\subsubsection*{Sample geometric median}

Building on the concept of the SGM and exploiting the contextual information about local parameter neighborhoods gathered after completing all prescribed exploration rounds, we created an inference workflow that effectively leverages the parameter space data discovered and collected by all searchers.

To select the optimal point in parameter space (i.e., the best discovered model parameter set), we first construct a global dataset from all accessible parameter samples and their associated meta scores across all rounds. This all-round global dataset is systematically split into several (all-round) local subsets, aggregating parameter samples with similar meta scores into hierarchically increasing clusters; the number of clusters is an additional heuristic hyperparameter, see Fig~\ref{fig3}[A] or Fig~\ref{fig6}[C] for further clarification. Subsequently, for each all-round local subset, we find its SGM, which is, by definition, the parameter sample (vector) or point minimizing the sum of normalized distances from itself to the other vectors within the given subcollection.

For each found SGM, we generate exactly 100 additional model simulations and construct its empirical pattern-score (time-series) distribution, restricted to the 32-48~h interval. Following the production of this simulation score-data distribution, instead of computing its associated meta score, we characterize it using two closely linked statistics: the Time Ensemble Average Score (TEAS) and the Optimality Score (OS).

TEAS is a three-component vector statistic calculated by determining the 25th, 50th, and 75th percentiles of an SGM-associated simulation ensemble, and independently computing their time averages; i.e., $\emph{TEAS} = \left(\overline{\alpha}_{25},\overline{\alpha}_{50},\overline{\alpha}_{75}\right)$. OS is a scalar statistic calculated from TEAS via the formula $\emph{OS} = \overline{\alpha}_{50}(1-(\overline{\alpha}_{75}-\overline{\alpha}_{25}))$. It follows that the best or optimal model parameter set is the found SGM with the highest associated OS.

The functional forms of TEAS and OS are motivated by, and can be seen as modifications of, the concept of the meta score. At every simulation time point within the 32-48~h interval, TEAS estimates the accuracy of the target developmental behavior by evaluating the central tendency of pattern-score time-series ensembles using the median (i.e. 0.5 quantile), and estimates precision by calculating their interquartile range extreme points (i.e. 0.25 and 0.75 quantiles). OS combines these three statistics into a single scalar, rewarding high accuracy but penalizing low precision. Future revisions of our frameworks will consider fusing the meta score, TEAS, and OS, harmonizing all motivations into a unified measure.

\subsection*{Computational experiments} \label{subsection:computational_experiments}

\subsubsection*{Model parameter sensitivity matrix} \label{subsubsection:sensitivity_measure}

Sensitivity to value changes or perturbations for all model parameters is measured using two complementary approaches, depending on the nature of the estimated posterior distribution. Note that this analysis is computationally feasible by virtue of the ANN surrogate of the model simulator (via AI-MAPE). The first approach takes the raw posterior, which is consistent with the target observation, also known as the ``unconditional posterior''. The second approach takes the raw posterior conditioned on any parameter set, also known as the ``conditional posterior'', which can be either MAPE, SGM, or any other point estimate (value), as long as it is congruent with the prescribed prior distribution ranges. In addition, this second approach probes the model parameter space by altering one- or two-dimensional parameter values at once, while keeping other parameter interactions fixed at their point estimates.

In particular, with respect to the imposed prior ranges, we calculate the truncated posterior coverage for all parameter dimensions independently, relative to the histogram bin size (250 bins per dimension). The truncation filters the regions across one-dimensional posterior marginals (i.e. histograms) where the probability mass is greater than or equal to 0.25 (arbitrary threshold), thus retaining only the most significant parts and simplifying the overall analysis. An equivalent evaluation is conducted for the case of two-dimensional posterior marginals.

Furthermore, we employ a normalized logarithmic scale (base 10): a sensitivity coefficient of 0\% indicates that any parameter value within its prior range can recover the target observation, while a sensitivity coefficient of 100\% indicates that the parameter value must fall within a singular histogram bin to recover the target observation. Note that, for simplicity, the term ``parameter value'' refers to both singleton interactions (i.e. one-dimensional parameters) and duple or pair interactions (i.e. two-dimensional parameters); see Fig~\ref{fig4} and Fig~\ref{fig9}.

% Manuscript Organization: {Results}

\section*{Results}

\subsection*{RTM: estimating primary core GRN interaction parameters for a toy system}

The main task explored in this work is the parameterization of a spatial-stochastic gene regulation model of stem cell differentiation in the inner cell mass (ICM) of the early mouse embryo. This parameterization should enable the generation of stochastic time series ensembles of the developmental dynamics that (1) reproduce experimentally determined features and (2) allow for quantification of the robustness and reproducibility of the cell-fate proportioning process. To this end, we constructed a highly detailed, biophysics-rooted spatial-stochastic model of the underlying gene regulatory processes and set out to determine the distributions of the involved biochemical parameters, as most of these have so far been inaccessible experimentally. To accomplish this, we chose Bayesian inference, which allows for identification of posterior distributions that make the model comply with prescribed empirical observations.

However, applying Bayesian inference in this context is challenging due to the inherent stochasticity of the considered system and the high-dimensional model parameter space \cite{goncalves_training_2020}. Moreover, validating such an inference workflow is difficult because ground-truth posterior distributions are typically inaccessible \cite{tolley_methods_2024}. This makes it hard to assess a priori which inference method will perform best for the given scenario.

To illustrate these challenges, we examine and compare two distinct inference workflows: AI-MAPE and SA-SGM, also referred to simply as AI and SA, respectively. AI-MAPE is guided by the sequential neural posterior estimation (SNPE) algorithm variant C \cite{greenberg_automatic_2019}, employing the maximum-a-posteriori estimation (MAPE) for optimal parameter set selection \cite{lueckmann_benchmarking_2021}. SA-SGM is guided by the simulated annealing (SA) algorithm \cite{franzin_landscape-based_2023}, employing a self-developed methodology centered on the sample geometric median (SGM) for optimal parameter set selection \cite{minsker_geometric_2015, minsker_robust_2016}.

We start by applying these workflows to a hypothetical \emph{toy model} featuring only four free parameters. This toy model, which we call the ``Reinferred Theoretical Mutant'' (RTM), is entirely derived from the full-fledged ``Inferred Theoretical Wild Type'' (ITWT) system (see Fig~\ref{fig2}). The RTM differs from the ITWT by its lack of cell-cell communication, such that the cell-lineage proportioning process relies exclusively on a probabilistic mechanism. Consequently, the RTM inherits all model parameters from the ITWT except for primary core gene regulatory network (GRN) interactions, which must be reestimated to establish the target cell-lineage proportioning.

We compare the two methods at equal computational power, performing the same number of inference rounds (four) and simulations per round (100 thousand). For each method, we also conducted two independent complete \emph{de novo} runs of the entire workflow, distinguishing them by indexes 1 and 2. Therefore, we compare four independent inference runs for the RTM, designated as AI\_1, AI\_2, SA\_1, and SA\_2 in the following.

\subsubsection*{Selecting optimal parameter sets for SA runs: sample geometric median (SGM)}

Both the AI and SA methods estimate full posterior distributions, but actual models can only incorporate one unique set of parameter values, requiring some way to discriminate the best among all the plausible parameter sets. In Fig~\ref{fig3}, we provide an overview of the optimal parameter value set selection for the SA method. This involves two relevant and closely linked statistics, which can be computed for both methods. The Time Ensemble Average Score (TEAS) (Fig~\ref{fig3}[A] top row) is a three-component vector calculated by restricting the simulation score data to the 32-48~h interval, determining the ensemble 25-50-75 percentiles, and computing their time averages separately: $\overline{\alpha}_{25}$, $\overline{\alpha}_{50}$, and $\overline{\alpha}_{75}$. The Optimality Score (OS) (Fig~\ref{fig3}[A] bottom row) is a scalar calculated from TEAS via the formula $\emph{OS} = \overline{\alpha}_{50}(1-(\overline{\alpha}_{75}-\overline{\alpha}_{25}))$.

For AI runs, a natural choice for determining the optimal parameter set is the maximum-a-posteriori probability estimate (MAPE) \cite{lueckmann_benchmarking_2021, glockler_variational_2022, boelts_simulation-based_2023}. The MAPE is a point estimator of the mode of the posterior distribution; i.e., it estimates the point in the high-dimensional parameter space where the true posterior probability distribution attains its maximum value. As the trained artificial neural network (ANN) serves as a surrogate of the posterior, the employed SBI toolbox \cite{tejero-cantero_sbi_2020} provides an effective algorithm for calculating the MAPE. In this situation, drawing posterior samples from the trained ANN is computationally cheap; see Methods section \nameref{subsection:inference_pipeline_AI-MAPE} for complete details.

For SA runs, the same way of selecting the optimal parameter set via the MAPE is applicable, in principle. However, calculating the MAPE of a high-dimensional distribution is generally a difficult combinatorial optimization problem, often requiring drawing simulated samples around well-posed initial guesses, which significantly increases the computational cost.

For that reason, we opted for an alternative way to compute the optimal parameter point estimates, relying on the concept of the geometric median \cite{minsker_geometric_2015}. The geometric median is a generalization of the usual univariate median, scaling from one- to high-dimensional distributions, as well as a popular and robust estimator of location or central tendency for multivariate data \cite{minsker_robust_2016, minsker_geometric_2023}. Accordingly, we refer to this technique as the sample geometric median (SGM) approach, summarized as follows: We start by viewing sets or samples of parameter values drawn from the approximate posterior distribution as elements, or points, of an abstract vector space. We then filter the whole collection of points, retaining only the vectors associated with \emph{meta-score} values larger than or equal to predefined thresholds, where the \emph{meta score} quantifies proximity to the idealized \emph{target} system behavior. Using different thresholds, we construct multiple vector subcollections. For each subcollection of parameter value vectors, we extract a sample or point that minimizes the sum of distances from itself to the other vectors in the subcollection. In other words, we find the geometric median of a given vector subcollection corresponding to a particular meta-score threshold value. See Fig~\ref{fig3}[A] and Methods section \nameref{subsection:inference_pipeline_SA-SGM} for complete details.

For each AI run, TEAS and OS are calculated once from a dataset with 100 additional simulations performed using the respective MAPE parameter value set from the inferred posterior distribution. For each SA run, every subcollection of parameter value vectors has an associated TEAS-OS duple calculated from a dataset with 100 additional simulations performed using the respective SGM parameter value set from a filtered inferred posterior distribution. The optimal parameter set is the estimated SGM with the highest OS; see Fig~\ref{fig3}[A] bottom row.

To validate the SGM approach, we compared the corresponding point estimation to the MAPE for the AI method, where both approaches can be applied. As summarized in S~Fig~\ref{S1_Fig}, contrasting both approaches reveals strong prediction agreement, showing that they are indeed analogous for the AI method.

To conclude, simply taking the naive approach of picking the best possible meta-score by estimating the SGM only from the subcollection of parameter value vectors with a meta-score equal to 1 does not guarantee the best recapitulation of the target behavior for the underlying system; see again Fig~\ref{fig3}[A] bottom row.

% Fig~\ref{fig3}.
\begin{adjustwidth}{-1.75in}{0in} % Careful!
\includegraphics[width = 5.325in, height = 5.325in]{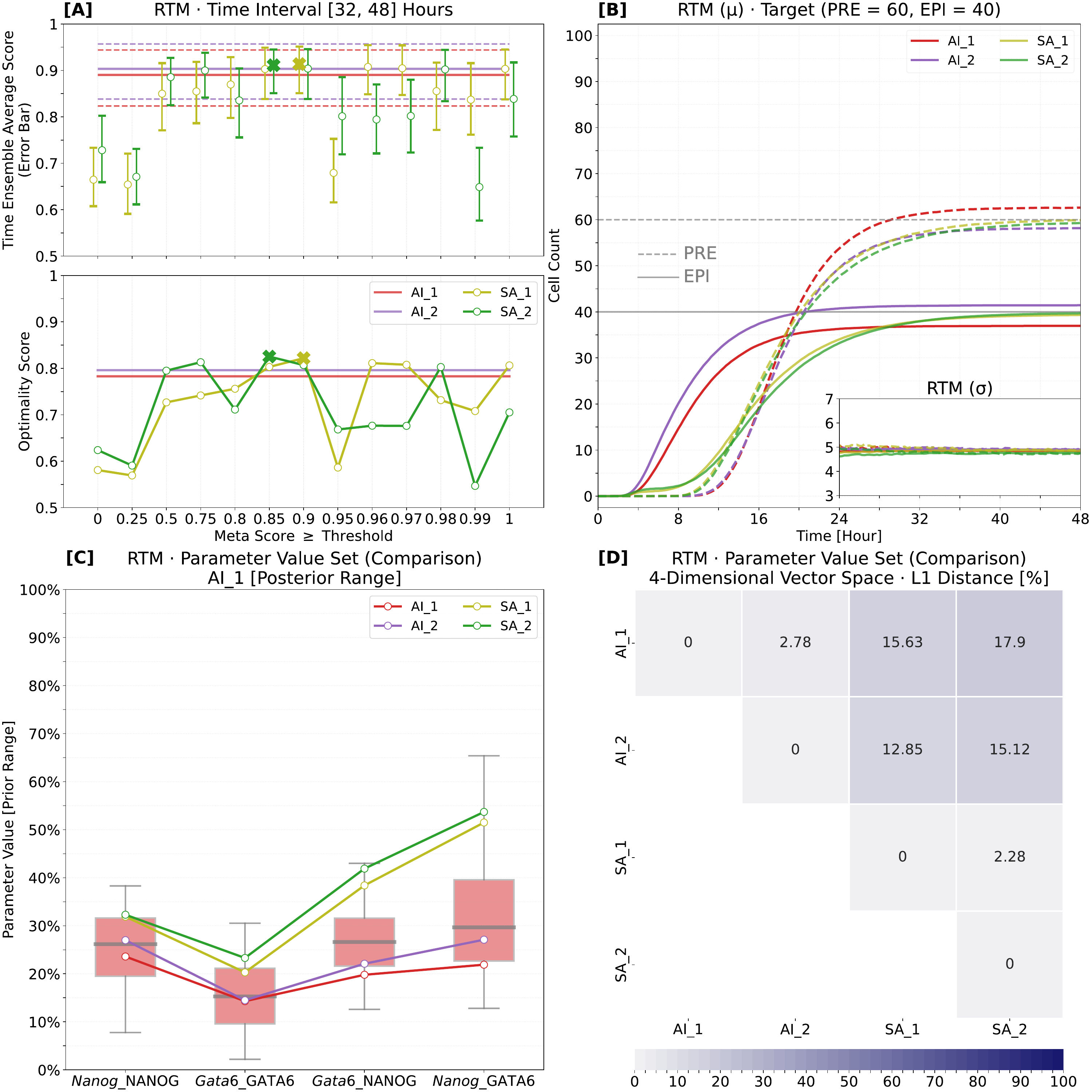} \centering % Careful! % {2.75, 4, 5, 7.5, 8.75}
\end{adjustwidth} % Careful!
\begin{figure}[hpt!]
\begin{adjustwidth}{-1.75in}{0in} % Careful!
\caption{{\bf Inferred optimal parameter sets for the RTM system.} {\bf [A]} Selection process for the best or optimal parameter sets of the two SA runs (SA\_1 and SA\_2). Statistics of the selection process for the optimal parameter sets of the two AI runs (AI\_1 and AI\_2) are shown for reference. The Time Ensemble Average Score ``TEAS'' (top row) is calculated by restricting the simulation score data to the 32-48~h interval, determining the ensemble 25-50-75 percentiles, and computing separately the time average of these three statistics: $\overline{\alpha}_{25}$, $\overline{\alpha}_{50}$, and $\overline{\alpha}_{75}$. Whiskers (SA) or dashed lines (AI) represent error bars; i.e., average interquartile ranges. Circles (SA) or solid lines (AI) represent average medians. The Optimality Score ``OS'' (bottom row) is calculated from TEAS via the formula $\emph{OS} = \overline{\alpha}_{50}(1-(\overline{\alpha}_{75}-\overline{\alpha}_{25}))$. Crosses highlight the best parameter sets for the SA runs. Notice that naively picking the best possible meta-score (filtering threshold equal to 1) does not directly translate to finding the best actual performance. {\bf [B]} Behavior at tissue scale for all four parameter sets (100-cell grid): the target of 60 PRE cells and 40 EPI cells should be reached and sustained within a time window between 32 and 48 hours. Note approximately equal standard deviations ``$\sigma$'' (inset plot), and similar intra- but distinct inter-method means ``$\mu$'' (main plot). {\bf [C]} Comparison of all four inferred optimal parameter sets: box-and-whisker diagrams for each AI\_1-related one-dimensional marginal posterior distribution are shown as baselines. Boxes cover interquartile ranges (from 25 to 75 percentiles) and show medians (50 percentiles). Whiskers cover ranges from 2.5 to 97.5 percentiles. Parameter values fall under normalized prior ranges. Notice intra-method similarity and inter-method differences. {\bf [D]} Distance matrix contrasting each pair of parameter sets: the parameter value sets are assumed to be elements of an abstract four-dimensional vector space. Notice once again short intra- but large inter-method distances. The L1 metric (normalized between 0\% and 100\%) was used to quantify distances between parameter vectors.}
\label{fig3}
\end{adjustwidth} % Careful!
\end{figure}
% \clearpage

\subsubsection*{Behavior at tissue scale for the RTM system: all four parameter sets have high success rates}

To examine the tissue-scale behavior of the RTM system, we performed one thousand simulations per each optimal parameter set employing a 100-cell spatial grid; see Fig~\ref{fig3}[B]. For all four parameter sets, the target of 60 PRE cells and 40 EPI cells is reached and sustained within the required time window (32-48~h).

Note the approximately equal standard deviations ``$\sigma$'' (Fig~\ref{fig3}[B] inset plot), and the similar intra- but distinct inter-method means ``$\mu$'' (Fig~\ref{fig3}[B] main plot). Within this context, the standard deviation is related to the precision of the cell-fate proportioning process, and the mean is related to its accuracy. Although all four parameter sets have closely tied accuracy and precision values, the SA runs have a slight advantage in terms of accuracy.

\subsubsection*{Quantifying distances between optimal parameter sets: L1 metric}

To compare the concrete parameter values of the optimal sets, we analyze them per single dimension (Fig~\ref{fig3}[C]): note the significant intra-method similarity and considerable inter-method difference. We also inspect all their dimensions simultaneously (Fig~\ref{fig3}[D]): the L1 metric, normalized between 0\% and 100\% with respect to prior ranges, is used for quantifying distances between parameter vectors. Again, observe the short intra- but large inter-method distances.

Overall, despite conducting two independent runs per each method, we observe strong intra- but moderate inter-method agreement of the estimated parameter value sets. This observation highlights the innate biases from the two distinct inference workflows, using AI- and SA-based methods, as well as the two distinct optimal-parameter-set selection processes, using MAPE- and SGM-inspired approaches.

However, we remark that differences among the best parameter sets arise predominantly due to the number of available posterior samples; unlike AI-MAPE, SA-SGM distributions are constructed with limited exploration data (see \nameref{section:methods} section). Regardless of these differences, all the inferred posterior distributions display large overlapping regions in the RTM parameter space, as we describe in the following sections.

\subsection*{Analysis of inferred model parameter posterior distributions for the RTM system}

One of the goals of mechanistic modeling is to obtain a deep understanding of the studied system. As such, uncovering compensation mechanisms is a significant task, allowing for the assessment of the underlying system's flexibility with respect to parameter value perturbations. However, looking for compensation mechanisms is challenging in both experimental and theoretical settings.

By estimating full model parameter posterior distributions, one can, in principle, access a large population of models, assuming that a valid model can be obtained by simply sampling from such a posterior. The resultant population of models can be studied to determine how parameter interactions in the underlying system contribute to its target behavior \cite{marder_multiple_2011, jiang_identification_2022}.

\subsubsection*{Strong linear correlations among parameter pairs}

Since the AI method provides us with surrogate estimates of the posterior distributions akin to closed-form functions, it is possible to create hypotheses about parameter compensatory mechanisms by estimating linear correlation coefficients. These coefficients are extracted by conditioning the estimated posteriors, given a target observation, on the optimal parameter set, and varying parameters independently. Note that, when there is no access to an analytical posterior or some closed form of it, this exercise becomes an expensive computational task. Nevertheless, unlike SA, the AI method grants us access to unlimited samples from the posterior. Using the SBI toolbox \cite{tejero-cantero_sbi_2020}, we calculated a conditional (linear) correlation matrix for parameter singletons and duples for the AI\_1 run.

We observe that, while the unconditional posterior (given an ideal or target observation; see Fig~\ref{fig4}[A]) is rather broad with respect to the prior range, the conditional posterior (estimated posterior conditional on the best or optimal parameter set) is restricted to a small fraction of the parameter space (see Fig~\ref{fig4}[B]), indicating a high interdependence among parameter value choices.

Furthermore, the RTM system is simple (low dimensional) enough that the trained ANNs can predict similar matrices for all method-run pairs; see Fig~\ref{fig4}[C]. We observe that all these matrices produce correlation coefficients that strongly agree both in an intra- and inter-method fashion; see again Fig~\ref{fig4}. See also S~Fig~\ref{S2_Fig} for observing the weak linear correlation predicted by the unconditional AI\_1 posterior; similar plots are not shown for simplicity.

Using this information, we can hypothesize about potential compensation mechanisms in the RTM model. Intuitively, strong relationships among these four core GRN motif parameters are expected. However, the AI method prediction goes one step further: it not only captures these synergies, but the trained ANN (AI\_1) also provides reasonable estimates for the parameter space regions where the ideal system behavior is achieved despite parameter disturbances.

For example, we observe that reducing or increasing RTM self-activation parameters \\ (\textit{Nanog}\_NANOG and \textit{Gata6}\_GATA6) must occur in tandem, while weakening RTM self-activation interactions requires strengthening mutual-repression counterparts \\ (\textit{Gata6}\_NANOG and \textit{Nanog}\_GATA6) in a complementary manner.

This aspect highlights a problem closely related to the concepts of identification and degeneracy. Degeneracy refers to multiple parameter sets producing similar target system behavior, making it difficult to identify the best parameter set from the available information. Identification issues are especially relevant when datasets are limited or noisy \cite{jiang_identification_2022, bruckner_information_2024}.

This observation might contradict the notion of the best or optimal parameter set, but developmental-biological systems show high reproducibility and robustness to perturbations. Thus, the idea of a unique set of parameter values capable of driving the underlying system towards its goal states under drastically different conditions is not entirely in agreement with the empirical and theoretical understanding of developmental biology. Therefore, the concepts of degeneracy and identification play an important role in properly framing any computational understanding of living systems \cite{goncalves_training_2020, jiang_identification_2022, massonis_distilling_2023, boelts_simulation-based_2023, dingeldein_amortized_2024}.

\subsubsection*{High sensitivity for parameter singletons}

An important additional feature of the AI method is that it allows for straightforward estimation of model parameter sensitivities by employing the full inferred posterior distribution. More specifically, we use sensitivity analysis to determine how changes in parameter interactions influence target system behaviors \cite{marder_multiple_2011}. By analyzing the estimated AI\_1 posterior conditioned on each optimal parameter set independently, it is possible to obtain sensitivity coefficients for all runs (see Fig~\ref{fig4}[D] and \nameref{subsubsection:sensitivity_measure} for an explanation of our sensitivity measure).

In short, sensitivity signifies inverse tolerance to changes in parameter singleton or pair values while keeping all other parameters fixed at their MAPE or SGM values. In Fig~\ref{fig4}[D], the observed intra- and inter-method differences indicate that while parameter sensitivity is moderate for most parameter relations, parameter tuning is clearly dependent on the values of other parameters. In this sense, the AI\_1 posterior analysis predicts strong sensitivities (50-100\%) for both \textit{Nanog}\_NANOG and \textit{Gata6}\_GATA6 in the SA runs, as well as strong sensitivities for both \textit{Gata6}\_NANOG and \textit{Nanog}\_GATA6 in the AI runs.

To conclude, the choice of parameter values does not occur independently of others, and the strong correlations reflect this interdependence; see again Fig~\ref{fig4}[C]. This is not unexpected, as intuitively the free parameters of the GRN motif control important aspects of system dynamics, with confounding effects on each other. In this respect, the conditional sensitivity matrices provide a quantification of this intuition and allow us to predict which parameters require precise tuning to achieve and sustain the imposed target system behavior.

% Fig~\ref{fig4}.
\begin{adjustwidth}{-1.75in}{0in} % Careful!
\includegraphics[width = 5.5in, height = 5.5in]{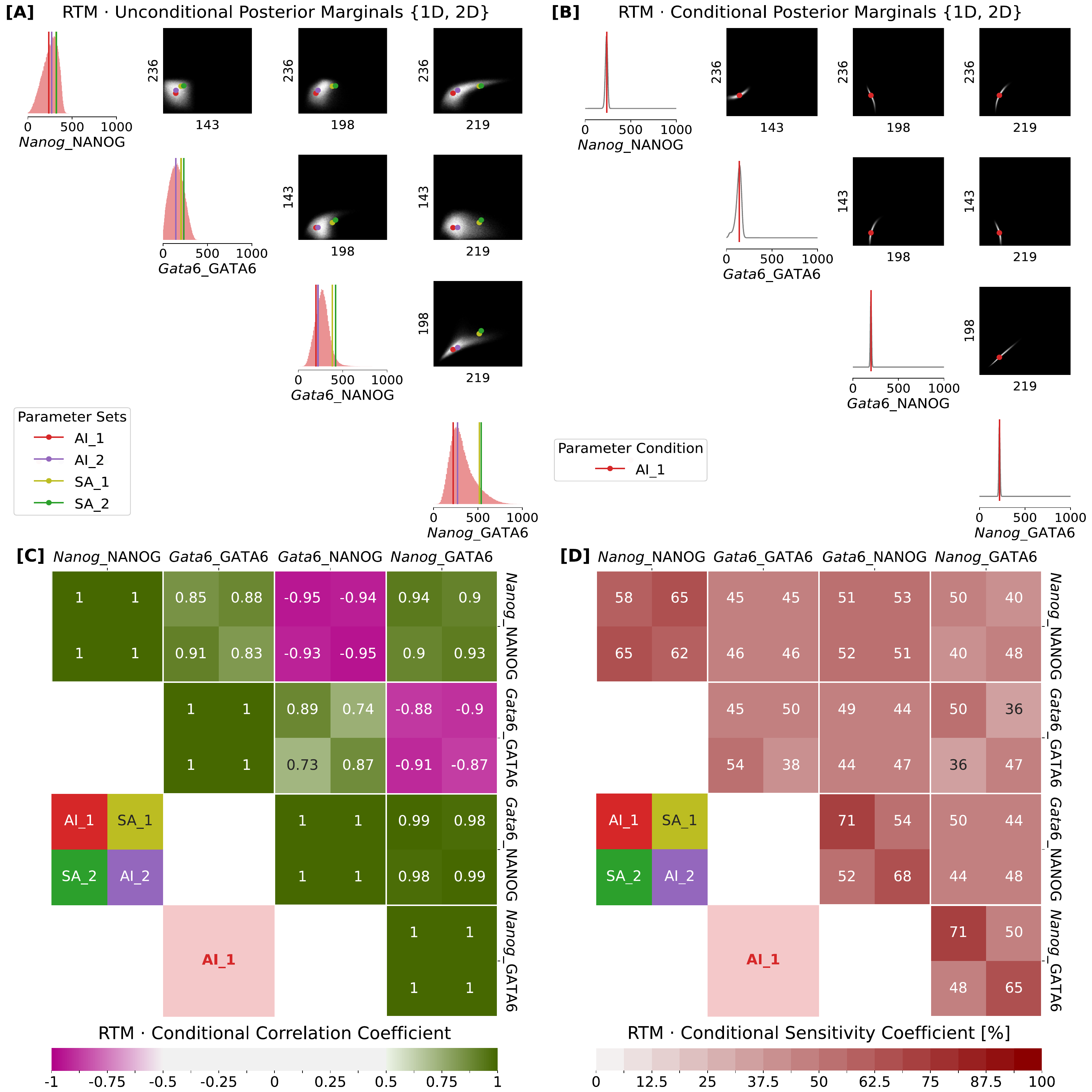} \centering % Careful! % {2.75, 4, 5, 7.5, 8.75}
\end{adjustwidth} % Careful!
\begin{figure}[hpt!]
\begin{adjustwidth}{-1.75in}{0in} % Careful!
\caption{{\bf RTM system. Analysis of inferred model parameter posterior distributions.} {\bf [A]} One- and two-dimensional marginal posterior distributions estimated from AI\_1 run for global reference. Locations of optimal parameter values with respect to prior ranges are shown for each method-run pair. {\bf [B]} One- and two-dimensional marginal posteriors conditioned on best parameter values (only referencing AI\_1 case); other cases are not shown for simplicity. {\bf [C]} Linear correlation coefficients extracted by conditioning AI\_1 posterior distribution (global reference) on every optimal parameter set separately. Note strong linear correlations among all parameter pairs (indicating potential compensatory mechanisms between each other), as well as strong intra- and inter-method agreement of estimated coefficients. Notice also that this calculation is possible because the trained ANN acts as a surrogate of the simulator and directly approximates the AI\_1 posterior; i.e., no additional simulations are required to extract Pearson's correlation coefficients. Despite significant differences among parameter value sets, the AI\_1-related ANN is capable of providing useful correlation information for all methods and runs. {\bf [D]} Sensitivity to value changes for all parameters; AI\_1 posterior conditional on every optimal parameter value set separately. Sensitivity equal to 0\% signifies that any value within its prior range can recover the ideal or target system behavior (while holding all the other parameters fixed). Sensitivity equal to 100\% signifies that the value must fall within a singular parameter bin (histograms were created with 250 bins per dimension). Note moderate sensitivities (25-50\%), and strong intra- but moderate inter-method agreement of estimated coefficients, for the majority of parameter relations. The AI\_1-related ANN predicts strong sensitivities (50-100\%) of both \textit{Nanog}\_NANOG and \textit{Gata6}\_GATA6 for the SA runs, as well as strong sensitivities of both \textit{Gata6}\_NANOG and \textit{Nanog}\_GATA6 for the AI runs, highlighting weak inter-method estimated-coefficient agreement among parameter singletons. No weak sensitivities (0-25\%) are recorded. See Results section for complementary details.}
\label{fig4}
\end{adjustwidth} % Careful!
\end{figure}
% \clearpage

This analysis is in stark contrast to simply calculating parameter sensitivity on the unconditional distribution (see S~Fig~\ref{S2_Fig}), because in that case, insights into parameter sensitivities are limited to estimating the marginalized posterior coverage with respect to the prior ranges.

\subsection*{Analysis of estimated marginalized posterior distributions for the RTM system: differences between optimal parameter sets strongly correlate with differences between marginalized posteriors}

Understanding high-dimensional distributions is challenging. Many approaches for reducing dimensionality and facilitating analyses have been proposed for various contexts, such as PCA, UMAP, and t-SNE \cite{verdier_simulation-based_2023}. Although these approaches are relevant and useful in distinct cases, they also incorporate other nuances and complexities that can ultimately obscure the essence of parameter interactions.

Here, we opt for a simple approach to investigate similarities among intra- and inter-method predicted distributions by utilizing basic tools. Treating each estimated posterior as a probability vector in an abstract metric space, we use the Jensen-Shannon (JS) metric (with base 2) \cite{lin_divergence_1991} for estimating distances and quantify dissimilarities among all the predicted distributions.

In Fig~\ref{fig5}[A], visual comparisons between one- and two-dimensional posterior marginals are shown. Diagonal and lower plots also show distances with respect to the reference case ``AI\_1'' between one- and two-dimensional posterior marginals. Note that a distance of 0\% refers to minimally divergent histograms, and a distance of 100\% refers to maximally divergent histograms. Also note the large overlapping regions in RTM parameter space for the shown posterior projections.

In Fig~\ref{fig5}[B, C], mean or average distances between each method-run pair are shown. Overall, we observe the same synergy as for the comparison of parameter value sets (see also Fig~\ref{fig3}): significant intra-method similarity and considerable inter-method difference. Thus, significant differences between the posterior marginals estimated in distinct runs directly translate into significant differences between the optimal parameter sets determined from them.

\subsection*{ITWT: estimating all full core GRN and cell-cell signaling interaction parameters for a biologically meaningful system}

One important distinction between the RTM and ITWT systems, besides the full scope of the GRN motif and the functional cell-cell signaling, is the presence of multiple system configurations and target behavior dynamics in the ITWT system (see Fig~\ref{fig6}[A, B]). To capture this complexity, we defined a joint configuration score for each simulation and sampled parameter set, which is combined from the marginal configuration scores using an L1- or L2-inspired measure.

To quickly summarize the meaning of the L1- and L2-inspired measures: L1 judges the quality of the parameter set based on the worst value among the possible model configuration scores, while L2 does so based on a value that is at least as good as the worst score, but at most equal to the (arithmetic) mean value of all the model configuration scores. For these reasons, we compare four optimal parameter sets for the ITWT, designated as AI\_L1, AI\_L2, SA\_L1, and SA\_L2, indicating which combination of method and joint configuration score was used.

However, selecting the optimal parameter set per method-run pair can only be meaningfully done when focusing on the same optimality measure for all compared cases. Therefore, for simplicity, we opted for the L1 measure for this task (see Fig~\ref{fig6}[C]).

Initially, just like for the RTM, we compared both methods by imposing an ``equal computational cost'' restriction: every method run performs the same number of inference rounds (eight) and simulations per round (100 thousand). However, this restriction resulted in poor performance for the SA runs and (unexpectedly) AI\_L2; recall that the exact number of rounds needed to achieve the desired meta-score is unknown \emph{a priori}.

% Fig~\ref{fig5}.
\begin{adjustwidth}{-1.75in}{0in} % Careful!
\includegraphics[width = 4.3in, height = 6.45in]{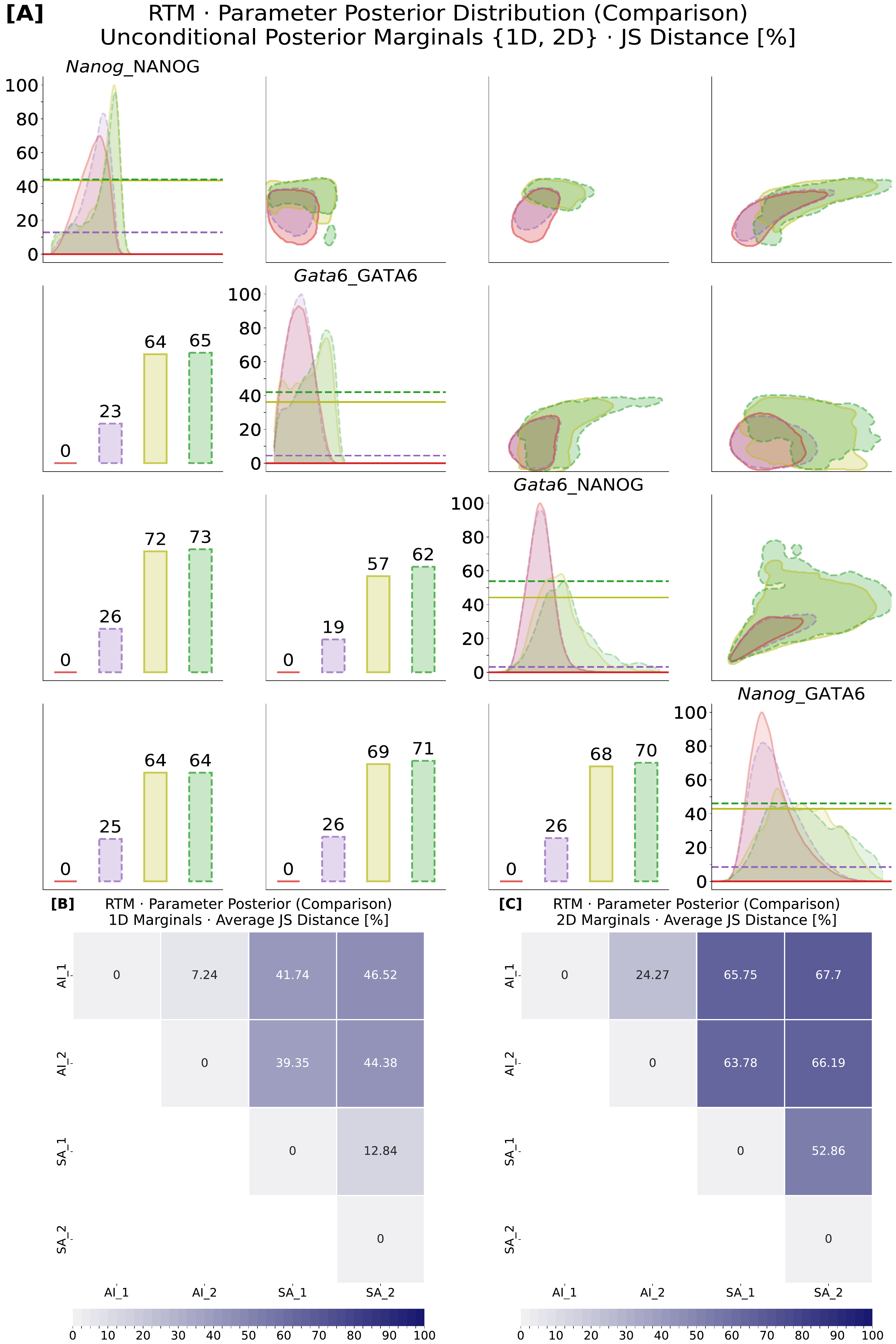} \centering % Careful! % {2.75, 4, 5, 7.5, 8.75}
\end{adjustwidth} % Careful!
\begin{figure}[hpt!]
\begin{adjustwidth}{-1.75in}{0in} % Careful!
\caption{{\bf RTM system. Analysis of estimated marginalized posterior distributions.} {\bf [A]} Comparisons between one- and two-dimensional (diagonal and upper elements) posterior marginals. Distances with respect to reference case ``AI\_1'' between one- and two-dimensional (diagonal and lower elements) posterior marginals are also shown. Distances are normalized between 0\% (minimally divergent histograms) and 100\% (maximally divergent histograms). Note that off-diagonal entries (lower and upper sectors) symmetrically correspond to one another. For upper-triangular entries, we filter histogram regions where probability masses are greater than or equal to 0.25 (arbitrary threshold), and create smooth projections via Gaussian kernel density estimates. {\bf [B, C]} Mean or average distances between each method-run pair. Notice the same synergy as for the comparison of parameter value sets (see Fig~\ref{fig3}): intra-method similarity and inter-method difference. Differences between the optimal parameter sets strongly correlate with differences between the marginalized posteriors. These differences come both from the two distinct posterior inference methods (AI versus SA), and the two distinct selection processes for the best parameter sets (MAPE versus SGM). Distances (shown as percentages) between raw probability vectors were quantified using the Jensen-Shannon metric (base 2). See also Methods and Results sections for additional details.}
\label{fig5}
\end{adjustwidth} % Careful!
\end{figure}
% \clearpage

To enhance the statistical power of our workflow analysis, two additional rounds were performed for these three cases (AI\_L2, SA\_L1, and SA\_L2). Since AI\_L1 is already used in a companion study \cite{ramirez-sierra_ai-powered_2024}, it was natural to preserve it as our ``global reference'', setting a benchmark for the other cases. A performance-wise comparison between the equal-round-number cases is shown in S~Fig~\ref{S3_Fig}.

Interestingly, we observe stronger agreement among AI\_L1, AI\_L2, and SA\_L2 than between SA\_L1 and SA\_L2 (see Fig~\ref{fig6}[D]). This disagreement between SA runs highlights the effect of the curse of dimensionality, where limited sample sizes of SA runs make the distances between distributions meaningless as the available data points become sparse with increasing dimensionality.

\subsubsection*{Comparison of all four optimal parameter sets for the ITWT system}

In the top row of Fig~\ref{fig7}, we show the core GRN interaction parameters (cell-scale dynamics) obtained from the two independent AI inference and SA optimization runs. The bottom row also shows the corresponding spatial coupling (tissue-scale dynamics) and initial condition (mRNA-PROTEIN counts) parameters. We observe strong agreement among the parameter sets AI\_L1, AI\_L2, and SA\_L2 for predictions of primary core GRN interaction values (Fig~\ref{fig7} top-left panel), but weak agreement for predictions of secondary core GRN interaction values (Fig~\ref{fig7} top-right panel).

We also find moderate agreement among the predicted values for lifetimes, signaling, and initial condition parameters. These differences indicate the emergence of potential compensation mechanisms among parameters, highlighting distinct exploitable strategies to achieve the target system behavior. For example, \textit{Fgf4}\_NANOG (half-saturation threshold for transcriptional activation of the \textit{Fgf4} promoter by NANOG) and \textit{Fgf4}\_GATA6 (half-saturation threshold for transcriptional repression of the \textit{Fgf4} promoter by GATA6) are intuitively diametrically associated and should be adjusted in a correlated manner to tightly control FGF4 production. \textit{Fgf4}\_NANOG is significantly higher (weaker repression) than \textit{Fgf4}\_GATA6 for AI\_L1, AI\_L2, and SA\_L2, but this relationship reverses for SA\_L1. See also S~Fig~\ref{S4_Fig} for comparing all four ITWT optimal parameter sets at equal number of simulation-inference rounds.

\subsection*{Conditional parameter correlation matrix for the ITWT system uncovers parameter synergies influencing target dynamics}

Previous studies have demonstrated that the inferred full posterior distribution estimated via DL-SBI methods can potentially lead to novel scientific insights by revealing strong parameter interdependencies and predicting compensation mechanisms from the estimated posterior marginals \cite{goncalves_training_2020, boelts_simulation-based_2023, tolley_methods_2024}.

In this study, one such important and basic parameter interdependency is the relationship between ``mRNA'' and ``PROTEIN''; see Fig~\ref{fig8} (rightmost columns). Because the trained ANN acts as a surrogate for the simulator and directly approximates the AI\_L1 posterior, it is possible to extract (Pearson's) linear correlation coefficients by conditioning the AI\_L1 posterior distribution on its own MAPE without requiring additional, expensive simulations. Employing this feature, we calculated conditional correlations for all pairings of inferred parameters.

The two parameters ``mRNA'' and ``PROTEIN'' display a strong negative linear correlation, indicating that the AI\_L1-related ANN learned their complementary interaction: increasing initial \textit{Nanog}-\textit{Gata6} mRNA molecules should be compensated by proportionally decreasing initial NANOG-GATA6 protein copies, balancing the initial cellular resources across the tissue. These two parameters also strongly influence the initial \textit{Nanog}-\textit{Gata6}-\textit{Fgf4} (three main genes) expression dynamics, as reflected by their reciprocal relationships with \textit{Nanog}\_NANOG, \textit{Gata6}\_GATA6, \textit{Gata6}\_NANOG, \textit{Nanog}\_GATA6, \textit{Fgf4}\_NANOG, and \textit{Fgf4}\_GATA6; see again Fig~\ref{fig8}.

% Fig~\ref{fig6}.
\begin{adjustwidth}{-1.75in}{0in} % Careful!
\includegraphics[width = 5.5in, height = 5.5in]{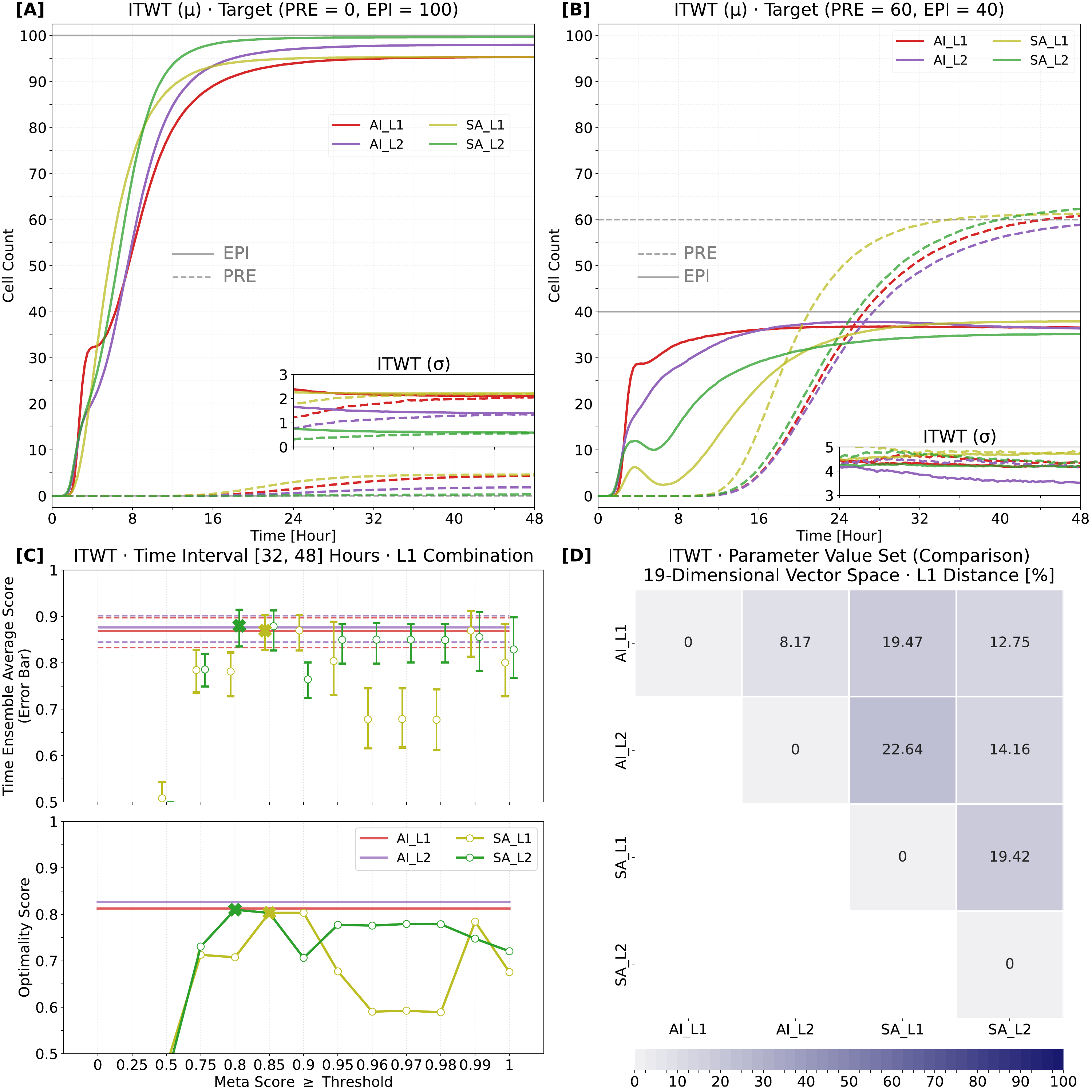} \centering % Careful! % {2.75, 4, 5, 7.5, 8.75}
\end{adjustwidth} % Careful!
\begin{figure}[hpt!]
\begin{adjustwidth}{-1.75in}{0in} % Careful!
\caption{{\bf Inferred optimal parameter sets for the ITWT system.} {\bf [A, B]} Behavior at tissue scale for all four parameter sets (100-cell grid). Correct cell-fate proportions should be reached and sustained within a time window between 32 and 48 hours for both system configurations separately. Temporal evolution of system dynamics reflects high target-fulfillment performance for all method-run pairs, strong similarities between AI runs, weak similarities between SA runs, and moderate agreement overall. {\bf [A]} The target of 0 PRE cells and 100 EPI cells applies to the configuration without cell-cell communication (nonfunctional signaling). {\bf [B]} The target of 60 PRE cells and 40 EPI cells applies to the configuration with cell-cell communication (functional signaling). {\bf [C]} Selection process for the best or optimal parameter sets of the two SA runs (SA\_L1 and SA\_L2). Statistics of the selection process for the optimal parameter sets of the two AI runs (AI\_L1 and AI\_L2) are shown for reference. Two techniques are used to produce the joint configuration-score data from the two marginal configuration-score time series (L1-norm- and L2-norm-inspired combinations), but calculating TEAS and OS for all method-run pairs requires a common technique: L1 combination is used for simplicity. See Fig~\ref{fig3} caption for definitions of Time Ensemble Average Score ``TEAS'' and Optimality Score ``OS''. Crosses highlight the best parameter sets for the SA runs. Notice that simply taking the naive approach of picking the best possible meta-score (filtering threshold equal to 1) does not directly translate to finding the best actual performance. See also Methods and Results sections. {\bf [D]} Distance matrix contrasting each pair of parameter sets: the parameter value sets are assumed to be elements of an abstract nineteen-dimensional vector space. Notice stronger agreement among AI\_L1, AI\_L2, and SA\_L2 than between SA\_L1 and SA\_L2. The L1 metric (normalized between 0\% and 100\%) was used to quantify distances between parameter vectors.}
\label{fig6}
\end{adjustwidth} % Careful!
\end{figure}
% \clearpage

% Fig~\ref{fig7}.
\begin{adjustwidth}{-1.75in}{0in} % Careful!
\includegraphics[width = 6.25in, height = 6.25in]{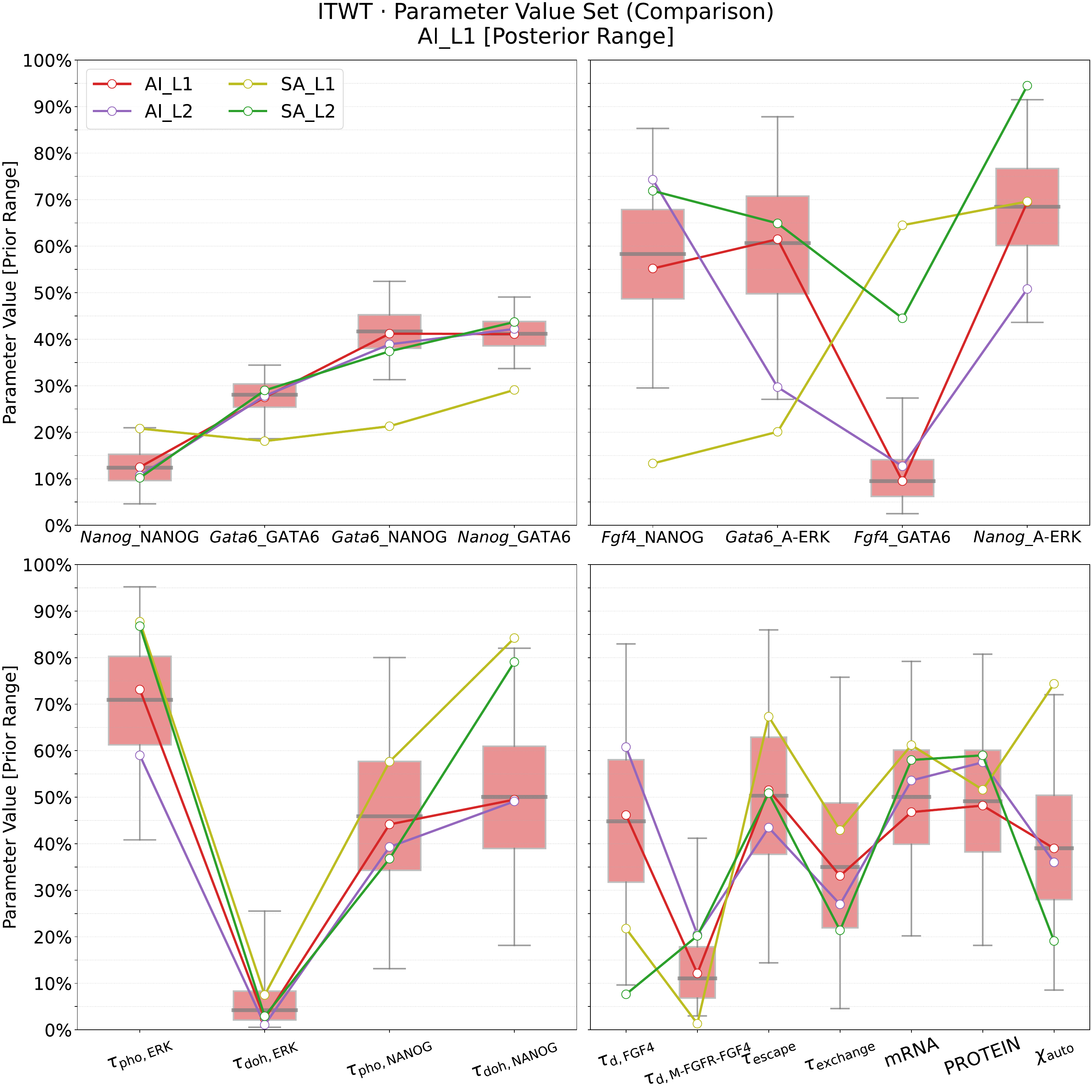} \centering % Careful! % {2.75, 4, 5, 7.5, 8.75}
\end{adjustwidth} % Careful!
\begin{figure}[hpt!]
\begin{adjustwidth}{-1.75in}{0in} % Careful!
\caption{{\bf ITWT system. Comparison of all four inferred/optimal parameter sets.} Box-and-whisker diagrams for each AI\_L1-related one-dimensional marginal posterior distribution are shown as baselines. See Fig~\ref{fig3}[C] caption for definitions of boxes and whiskers. Parameter values fall under normalized prior ranges. Top row: inferred core GRN interaction parameters (cell-scale dynamics). Bottom row: inferred spatial coupling (tissue-scale dynamics) and initial condition (mRNA-PROTEIN counts) parameters. Notice strong agreement among parameter sets AI\_L1, AI\_L2, and SA\_L2 for predictions of primary core GRN interaction values (top-left panel), but weak agreement for predictions of secondary core GRN interaction values (top-right panel). Note also moderate agreement among all parameter sets for predictions of other values. These differences indicate emergence of potential compensation mechanisms among parameters, highlighting distinct exploitable strategies to achieve the ideal or target system behavior. For example, \textit{Fgf4}\_NANOG (half-saturation threshold for transcriptional activation of \textit{Fgf4} promoter by NANOG) and \textit{Fgf4}\_GATA6 (half-saturation threshold for transcriptional repression of \textit{Fgf4} promoter by GATA6), intuitively, are diametrically associated and should be correlatively adjusted to tightly control FGF4 production. \textit{Fgf4}\_NANOG is significantly higher (weaker) than \textit{Fgf4}\_GATA6 for AI\_L1, AI\_L2, and SA\_L2, but this relation reverses for SA\_L1. See also Results section for additional details.}
\label{fig7}
\end{adjustwidth} % Careful!
\end{figure}
% \clearpage

Several experimental and computational studies have concluded that replicable target behaviors can be achieved despite structural changes in variables or parameters associated with widely-studied biological systems \cite{boelts_simulation-based_2023, tolley_methods_2024}. This property can emerge for two reasons: (1) variability of some parameter subset minimally affects system behavior (low sensitivity); (2) variation of some parameters does noticeably influence system behavior, but compensatory mechanisms exist among them such that the target behavior is restored when they are changed in a concerted manner \cite{goncalves_training_2020}. This property can be further investigated by exploiting the information contained in the posterior distributions estimated via the AI method.

For instance, note the short and restricted ERK dephosphorylation time $\tau_{\emph{doh},\emph{ERK}}$ (all four predictions fall relatively close to each other), while ERK phosphorylation time $\tau_{\emph{pho},\emph{ERK}}$ is long and broad (all four predictions fall relatively far from each other). These two parameters not only dictate the speed of dynamics for signaling downstream of FGFR (FGF4 receptor) activation or inactivation, but also control the lifetime of NANOG proteins, thus affecting the system behavior in multiple aspects.

By studying the AI\_L1 posterior distribution, we observe not only the expected interdependency between ``mRNA'' and ``PROTEIN'' parameters dictating initial conditions, but also nuanced relationships between parameter pairs, highlighting the potential adaptability of the underlying biological system. The conditional correlation matrix shown in Fig~\ref{fig8} can be further explored and exploited to generate hypotheses potentially consistent (both qualitatively and quantitatively) with future experimental work or data \cite{goncalves_training_2020, tolley_methods_2024}; see the Discussion section for additional insights.

We highlight that this analysis is performed only for the AI\_L1 run but can easily be replicated for the AI\_L2 posterior. However, unlike for the RTM, the four inferred distributions for the ITWT are significantly distinct from each other (as will be shown later), limiting AI\_L1 posterior prediction power to only its own MAPE (optimal parameter set) or its distribution samples. We also highlight that these analyses are computationally unfeasible when employing non-DL-SBI approaches, as they might require a large number of auxiliary simulations to properly scan such high-dimensional parameter space \cite{goncalves_training_2020, boelts_simulation-based_2023, tolley_methods_2024}.

To summarize, the posterior distributions estimated via the AI method provide invaluable information for understanding nontrivial parameter dependencies. These inferred posterior marginals can help discover or hypothesize potential compensatory mechanisms among system parameters or variables without the need for additional simulations, which are typically required for a system-wide sensitivity analysis of this type.

\subsection*{Conditional parameter sensitivity matrix for the ITWT system reveals a subset of singular parameters strongly influencing target dynamics}

In addition to pairwise parameter correlations, the inferred posterior allows us to study sensitivity to value changes for all parameters by analyzing the AI\_L1 posterior conditional on its own MAPE, as summarized in Fig~\ref{fig9}. Overall, we answer a simple question: if all parameter values are fixed at their MAPE, how large are their predicted posterior coverages? \cite{tolley_methods_2024}. We observe a particular subset of parameters with strong (50-100\%) sensitivity to value changes, indicating high importance in obtaining ideal target behavior, namely: \textit{Gata6}\_GATA6, \textit{Gata6}\_NANOG, \textit{Nanog}\_GATA6, \textit{Fgf4}\_GATA6, $\tau_{\textrm{doh},\textrm{ERK}}$, and $\tau_{\textrm{d},\textrm{M-FGFR-FGF4}}$.

However, many parameter singletons and pairs display weak (0-25\%) or moderate (25-50\%) conditional sensitivity coefficients (see again Fig~\ref{fig9}). This observation does not signify that particular parameter subsets should not be finely tuned; these calculated conditional correlation coefficients are only associated with the AI\_L1 MAPE and strictly rely on the chosen prior ranges. Rich posterior distributions may be hiding sensitivity interdependencies among parameter values, as further illustrated in S~Fig~\ref{S5_Fig} for the conditional sensitivity matrix of AI\_L2. Parameter values have to be chosen concertedly to preserve underlying system behavior, as recorded by other studies \cite{boelts_flexible_2022, tolley_methods_2024}.

% Fig~\ref{fig8}.
\begin{adjustwidth}{-1.75in}{0in} % Careful!
\includegraphics[width = 6.75in, height = 6.75in]{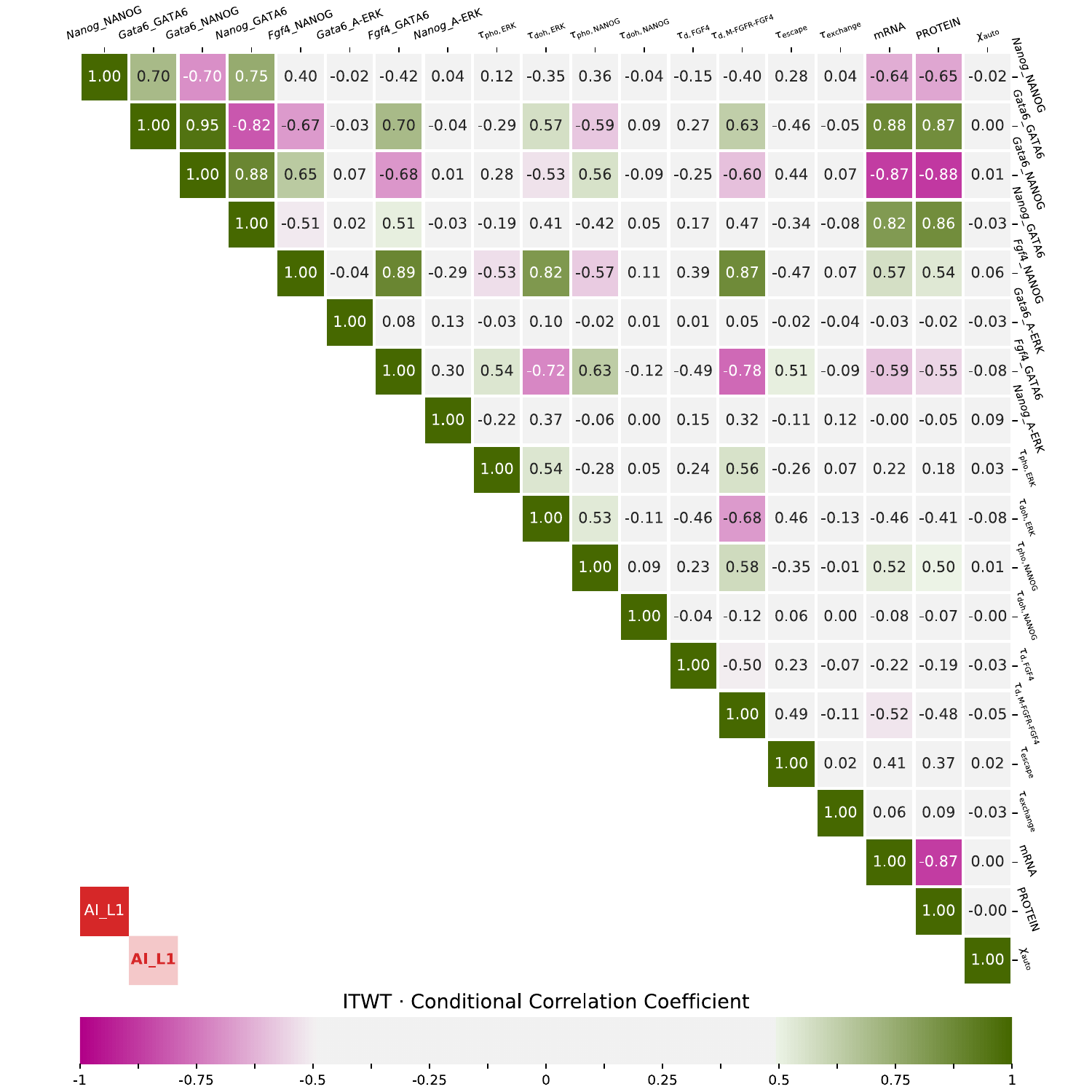} \centering % Careful! % {2.75, 4, 5, 7.5, 8.75}
\end{adjustwidth} % Careful!
\begin{figure}[hpt!]
\begin{adjustwidth}{-1.75in}{0in} % Careful!
\caption{{\bf ITWT system. Conditional model parameter correlation matrix.} Linear correlation coefficients extracted by conditioning AI\_L1 posterior distribution (global reference) on its own MAPE. Note that this calculation is possible because the trained ANN acts as a surrogate of the simulator and directly approximates the AI\_L1 posterior; i.e., no additional simulations are required to extract Pearson's correlation coefficients. Notice strong linear correlations between \{mRNA, PROTEIN\} and \{\textit{Nanog}\_NANOG, \textit{Gata6}\_GATA6, \textit{Gata6}\_NANOG, \textit{Nanog}\_GATA6\}. The two parameters ``mRNA'' and ``PROTEIN'' display a strong negative linear correlation between themselves, indicating that the AI\_L1-related ANN learnt their complementary interaction: increasing initial \textit{Nanog}-\textit{Gata6} mRNA molecules should be compensated by, proportionally, decreasing initial NANOG-GATA6 protein copies, balancing the initial cellular resources across the tissue. These two parameters strongly influence the initial \textit{Nanog}-\textit{Gata6}-\textit{Fgf4} (three main genes) expression dynamics, as reflected by their reciprocal relations with \textit{Nanog}\_NANOG, \textit{Gata6}\_GATA6, \textit{Gata6}\_NANOG, \textit{Nanog}\_GATA6, \textit{Fgf4}\_NANOG, and \textit{Fgf4}\_GATA6.}
\label{fig8}
\end{adjustwidth} % Careful!
\end{figure}
% \clearpage

% Fig~\ref{fig9}.
\begin{adjustwidth}{-1.75in}{0in} % Careful!
\includegraphics[width = 7.125in, height = 7.125in]{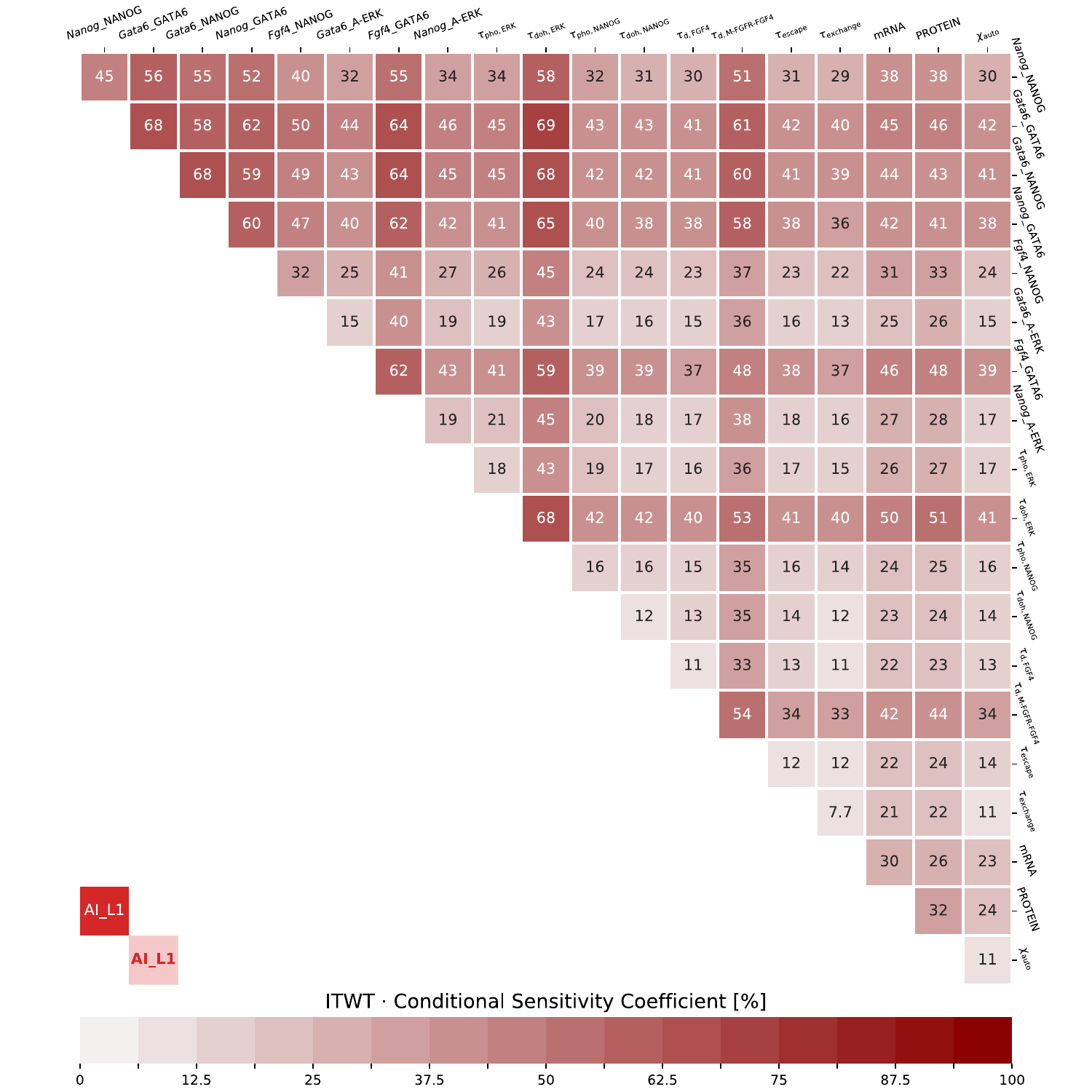} \centering % Careful! % {2.75, 4, 5, 7.5, 8.75}
\end{adjustwidth} % Careful!
\begin{figure}[hpt!]
\begin{adjustwidth}{-1.75in}{0in} % Careful!
\caption{{\bf ITWT system. Conditional model parameter sensitivity matrix.} Sensitivity to value changes for all parameters; AI\_L1 posterior conditional on its own MAPE. See Fig~\ref{fig4}[D] caption for definition of sensitivity. Note weak sensitivities (0-25\%) for (only diagonal entries) \textit{Gata6}\_A-ERK, \textit{Nanog}\_A-ERK, $\tau_{\textrm{pho},\textrm{ERK}}$, $\tau_{\textrm{pho},\textrm{NANOG}}$, $\tau_{\textrm{doh},\textrm{NANOG}}$, $\tau_{\textrm{d},\textrm{FGF4}}$, $\tau_{\textrm{escape}}$, $\tau_{\textrm{exchange}}$, and $\chi_{\textrm{auto}}$; moderate sensitivities (25-50\%) for \textit{Nanog}\_NANOG, \textit{Fgf4}\_NANOG, mRNA, and PROTEIN. Notice also strong sensitivities (50-100\%) for \textit{Gata6}\_GATA6, \textit{Gata6}\_NANOG, \textit{Nanog}\_GATA6, \textit{Fgf4}\_GATA6, $\tau_{\textrm{doh},\textrm{ERK}}$, and $\tau_{\textrm{d},\textrm{M-FGFR-FGF4}}$. Strong sensitivities reflect low tolerance to value fluctuations, given any other parameter is fixed at corresponding MAPE, for recapitulating the ideal target system behavior.}
\label{fig9}
\end{adjustwidth} % Careful!
\end{figure}
% \clearpage

Consequently, choosing parameter values independently can be considerably problematic: one-dimensional posterior marginals may show reasonable parameter ranges, but the parameter space is high-dimensional, and potential interdependencies always require setting parameter values in concert. This fact alone highlights the benefit of a full posterior distribution, which can be easily sampled (in the AI method), being significantly more useful and informative than simply discovering parameters individually and not learning any relationships among them (as with the SA method) \cite{goncalves_training_2020, tolley_methods_2024}.

To conclude, for discovering fine-tuning requirements for parameter singletons or pairs, the conditional sensitivity matrix (Fig~\ref{fig9}) shows the calculated coefficients when all parameters are held constant at their MAPE (AI\_L1 posterior), while quantifying their posterior coverage with respect to the posterior ranges (one- and two-dimensional cases only).

\subsection*{Differences between optimal ITWT parameter sets strongly correlate with differences between marginalized posteriors}

Visual comparisons between one- and two-dimensional posterior marginals are shown in Fig~\ref{fig10}[A] for the primary core GRN motif parameters; see S~Fig~\ref{S6_Fig} for an extended plot of the core GRN motif parameters and S~Fig~\ref{S7_Fig} for a complete plot of the inferred parameter relations. Diagonal and lower panels also show Jensen-Shannon distances with respect to the reference case ``AI\_L1'' for one- and two-dimensional posterior marginals. Note that a distance of 0\% refers to minimally divergent histograms, while a distance of 100\% refers to maximally divergent histograms. Also note that, unlike in the RTM, overlapping regions in the ITWT parameter space are significantly more restricted when observing posterior projections (compare Fig~\ref{fig10} to Fig~\ref{fig5}).

In Fig~\ref{fig10}[B, C], mean or average distances between each method-run pair are shown. We observe the same synergy as in the comparison of parameter value sets; i.e., stronger agreement among AI\_L1, AI\_L2, and SA\_L2 than between SA\_L1 and SA\_L2 (compare Fig~\ref{fig6} and Fig~\ref{fig7}). In other words, the differences between the optimal parameter sets strongly correlate with the differences between the marginalized posteriors.

For some marginal projections, we observe parameter space islands (small for SA\_L2 run in Fig~\ref{fig10}[A] and large for both SA runs in S~Fig~\ref{S6_Fig}). Other computational studies have made similar observations \cite{marder_multiple_2011, goncalves_training_2020, boelts_flexible_2022, boelts_simulation-based_2023, tolley_methods_2024}, where seemingly disconnected parameter space regions result in identical system behavior found in such separated islands. However, they also note that, given sufficient posterior sampling, all relevant regions of parameter space appear to be fully connected, despite observing connecting paths with lower probability. This conclusion is directly linked to how well the ANN represents the objective parameter space and the particular structure of the parameter space for the underlying system.

% Fig~\ref{fig10}.
\begin{adjustwidth}{-1.75in}{0in} % Careful!
\includegraphics[width = 4in, height = 6in]{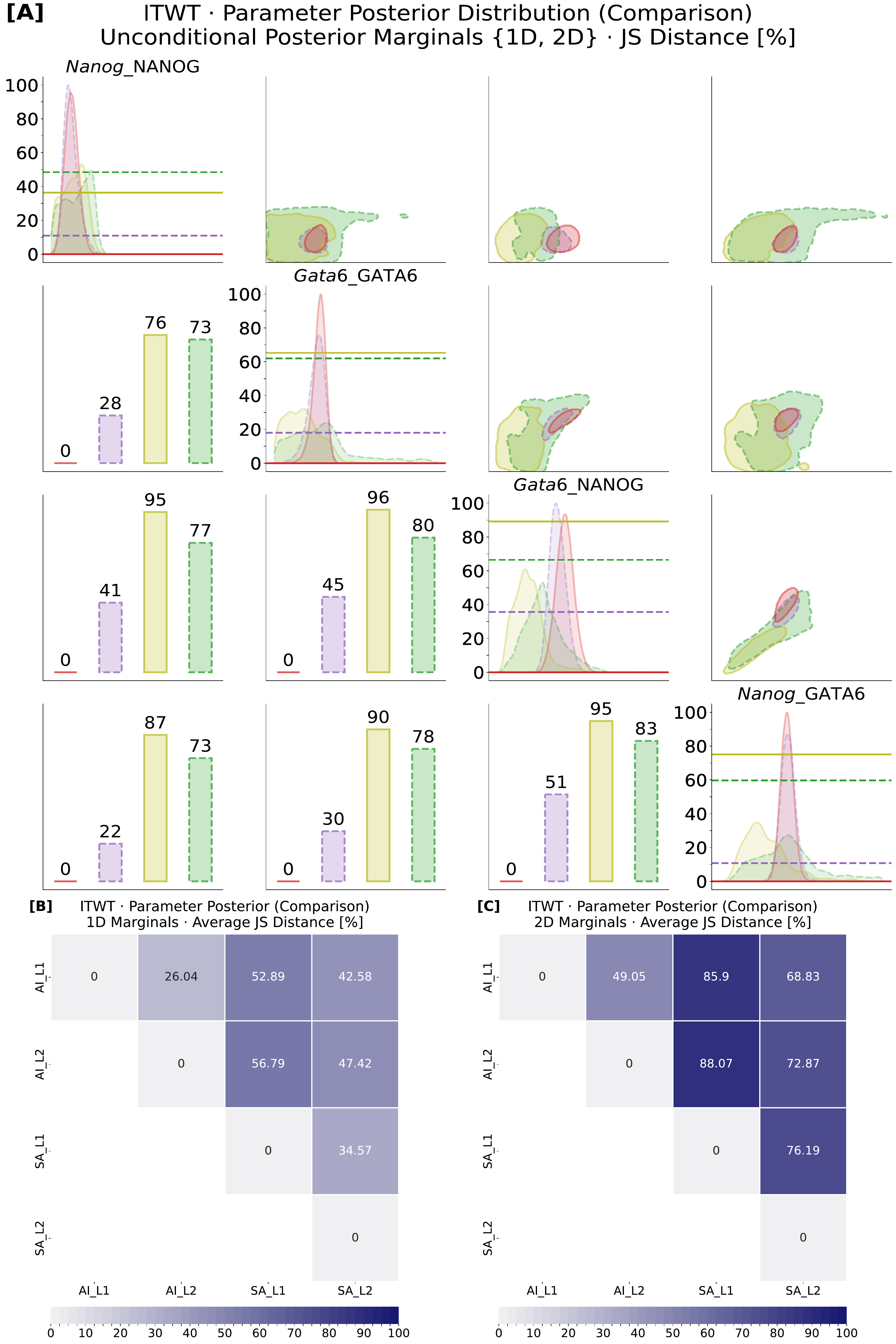} \centering % Careful! % {2.75, 4, 5, 7.5, 8.75}
\end{adjustwidth} % Careful!
\begin{figure}[hpt!]
\begin{adjustwidth}{-1.75in}{0in} % Careful!
\caption{{\bf ITWT system. Analysis of estimated marginalized posterior distributions.} {\bf [A]} Comparisons between one- and two-dimensional (diagonal and upper elements) posterior marginals. Distances with respect to reference case ``AI\_L1'' between one- and two-dimensional (diagonal and lower elements) posterior marginals are also shown. Distances are normalized between 0\% (minimally divergent histograms) and 100\% (maximally divergent histograms). Note that off-diagonal entries (lower and upper sectors) symmetrically correspond to one another. Only primary core GRN motif relations are shown; see S~Fig~\ref{S6_Fig} for extended plot of core GRN motif and S~Fig~\ref{S7_Fig} for complete plot of parameter relations. For upper-triangular entries, we filter histogram regions where probability masses are greater than or equal to 0.25 (arbitrary threshold), and create smooth projections via Gaussian kernel density estimates. {\bf [B, C]} Mean or average distances between each method-run pair. {\bf [C]} Notice the same synergy as for the comparison of parameter value sets; see Fig~\ref{fig6} and Fig~\ref{fig7}: stronger agreement among AI\_L1, AI\_L2, and SA\_L2 than between SA\_L1 and SA\_L2. The differences between the optimal parameter sets strongly correlate with the differences between the marginalized posteriors. These differences come from the two distinct posterior inference methods (AI versus SA), the two distinct selection processes for the best parameter sets (MAPE versus SGM), and the effect of the curse of dimensionality (limited sample sizes of SA runs). Distances (shown as percentages) between raw probability vectors were quantified using the Jensen-Shannon metric (base 2). See also Methods and Results sections for additional details.}
\label{fig10}
\end{adjustwidth} % Careful!
\end{figure}
\clearpage

% Manuscript Organization: {Discussion}

\section*{Discussion}

In this work, we contrasted the performance of two distinct parameter inference workflows for spatial-stochastic gene regulation models. The AI-MAPE workflow, utilizing a state-of-the-art deep-learning technique fused with simulation-based inference (SBI), leverages a straightforward computation of the maximum a-posteriori estimate (MAPE) for optimal parameter set selection. The SA-SGM workflow, which elaborates on a classical simulated annealing (SA) optimization scheme and robust statistics, selects the optimal parameter set via the sample geometric median (SGM).

Both workflows satisfactorily inferred parameter sets for two complementary models of early embryonic cell-fate proportioning and maintenance. What is more, both inference workflows were capable of approximating full posterior distributions, creating informative pictures of model parameter spaces where high-probability regions are associated with the target behavior of the biological system under study.

Although both workflows displayed a similarly high performance in terms of accuracy and precision of the target (stem-cell proportioning) behavior, the classical SA-inspired method demonstrated an efficacy edge in terms of accuracy for the toy model case (RTM). In contrast, precision remained identical for both workflows (see Fig~\ref{fig3}[A, B]). For the biologically relevant model case (ITWT), the assessment is more complex: both workflows found solutions that appropriately recapitulate the target behavior, but their inferred parameter sets and posterior distributions were perceptibly more distinct, even between runs of the same workflow (see Fig~\ref{fig6}[A-C] and Fig~\ref{fig7}).

Remarkably, intra- and inter-method distances between parameter value sets and marginal posterior distributions (one- and two-dimensional projections) show strong correlation; see Fig~\ref{fig3}[D] versus Fig~\ref{fig5}[B, C], and Fig~\ref{fig6}[D] versus Fig~\ref{fig10}[B, C]. This indicates that they potentially serve as proxies for one another. This in turn suggests that observed differences among optimal parameter sets for all method-run pairs arise directly from the differences in estimated posterior distributions.

These results reflect the high complexity and stochasticity of the underlying biological system, as well as the curse of dimensionality. This problem is exacerbated in the classical SA-inspired method, as the high-dimensional parameter space cannot be properly characterized by the relatively restricted number of posterior samples associated with the ideal target behavior. The SA-SGM workflow collected only around 50 thousand posterior samples (out of 1 million total per run) consistent with high optimality scores. In contrast, the AI-powered method alleviates the curse of dimensionality by providing a virtually unlimited number of posterior samples consistent with high optimality scores, thanks to the deep generative artificial neural network (ANN) serving as a surrogate for the model simulator.

More importantly, while both workflows appear to perform sufficiently well in estimating the posterior, the AI workflow provides significantly more information than its classical counterpart, as already conceived by several other studies \cite{kaiser_simulation-based_2023, tolley_methods_2024, dingeldein_simulation-based_2023, stillman_generative_2023, dingeldein_amortized_2024}. By exploiting nuanced domain knowledge and carefully creating latent feature spaces, the SNPE algorithm directly and iteratively approximates full posterior distributions consistent with the target behavior. This approach yields waveform-free inferential solutions that fulfill imposed empirical constraints or observations via surrogate ANNs. This capability provides unique advantages for uncovering compensation mechanisms (e.g. Fig~\ref{fig8}) and quantifying estimation uncertainty (e.g. Fig~\ref{fig9}), while incurring minimal computational costs once ANN training is complete.

This is in stark contrast to traditional approaches such as gradient-based, genetic, evolutionary, or conventional ABC algorithms, which necessitate many more model simulation iterations, often employing brute-force parameter grid-search, to compute any of these instructive metrics \cite{goncalves_training_2020}. The AI workflow (powered by SNPE) emerges as a strong alternative, offering higher simulation efficiency compared to classical optimization methods, and enabling much richer insights thanks to its generative capabilities for posterior sampling tasks.

Moreover, by accessing AI-MAPE-estimated posterior distributions (especially AI\_1), we easily produced predictions of correlations and sensitivities for SA-SGM-related parameter sets, augmenting our model interpretation for the RTM (see Fig~\ref{fig4}). This exercise was not possible for the ITWT given its complexity and the parameter prediction discrepancies among all four runs: high-dimensional spaces expand distances between vectors, even when several components agree and are close to each other (see Fig~\ref{fig7}).

The hypothesized compensation mechanisms in this work could potentially guide future experimental studies. For example, the conditional correlation coefficient matrix for ITWT (Fig~\ref{fig8}) predicts that linearly modulating the effective lifetime of extracellular FGFR-FGF4 monomers (i.e., monomer receptor-ligand complexes) can be complementarily compensated by linear modulation of \textit{Fgf4} activation by NANOG and \textit{Fgf4} repression by GATA6. In other words, a shorter lifespan of the intercellular signaling molecule requires strengthening its gene activation and/or weakening its gene repression to preserve the target behavior. This observation is remarkable because of the nonlinear nature of the interactions between these mRNA and protein regulation components.

It is also reassuring that intuitive relationships, such as the association between the initial condition parameters (mRNA and PROTEIN), as well as the interactions among the primary core GRN motif, are already captured by the trained ANNs (see Fig~\ref{fig8}). This assessment is feasible thanks to a whole population of models retrieved via the ANN surrogate of the posterior, allowing us to explore the underlying structure of the parameter space. This capability enhances our understanding of compensatory mechanisms crucial for correct embryonic development and cellular function in real biological systems \cite{marder_multiple_2011}. These insights highlight that systematically and structurally different model parameter relationships (inputs) can give rise to identical target system behaviors (outputs).

Simultaneously, it is clear that although there may be degenerate solutions, they are unlikely to respond identically to parameter value perturbations. Many key environmental factors are usually inaccessible during the model selection process, complicating the discovery of the adaptability properties of the underlying system. Therefore, it is essential to filter degenerate solutions that not only are consistent with a target behavior but also respond robustly to external or internal perturbations, as instructed by our simulation-inference workflows. This filtering aims to recapitulate the strong resilience of true biological systems \cite{marder_multiple_2011, jiang_identification_2022, bruckner_information_2024}.

Similarly, we detected parameter space islands (small for the Fig~\ref{fig10}[A] SA\_L2 run and large for the S~Fig~\ref{S6_Fig} SA runs). Other computational studies have made similar observations \cite{marder_multiple_2011, goncalves_training_2020, boelts_flexible_2022, boelts_simulation-based_2023, tolley_methods_2024}, where point sets from seemingly disconnected parameter space regions reproduce the same system behavior. In our case, we do not believe that the presence of separate islands is simply an artifact of the regularization conditions imposed by the interpolation characteristics of the underpinning ANN; rather, it is an artifact of the low sample number obtained for the estimated SA posterior distributions. However, supporting this idea further will require performing considerably more simulations.

In essence, the effect of the curse of dimensionality is not only related to the limited sample sizes of SA runs (which itself constrains the analysis of posterior information), but also associated with a lack of inter-method information exchange and exploitability of SA posteriors. The last point is relevant because the small overlapping regions among all runs hinder the capabilities of AI-method-inferred posteriors in augmenting the value of the predictions done by SA-method-inferred posteriors; in contrast, this was not the case for the toy (RTM) system.

Nevertheless, while the potential compensation mechanisms uncovered by the AI method are insightful, their scope still hits certain limitations. Correlation and sensitivity analyses performed in this study are only possible for one- and two-dimensional marginalized posteriors. This poses a challenge, as the underlying distributions are high dimensional. The employed SBI toolbox \cite{tejero-cantero_sbi_2020} does provide some ways to produce higher-dimensional marginals than the ones explored here, but this requires using additional MCMC sampling steps, canceling the benefits of having access to ANN surrogates of posteriors. It might be plausible to enhance these analyses by exploiting embedding tools such as t-SNE and U-MAP \cite{verdier_simulation-based_2023}, and this could be a potential research direction.

Overall, it is interesting to think about combining both methods, AI-MAPE and SA-SGM. We propose the following straightforward approach: retain the searchers from SA-SGM but use the ANN posterior surrogate from AI-MAPE. This combination allows the contextual information gathered by exploring local parameter neighborhoods to be exploited and generalized for learning the global parameter space structure, while focusing on parameter regions with a high probability of recapitulating target observations.

More concretely, we propose combining intra-round biased sampling with inter-round exploration-information exchange enriched via diverse deep neural density estimation techniques such as (S)NLE \cite{papamakarios_sequential_2019}, (S)NRE \cite{miller_contrastive_2023}, (S)NPE \cite{greenberg_automatic_2019}, or (S)NVI \cite{glockler_variational_2022}. This approach potentially enhances the sampling efficiency of inference workflows by naturally and quickly guiding searchers towards optimality zones while extrapolating their parameter space configurations.

To select a high-quality next-round proposal distribution, instead of using just one ``optimal parameter set'', we propose utilizing both the MAPE and a weighted version of the SGM (WSGM). In the WSGM scheme, the coefficients would be set using all meta scores calculated from previous-round exploration steps. These two estimates could then further inform the underlying model for generating \emph{de novo} simulation ensembles to quantify their ``optimality'' via a revised version of the Time Ensemble Average Score (TEAS) and the Optimality Score (OS). Note that TEAS represents the accuracy of the target behavior using the median of pattern-score time-series ensembles and conveys the associated precision using the interquartile range. OS transforms TEAS into a single scalar, aiming to select ensembles with high accuracy and high precision. This combined approach leverages the strengths of both AI-MAPE and SA-SGM, potentially leading to more efficient and insightful parameter inference for complex biological models.

In conclusion, our comparison demonstrates that highly detailed, biophysics-rooted models of spatial-stochastic gene regulatory systems, which are prevalent across developmental biology, can be inferred by exploiting SBI approaches and high-performance computing, despite the lack of sufficiently granular experimental or empirical quantitative data. Additionally, it highlights an essential advantage of novel AI-driven SBI frameworks: access to ANN surrogates of posterior parameter distributions. By virtue of their generative capabilities, these ANN surrogate models significantly facilitate the discovery of crucial mechanistic insights for complex biological systems. This property underscores the vast potential applicability of SNPE-like algorithms to many other similar problems.

\clearpage % Careful!

% Manuscript Organization: {Figures and captions} % Careful!

% \input{Manuscript_Organization/Figures_Captions.tex} % Careful!

% Manuscript Organization: {Supporting information captions}

\section*{Supporting information} \label{section:supporting_information}

\subsection*{Mouse ICM-lineage differentiation modeling} \label{subsection:stem_cell_lineage_differentiation}

We integrate the main gene regulatory processes controlling cell-lineage specification and proportioning for the mouse ICM-derived progenies (epiblast ``EPI'' and primitive endoderm ``PRE'') under spatially inhomogeneous conditions, while bypassing any description of mechanical interactions among cells (such as cellular division, proliferation, or cell motility) that act as extrinsic noise factors.

This simplification is necessary not only for computational feasibility (spatial-stochastic gene regulatory models are considerably more expensive to simulate than their ODE-based counterparts), but also for properly quantifying the synergistic effects of biochemical signaling at the cell and tissue scales without the impact of significant extrinsic variability. Unlike ODE-inspired models, explicitly spatial-stochastic models of gene regulation recapitulate non-instantaneous cell-cell signaling, while naturally incorporating the intrinsic noise arising from biochemical reactions and the low molecular abundances of key cellular resources.

Current experimental understanding suggests that this cell specification establishes a $2:3$ ratio between the EPI and PRE fates in a spatially uniform fashion across the ICM tissue \cite{saiz_coordination_2020, allegre_nanog_2022}, such that the emerging pattern does not store any positional information. Moreover, the related cell-fate proportioning mechanism has been observed to be robust and reproducible under widely distinct experimental settings (in vivo, in vitro, and organoid cases \cite{saiz_growth-factor-mediated_2020, raina_cell-cell_2021, fischer_salt-and-pepper_2023}), even when other potentially relevant embryo components and geometrical constraints are missing, such as the trophectoderm and blastocoel \cite{ryan_lumen_2019, shahbazi_mechanisms_2020}. This observation highlights the significance of biochemical signaling for correct developmental progress in this system, and the need to adequately quantify its role regardless of varying (extrinsic) environmental conditions.

Thus, our modeling approach focuses on the specification-proportioning process of EPI and PRE lineages from the ICM progenitor population. We constructed a biophysics-inspired, stochastic-mechanistic description of its (cell scale) gene regulation network and its (tissue scale) diffusion-based communication. The main drivers of the ICM differentiation process are the self-activation of \textit{Nanog} and \textit{Gata6} genes (primary markers of EPI and PRE lineages, respectively) together with their mutual repression \cite{saiz_coordination_2020}. Another fundamental driver of this process is an FGF4-facilitated feedback loop, which enables cell-cell communication to control the associated cellular fate proportioning \cite{plusa_common_2020}.

Our companion study \cite{ramirez-sierra_ai-powered_2024} delves deeply into the biological connotations of the parameter interactions uncovered via an AI-enhanced simulation-based inference workflow, which is identical to the AI-MAPE method treated here. These parameter interactions provide mechanistic insights into the realization and maintenance of the target behavior for the underlying developmental system.

\newpage % Careful!

\subsection*{Combination proofs} \label{subsection:combination_proofs}

Let $\rho = (\rho_{1}, \rho_{2}, \ldots, \rho_{m-1}, \rho_{m})$ be a finite random vector whose components are independent but not necessarily identically distributed, and such that $\forall j \leq m$ ($m \in \mathbb{N}$) each random variable $\rho_{j}$ maps to the closed real interval between 0 and 1; i.e., $\rho_{j}: \Omega \rightarrow [0, 1]$, where $\Omega$ is the event space.

Assume that there are two alternative ways to pairwise combine such random vector components: Eq~(\ref{eq:elle_one_combination}) and Eq~(\ref{eq:elle_two_combination}). Without loss of generality, let any pair of these random variables be separately indexed by 1 and 2, with no particular order. Moreover, abusing notation, let $\rho_{j}$ represent its own realization.

\begin{equation} \label{eq:elle_one_combination}
f_{1}(\rho_{1},\rho_{2})=\frac{\rho_{1}+\rho_{2}}{2}-\left|\frac{\rho_{1}-\rho_{2}}{2}\right|
\end{equation}

\begin{equation} \label{eq:elle_two_combination}
f_{2}(\rho_{1},\rho_{2})=\left(\left(\frac{\rho_{1}+\rho_{2}}{2}\right)^{2}-\left(\frac{\rho_{1}-\rho_{2}}{2}\right)^{2}\right)^{\frac{1}{2}}
\end{equation}

\newtheorem{proposition}{Proposition}

\begin{proposition}
\begin{gather*}
\shortintertext{\emph{If}}
f_{1}(\rho_{1},\rho_{2})=\frac{\rho_{1}+\rho_{2}}{2}-\left|\frac{\rho_{1}-\rho_{2}}{2}\right| \quad \emph{and} \quad g_{1}(\rho_{1},\rho_{2})=\min(\rho_{1},\rho_{2}) \\
\shortintertext{\emph{then}} f_{1}(\rho_{1},\rho_{2}) \equiv g_{1}(\rho_{1},\rho_{2}) \quad \forall \rho_{j} \emph{.}
\end{gather*}
\end{proposition}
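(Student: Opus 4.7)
The plan is to prove the well-known identity $\min(a,b) = \tfrac{a+b}{2} - \tfrac{|a-b|}{2}$ by elementary case analysis, applied pointwise to realizations $\rho_1,\rho_2 \in [0,1]$. Because the claim is an identity of deterministic functions evaluated at realizations, no probabilistic machinery (distributional assumptions, independence, etc.) is needed; the identity is purely algebraic, and the hypothesis $\rho_j \in [0,1]$ is in fact irrelevant to the argument (it will hold on all of $\mathbb{R}^2$).

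First I would split on the sign of $\rho_1 - \rho_2$. In the case $\rho_1 \geq \rho_2$, I would use $|\rho_1 - \rho_2| = \rho_1 - \rho_2$ to simplify
\begin{equation*}
f_1(\rho_1,\rho_2) = \frac{\rho_1 + \rho_2}{2} - \frac{\rho_1 - \rho_2}{2} = \rho_2 = \min(\rho_1,\rho_2).
\end{equation*}
Symmetrically, in the case $\rho_1 < \rho_2$, substituting $|\rho_1 - \rho_2| = \rho_2 - \rho_1$ yields $f_1(\rho_1,\rho_2) = \rho_1 = \min(\rho_1,\rho_2)$. These two cases exhaust the domain and agree on the boundary $\rho_1 = \rho_2$ (where both evaluate to the common value), so the identity $f_1 \equiv g_1$ holds for every pair of realizations.

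The main ``obstacle'' is essentially cosmetic: one must make sure the two branches of the case analysis cover the domain without gap or double-counting, and that the boundary case $\rho_1 = \rho_2$ is handled consistently. Because $\min$ is well defined and both branches collapse to the same value when the arguments are equal, this is immediate. A brief concluding sentence can then remark that the analogous identity for $f_2$ and the geometric mean (needed for the next proposition) will follow in the same way, by squaring the binomial-difference and recognizing the product $\rho_1 \rho_2$ under the square root.
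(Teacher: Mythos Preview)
Your proposal is correct and essentially identical to the paper's own proof: both proceed by elementary case analysis on the sign of $\rho_1-\rho_2$, simplifying the absolute value in each branch to recover the minimum. The only cosmetic difference is that the paper splits into three cases ($\rho_1<\rho_2$, $\rho_1>\rho_2$, and $\rho_1=\rho_2$ trivially), whereas you fold the equality into one of the inequality branches; either partition is fine.
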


\begin{proof}
We prove Proposition 1 by cases.
\\[7pt]
Case 1. Let $\rho_{1} < \rho_{2}$.
\begin{gather*}
g_{1}=\rho_{1} \\
f_{1}=\frac{\rho_{1}+\rho_{2}+\rho_{1}-\rho_{2}}{2}=\frac{2\rho_{1}}{2}=\rho_{1}
\end{gather*}
Case 2. Let $\rho_{1} > \rho_{2}$.
\begin{gather*}
g_{1}=\rho_{2} \\
f_{1}=\frac{\rho_{1}+\rho_{2}-\rho_{1}+\rho_{2}}{2}=\frac{2\rho_{2}}{2}=\rho_{2}
\end{gather*}
Case 3. If $\rho_{1} = \rho_{2}$ then the proposition is trivially valid.
\end{proof}

\begin{proposition}
\begin{gather*}
\shortintertext{\emph{If}}
f_{2}(\rho_{1},\rho_{2})=\left(\left(\frac{\rho_{1}+\rho_{2}}{2}\right)^{2}-\left(\frac{\rho_{1}-\rho_{2}}{2}\right)^{2}\right)^{\frac{1}{2}} \quad \emph{and} \quad g_{2}(\rho_{1},\rho_{2})=(\rho_{1}\rho_{2})^{\frac{1}{2}} \\
\shortintertext{\emph{then}} \quad f_{1}(\rho_{1},\rho_{2}) \equiv g_{1}(\rho_{1},\rho_{2}) \quad \forall \rho_{j} \emph{.}
\end{gather*}
\end{proposition}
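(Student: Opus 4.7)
The plan is to reduce Proposition 2 to the elementary difference-of-squares identity $a^2-b^2=(a+b)(a-b)$ with $a=(\rho_{1}+\rho_{2})/2$ and $b=(\rho_{1}-\rho_{2})/2$. Since $a+b=\rho_{1}$ and $a-b=\rho_{2}$, the argument of the outer square root in $f_{2}$ collapses immediately to $\rho_{1}\rho_{2}$. Equivalently, one can simply expand the two squared binomials directly and observe that the $\rho_{1}^{2}$ and $\rho_{2}^{2}$ contributions cancel between the two terms, leaving only the cross-product $4\rho_{1}\rho_{2}/4=\rho_{1}\rho_{2}$. Either route is a one-line computation, so no elaborate machinery is needed.

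Before taking the square root, I would briefly verify that both sides are real-valued and that the principal root is the correct choice. By hypothesis each $\rho_{j}$ maps into $[0,1]$, so $\rho_{1},\rho_{2}\geq 0$ and hence $\rho_{1}\rho_{2}\geq 0$; by the identity above, the inner expression $a^{2}-b^{2}$ is likewise non-negative, so the outer root in $f_{2}$ is unambiguously the non-negative root, matching the convention implicit in the geometric mean $g_{2}$. Applying that root to both sides then yields $f_{2}(\rho_{1},\rho_{2})=\sqrt{\rho_{1}\rho_{2}}=g_{2}(\rho_{1},\rho_{2})$ for every admissible pair.

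In contrast with Proposition 1, there is no need for a case analysis on the sign of $\rho_{1}-\rho_{2}$, because the difference-of-squares factorization is an unconditional polynomial identity rather than a piecewise simplification of an absolute value. I do not anticipate any real obstacle: the only subtlety worth flagging is that the equivalence depends on interpreting both outer roots as principal (non-negative) roots, a choice that is enforced automatically by the domain constraint $\rho_{j}\in[0,1]$. Although the statement is confined to two arguments, the same algebraic step underwrites the generalization $\bigl(\prod_{j=1}^{m}\rho_{j}\bigr)^{1/m}$ used for three or more model configurations in the main text, in line with Eq~(\ref{eq:joint_score_equivalence}).
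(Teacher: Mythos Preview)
Your proposal is correct and matches the paper's approach: the paper simply expands $f_{2}^{2}$ directly as $\frac{\rho_{1}^{2}+2\rho_{1}\rho_{2}+\rho_{2}^{2}-\rho_{1}^{2}+2\rho_{1}\rho_{2}-\rho_{2}^{2}}{4}=\rho_{1}\rho_{2}$ and then takes the square root, which is exactly your second route. Your added remark about non-negativity justifying the principal root is a small refinement the paper leaves implicit, but the core argument is the same one-line algebraic identity.
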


\begin{proof}
We prove Proposition 2 by simply using arithmetic operations.
\\
\begin{gather*}
f_{2}^{2}=\frac{\rho_{1}^{2}+2\rho_{1}\rho_{2}+\rho_{2}^{2}-\rho_{1}^{2}+2\rho_{1}\rho_{2}-\rho_{2}^{2}}{4}=\frac{4\rho_{1}\rho_{2}}{4}=\rho_{1}\rho_{2}
\\
f_{2}=(\rho_{1}\rho_{2})^{\frac{1}{2}}
\end{gather*}
\end{proof}

By proving equivalence between $f_{1,2}$ and $g_{1,2}$, respectively, it is trivial to extend random vector component combinations to three or more random variables.

\begin{align}
\shortintertext{Scenario 1.}
f_{1}(\rho) \equiv g_{1}(\rho) & \equiv \min_{j}(\rho_{1},\rho_{2},\ldots,\rho_{m-1},\rho_{m})
\\[7pt]
\shortintertext{Scenario 2.}
f_{2}(\rho) \equiv g_{2}(\rho) & \equiv \left(\prod_{j}^{m}\rho_{j}\right)^{\frac{1}{m}}
\end{align}

\subsection*{Inventory of model parameter values}

S~Table~\ref{S1_Table} provides an overview of the model parameters designated as free values, along with their respective prior ranges. These ranges are based on data compiled from various literature sources and generally reflect informed estimates derived from analogous biological systems. However, we have chosen to adopt broad bounds for these ranges, relying on our parameter inference schemes to determine biophysically relevant values. This table is a replica of the inventory presented in our companion study \cite{ramirez-sierra_ai-powered_2024}, and it is included here for simplicity.

\begin{table}[ht!]
\begin{adjustwidth}{-1.75in}{0in}
\caption{{\bf Summary of inferred (free) model parameters.}}
\centering
\begin{tabular}{ | c | c | c | c | c | c | c | }
\hline
\rowcolor{lightgray}
Name & Alias & Description & \parbox[c][2.5em][c]{3.75em}{\centering Lower \\ Bound} & \parbox[c][2.5em][c]{3.75em}{\centering Upper \\ Bound} & Units & Note \\
\hline
$h_{\textrm{act},\textit{Nanog},\textrm{NANOG}}$ & \textit{Nanog}\_NANOG & \parbox[c][2.5em][c]{10em}{\centering Half-saturation level \\ (self-activation)} & 0 & 1000 & [pc] & \parbox[c][2.5em][c]{7.5em}{\centering Primary \\ Core GRN} \\
$h_{\textrm{act},\textit{Gata6},\textrm{GATA6}}$ & \textit{Gata6}\_GATA6 & \parbox[c][2.5em][c]{10em}{\centering Half-saturation level \\ (self-activation)} & 0 & 1000 & [pc] & \parbox[c][2.5em][c]{7.5em}{\centering Primary \\ Core GRN} \\
$h_{\textrm{rep},\textit{Gata6},\textrm{NANOG}}$ & \textit{Gata6}\_NANOG & \parbox[c][2.5em][c]{10em}{\centering Half-saturation level \\ (mutual-repression)} & 0 & 1000 & [pc] & \parbox[c][2.5em][c]{7.5em}{\centering Primary \\ Core GRN} \\
$h_{\textrm{rep},\textit{Nanog},\textrm{GATA6}}$ & \textit{Nanog}\_GATA6 & \parbox[c][2.5em][c]{10em}{\centering Half-saturation level \\ (mutual-repression)} & 0 & 1000 & [pc] & \parbox[c][2.5em][c]{7.5em}{\centering Primary \\ Core GRN} \\
\hline
$h_{\textrm{act},\textit{Fgf4},\textrm{NANOG}}$ & \textit{Fgf4}\_NANOG & \parbox[c][2.5em][c]{10em}{\centering Half-saturation level \\ (activation)} & 0 & 1000 & [pc] & \parbox[c][2.5em][c]{7.5em}{\centering Secondary \\ Core GRN} \\
$h_{\textrm{act},\textit{Gata6},\textrm{A-ERK}}$ & \textit{Gata6}\_A-ERK & \parbox[c][2.5em][c]{10em}{\centering Half-saturation level \\ (activation)} & 0 & 1000 & [pc] & \parbox[c][2.5em][c]{7.5em}{\centering Secondary \\ Core GRN} \\
$h_{\textrm{rep},\textit{Fgf4},\textrm{GATA6}}$ & \textit{Fgf4}\_GATA6 & \parbox[c][2.5em][c]{10em}{\centering Half-saturation level \\ (repression)} & 0 & 1000 & [pc] & \parbox[c][2.5em][c]{7.5em}{\centering Secondary \\ Core GRN} \\
$h_{\textrm{rep},\textit{Nanog},\textrm{A-ERK}}$ & \textit{Nanog}\_A-ERK & \parbox[c][2.5em][c]{10em}{\centering Half-saturation level \\ (repression)} & 0 & 1000 & [pc] & \parbox[c][2.5em][c]{7.5em}{\centering Secondary \\ Core GRN} \\
\hline
$\tau_{\textrm{escape}}$ & $k_{\textrm{escape}}^{-1}$ & \parbox[c][2.5em][c]{10em}{\centering Mean escape time \\ (FGF4)} & 300 & 4500 & [s] & \parbox[c][2.5em][c]{7.5em}{\centering Cell-Cell \\ Communication} \\
$\tau_{\textrm{exchange}}$ & $k_{\textrm{exchange}}^{-1}$ & \parbox[c][2.5em][c]{10em}{\centering Mean exchange time \\ (FGF4)} & 30 & 4200 & [s] & \parbox[c][2.5em][c]{7.5em}{\centering Cell-Cell \\ Communication} \\
$\chi_{\textrm{auto}}$ & $1-\chi_{\textrm{para}}$ & Autocrine signaling fraction & 0 & 1 &  & \parbox[c][2.5em][c]{7.5em}{\centering Cell-Cell \\ Communication} \\
\hline
$\tau_{\textrm{pho},\textrm{ERK}}$ & $k_{\textrm{pho},\textrm{A-ERK}}^{-1}$ & \parbox[c][2.5em][c]{10em}{\centering Half-turnover time \\ (phosphorylation)} & 300 & 43200 & [s] & \parbox[c][2.5em][c]{7.5em}{\centering Signaling \\ Pathway} \\
$\tau_{\textrm{doh},\textrm{ERK}}$ & $k_{\textrm{doh},\textrm{I-ERK}}^{-1}$ & \parbox[c][2.5em][c]{10em}{\centering Half-turnover time \\ (dephosphorylation)} & 30 & 43200 & [s] & \parbox[c][2.5em][c]{7.5em}{\centering Signaling \\ Pathway} \\
$\tau_{\textrm{pho},\textrm{NANOG}}$ & $k_{\textrm{pho},\textrm{P-NANOG}}^{-1}$ & \parbox[c][2.5em][c]{10em}{\centering Half-turnover time \\ (phosphorylation)} & 300 & 43200 & [s] & \parbox[c][2.5em][c]{7.5em}{\centering Signaling \\ Pathway} \\
$\tau_{\textrm{doh},\textrm{NANOG}}$ & $k_{\textrm{doh},\textrm{NANOG}}^{-1}$ & \parbox[c][2.5em][c]{10em}{\centering Half-turnover time \\ (dephosphorylation)} & 30 & 43200 & [s] & \parbox[c][2.5em][c]{7.5em}{\centering Signaling \\ Pathway} \\
\hline
\parbox[c][2.5em][c]{7.5em}{\centering Mean Initial \\ mRNA Count} & \parbox[c][2.5em][c]{7.5em}{\centering \textit{Nanog}\_\textit{Gata6} \\ mRNA} & Initial condition & 0 & 250 & [mc] &  \\
\parbox[c][2.5em][c]{7.5em}{\centering Mean Initial \\ PROTEIN Count} & \parbox[c][2.5em][c]{7.5em}{\centering NANOG\_GATA6 \\ PROTEIN} & Initial condition & 0 & 1000 & [pc] &  \\
$\tau_{\textrm{d},\textrm{FGF4}}$ & $k_{\textrm{p},\textrm{d},\textrm{FGF4}}^{-1}$ & Lifetime or half-life & 300 & 28800 & [s] & \parbox[c][2.5em][c]{7.5em}{\centering Molecular \\ Stability} \\
$\tau_{\textrm{d},\textrm{M-FGFR-FGF4}}$ & $k_{\textrm{p},\textrm{d},\textrm{M-FGFR-FGF4}}^{-1}$ & Lifetime or half-life & 300 & 28800 & [s] & \parbox[c][2.5em][c]{7.5em}{\centering Molecular \\ Stability} \\
\hline
\end{tabular}
\begin{flushleft}
Notation: act = activation; rep = repression; pho = phosphorylation; doh = dephosphorylation; [pc] = [protein copies]; [mc] = [mRNA copies]; [s] = [seconds].
\end{flushleft}
\label{S1_Table}
\end{adjustwidth}
\end{table}
\clearpage

% Fig~\ref{S1_Fig}. % Fig~\ref{fig1}.
\begin{adjustwidth}{-1.75in}{0in} % Careful!
\includegraphics[width = 7in, height = 7in]{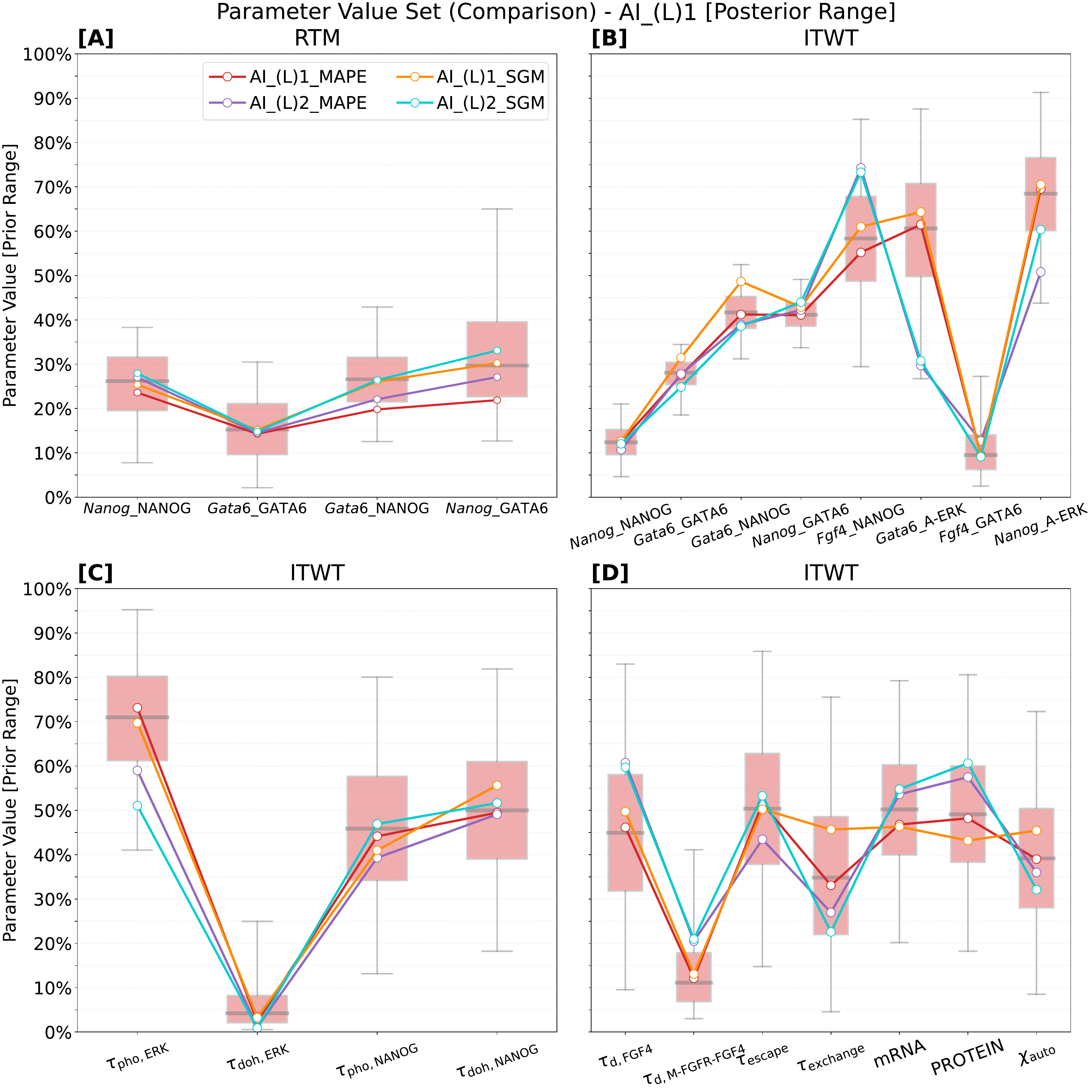} \centering % Careful! % {2.75, 4, 5, 7.5, 8.75}
\end{adjustwidth} % Careful!
\begin{figure}[hpt!]
\begin{adjustwidth}{-1.75in}{0in} % Careful!
\caption{{\bf Comparing MAPE versus SGM for SNPE-inspired workflow.} To illustrate the validity and practicality of the sample geometric median (SGM), we compute it for all the AI-MAPE workflow runs and show it alongside their associated maximum a posteriori probability estimates (MAPEs). {\bf [A]} RTM system: red and purple colors indicate MAPEs, orange and turquoise colors indicate SGMs, for AI\_1 and A1\_2 runs, respectively. {\bf [B-D]} ITWT system: same as [A], but for AI\_L1 and AI\_L2 runs, respectively. {\bf [A-D]} Note that despite some minor variations, SGMs and MAPEs move side by side, trending in parallel overall. We remark that, in general, the separate SGM component values (inferred model parameters) do not necessarily follow the univariate median of their related one-dimensional marginal posteriors; this observation is expected due to co-dependency among the model parameter values. For clarity, comparisons should proceed pairwise: red-orange, and purple-turquoise. See Fig~\ref{fig3}[C] and Fig~\ref{fig7} for additional details.}
\label{S1_Fig}
\end{adjustwidth} % Careful!
\end{figure}
\clearpage

% Fig~\ref{S2_Fig}. % Fig~\ref{fig4}.
\begin{adjustwidth}{-1.75in}{0in} % Careful!
\includegraphics[width = 5.5in, height = 5.5in]{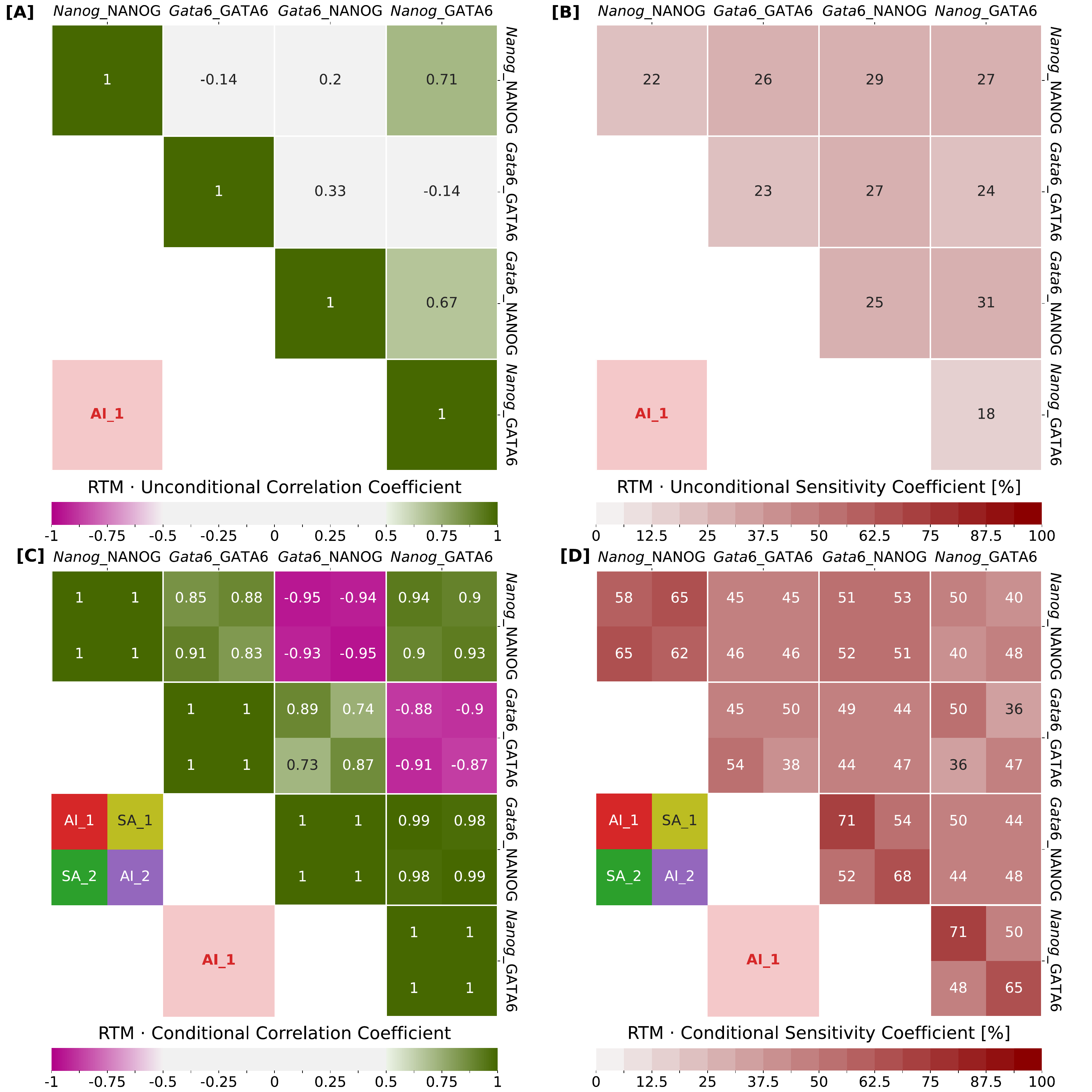} \centering % Careful! % {2.75, 4, 5, 7.5, 8.75}
\end{adjustwidth} % Careful!
\begin{figure}[hpt!]
\begin{adjustwidth}{-1.75in}{0in} % Careful!
\caption{{\bf RTM system. Unconditional model parameter correlation and sensitivity matrices.} {\bf [A]} Linear correlation coefficients directly extracted from AI\_1 posterior distribution (global reference). Note weak linear correlations among several parameter pairs (indicating no potential compensatory mechanisms between each other), as well as moderate linear correlations among \textit{Nanog}\_NANOG-\textit{Nanog}\_GATA6 and \textit{Gata6}\_NANOG-\textit{Nanog}\_GATA6. Notice also that this calculation is possible because the trained ANN acts as a surrogate of the simulator and directly approximates the AI\_1 posterior; i.e., no additional simulations are required to extract Pearson's correlation coefficients. {\bf [D]} Sensitivity to value changes for all parameters; unconditional AI\_1 posterior. Sensitivity equal to 0\% signifies that any value within its prior range can recover the ideal or target system behavior (while holding all the other parameters fixed). Sensitivity equal to 100\% signifies that the value must fall within a singular parameter bin (histograms were created with 250 bins per dimension). Note weak (0-25\%) and moderate (25-50\%) sensitivities for all parameter relations. No strong (50-100\%) sensitivities are recorded. {\bf [C, D]} Conditional model parameter correlation and sensitivity matrices. These two panels are only intended for quick reference; see Fig~\ref{fig4}[C, D] for complete details.}
\label{S2_Fig}
\end{adjustwidth} % Careful!
\end{figure}
\clearpage

% Fig~\ref{S3_Fig}. % Fig~\ref{fig6}.
\begin{adjustwidth}{-1.75in}{0in} % Careful!
\includegraphics[width = 5.5in, height = 5.5in]{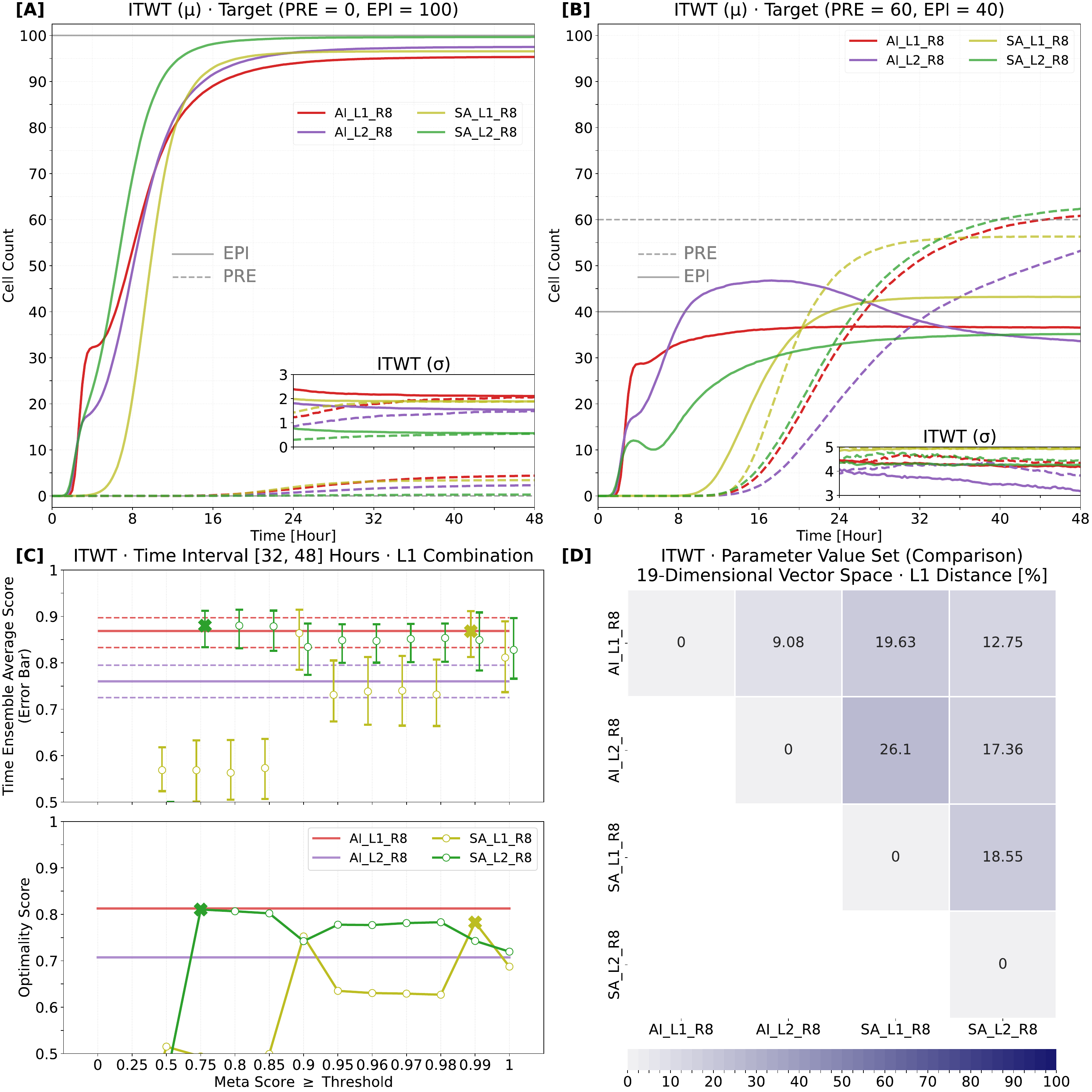} \centering % Careful! % {2.75, 4, 5, 7.5, 8.75}
\end{adjustwidth} % Careful!
\begin{figure}[hpt!]
\begin{adjustwidth}{-1.75in}{0in} % Careful!
\caption{{\bf Comparison at equal number of simulation-inference rounds (R8) of ITWT system features.} {\bf [A, B]} Behavior at tissue scale for all four parameter sets (100-cell grid). Correct cell-fate proportions should be reached and sustained within a time window between 32 and 48 hours for both system configurations separately. {\bf [A]} The target of 0 PRE cells and 100 EPI cells applies to the configuration without cell-cell communication (nonfunctional signaling). {\bf [B]} The target of 60 PRE cells and 40 EPI cells applies to the configuration with cell-cell communication (functional signaling). {\bf [C]} Selection process for the best or optimal parameter sets of the two SA runs (SA\_L1 and SA\_L2). Statistics of the selection process for the optimal parameter sets of the two AI runs (AI\_L1 and AI\_L2) are shown for reference. Two techniques are used to produce the joint configuration-score data from the two marginal configuration-score time series (L1-norm- and L2-norm-inspired combinations), but calculating TEAS and OS for all method-run pairs requires a common technique: L1 combination is used for simplicity. See Fig~\ref{fig3} caption for definitions of Time Ensemble Average Score ``TEAS'' and Optimality Score ``OS''. Crosses highlight the best parameter sets for the SA runs. Notice that naively picking the best possible meta-score (filtering threshold equal to 1) does not directly translate to finding the best actual performance. {\bf [D]} Distance matrix contrasting each pair of parameter sets: the parameter value sets are assumed to be elements of an abstract nineteen-dimensional vector space. Notice stronger agreement among AI\_L1, AI\_L2, and SA\_L2 than between SA\_L1 and SA\_L2. The L1 metric (normalized between 0\% and 100\%) was used to quantify distances between parameter vectors. See Fig~\ref{fig6} for reference.}
\label{S3_Fig}
\end{adjustwidth} % Careful!
\end{figure}
\clearpage

% Fig~\ref{S4_Fig}. % Fig~\ref{fig7}.
\begin{adjustwidth}{-1.75in}{0in} % Careful!
\includegraphics[width = 6.25in, height = 6.25in]{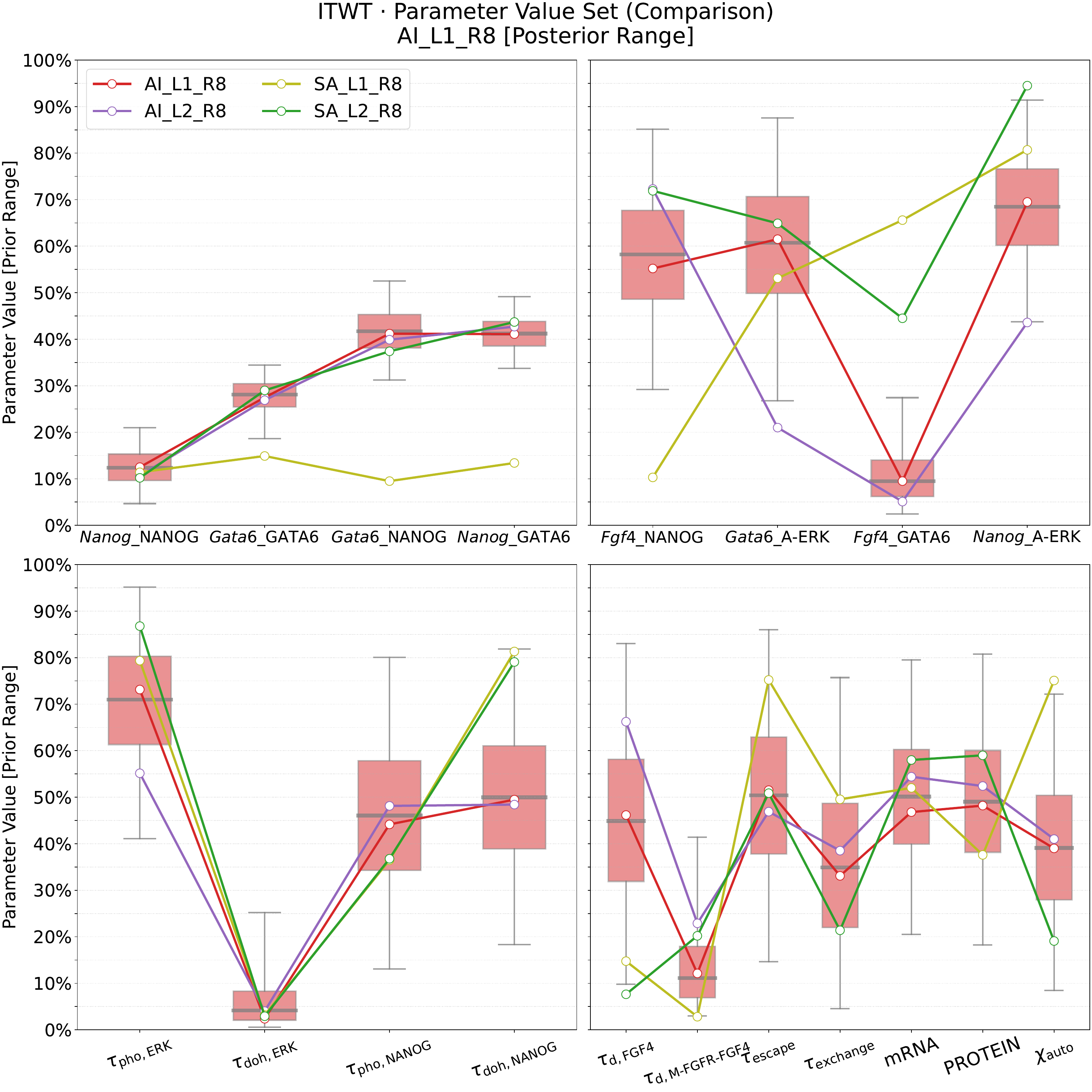} \centering % Careful! % {2.75, 4, 5, 7.5, 8.75}
\end{adjustwidth} % Careful!
\begin{figure}[hpt!]
\begin{adjustwidth}{-1.75in}{0in} % Careful!
\caption{{\bf ITWT system. Comparison at equal number of simulation-inference rounds (R8) of inferred/optimal parameter sets.} Box-and-whisker diagrams for each AI\_L1-related one-dimensional marginal posterior distribution are shown as baselines. See Fig~\ref{fig3}[C] caption for definitions of boxes and whiskers. Parameter values fall under normalized prior ranges. Top row: inferred core GRN interaction parameters (cell-scale dynamics). Bottom row: inferred spatial coupling (tissue-scale dynamics) and initial condition (mRNA-PROTEIN counts) parameters. Notice strong agreement among parameter sets AI\_L1, AI\_L2, and SA\_L2 for predictions of primary core GRN interaction values (top-left panel), but weak agreement for predictions of secondary core GRN interaction values (top-right panel). Note also moderate agreement among all parameter sets for predictions of other values. These differences indicate emergence of potential compensation mechanisms among parameters, highlighting distinct exploitable strategies to achieve the ideal or target system behavior. See Fig~\ref{fig7} for reference.}
\label{S4_Fig}
\end{adjustwidth} % Careful!
\end{figure}
\clearpage

% Fig~\ref{S5_Fig}. % Fig~\ref{fig9}.
\begin{adjustwidth}{-1.75in}{0in} % Careful!
\includegraphics[width = 7in, height = 7in]{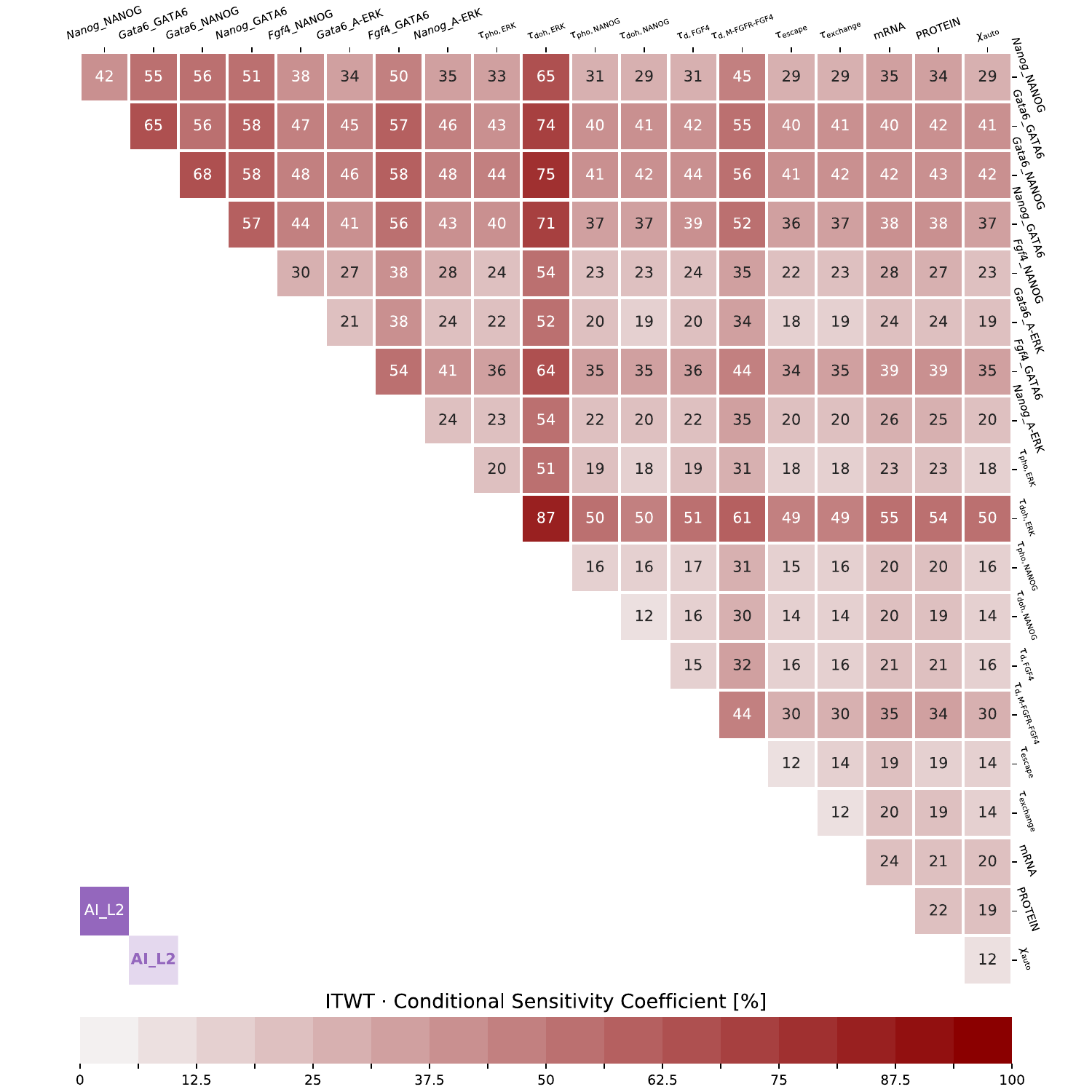} \centering % Careful! % {2.75, 4, 5, 7.5, 8.75}
\end{adjustwidth} % Careful!
\begin{figure}[hpt!]
\begin{adjustwidth}{-1.75in}{0in} % Careful!
\caption{{\bf ITWT system. Conditional model parameter sensitivity matrix.} Sensitivity to value changes for all parameters; AI\_L2 posterior conditional on its own MAPE. Altogether, this result is identical to the conclusion of Fig~\ref{fig9}. Note weak sensitivities (0-25\%) for (only diagonal entries) \textit{Gata6}\_A-ERK, \textit{Nanog}\_A-ERK, $\tau_{\textrm{pho},\textrm{ERK}}$, $\tau_{\textrm{pho},\textrm{NANOG}}$, $\tau_{\textrm{doh},\textrm{NANOG}}$, $\tau_{\textrm{d},\textrm{FGF4}}$, $\tau_{\textrm{escape}}$, $\tau_{\textrm{exchange}}$, and $\chi_{\textrm{auto}}$; moderate sensitivities (25-50\%) for \textit{Nanog}\_NANOG, \textit{Fgf4}\_NANOG, mRNA, and PROTEIN. Notice also strong sensitivities (50-100\%) for \textit{Gata6}\_GATA6, \textit{Gata6}\_NANOG, \textit{Nanog}\_GATA6, \textit{Fgf4}\_GATA6, $\tau_{\textrm{doh},\textrm{ERK}}$, and $\tau_{\textrm{d},\textrm{M-FGFR-FGF4}}$. Strong sensitivities reflect low tolerance to value fluctuations, given any other parameter is fixed at corresponding MAPE, for recapitulating the ideal or target system behavior. See Fig~\ref{fig4}[D] caption for definition of sensitivity.}
\label{S5_Fig}
\end{adjustwidth} % Careful!
\end{figure}
\clearpage

% Fig~\ref{S6_Fig}. % Fig~\ref{fig10}.
\begin{adjustwidth}{-1.75in}{0in} % Careful!
\includegraphics[width = 7in, height = 7in]{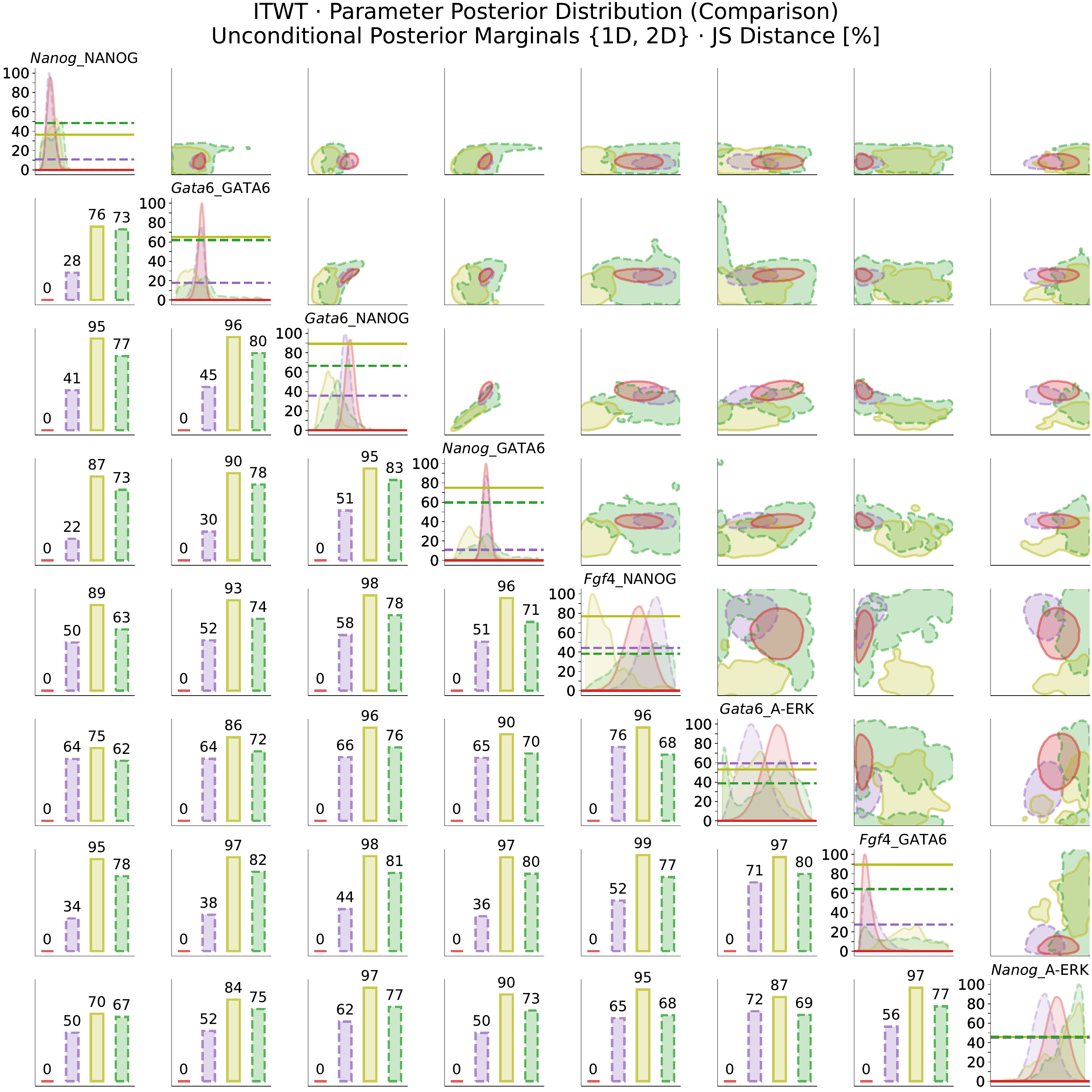} \centering % Careful! % {2.75, 4, 5, 7.5, 8.75}
\end{adjustwidth} % Careful!
\begin{figure}[hpt!]
\begin{adjustwidth}{-1.75in}{0in} % Careful!
\caption{{\bf ITWT system. Analysis of estimated marginalized posterior distributions for the (primary and secondary) core GRN motif relations.} Comparisons between one- and two-dimensional (diagonal and upper elements) posterior marginals. Distances with respect to reference case ``AI\_L1'' between one- and two-dimensional (diagonal and lower elements) posterior marginals are also shown. Distances are normalized between 0\% (minimally divergent histograms) and 100\% (maximally divergent histograms). Distances (shown as percentages) between raw probability vectors were quantified using the Jensen-Shannon metric (base 2). Note that off-diagonal entries (lower and upper sectors) symmetrically correspond to one another. For upper-triangular entries, we filter histogram regions where probability masses are greater than or equal to 0.25 (arbitrary threshold), and create smooth projections via Gaussian kernel density estimates. See Fig~\ref{fig10} for reference.}
\label{S6_Fig}
\end{adjustwidth} % Careful!
\end{figure}
\clearpage

% Fig~\ref{S7_Fig}. % Fig~\ref{fig10}.
\begin{adjustwidth}{-1.75in}{0in} % Careful!
\includegraphics[width = 7in, height = 7in]{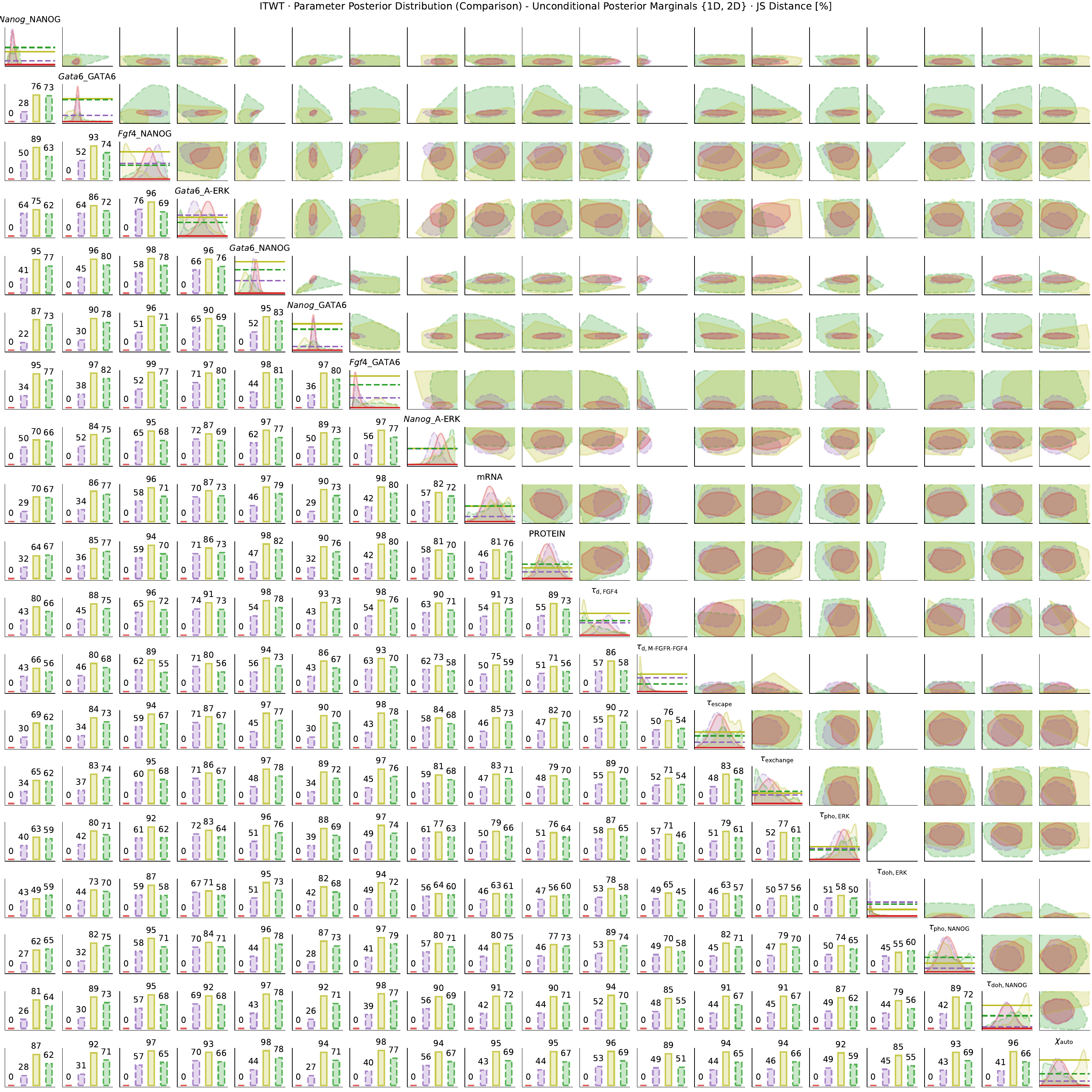} \centering % Careful! % {2.75, 4, 5, 7.5, 8.75}
\end{adjustwidth} % Careful!
\begin{figure}[hpt!]
\begin{adjustwidth}{-1.75in}{0in} % Careful!
\caption{{\bf ITWT system. Analysis of estimated marginalized posterior distributions: complete plot of parameter relations.} Comparisons between one- and two-dimensional (diagonal and upper elements) posterior marginals. Distances with respect to reference case ``AI\_L1'' between one- and two-dimensional (diagonal and lower elements) posterior marginals are also shown. Distances are normalized between 0\% (minimally divergent histograms) and 100\% (maximally divergent histograms). Distances (shown as percentages) between raw probability vectors were quantified using the Jensen-Shannon metric (base 2). Note that off-diagonal entries (lower and upper sectors) symmetrically correspond to one another. For upper-triangular entries, we filter histogram regions where probability masses are greater than or equal to 0.25 (arbitrary threshold), and create coarse projections via convex hull approximations. See Fig~\ref{fig10} for reference.}
\label{S7_Fig}
\end{adjustwidth} % Careful!
\end{figure}
\clearpage

% \nolinenumbers % Careful! % [LANE NUMBERS (FINAL)]

% Manuscript Organization: {Acknowledgements}

\section*{Acknowledgments}

The successful completion of this research project is largely attributed to the collaboration and support of our esteemed colleagues and collaborators. We extend our gratitude to Roberto Covino and his lab for their invaluable guidance and expertise throughout the project, especially for directing us towards the powerful SBI framework. Their insightful feedback and constructive critiques have significantly enhanced the quality of our work. Additionally, we acknowledge the financial support from the CMMS project and FIAS, which has been instrumental for the advancement of this research. We also thank the Center for Scientific Computing (CSC) at Goethe University Frankfurt for providing access to the Goethe-HLR cluster.

% Manuscript Organization: {References}

% \bibliographystyle{plain} % Careful!

\bibliography{Manuscript_Catalog.bib}

\begin{thebibliography}{10}

\bibitem{torregrosa_mechanistic_2021}
Torregrosa G, Garcia-Ojalvo J.
\newblock Mechanistic models of cell-fate transitions from single-cell data.
\newblock Current Opinion in Systems Biology. 2021;26:79--86.
\newblock doi:{10.1016/j.coisb.2021.04.004}.

\bibitem{mjolsness_prospects_2019}
Mjolsness E.
\newblock Prospects for {Declarative} {Mathematical} {Modeling} of {Complex} {Biological} {Systems}.
\newblock Bulletin of Mathematical Biology. 2019;81(8):3385--3420.
\newblock doi:{10.1007/s11538-019-00628-7}.

\bibitem{bonnaffoux_wasabi_2019}
Bonnaffoux A, Herbach U, Richard A, Guillemin A, Gonin-Giraud S, Gros PA, et~al.
\newblock {WASABI}: a dynamic iterative framework for gene regulatory network inference.
\newblock BMC Bioinformatics. 2019;20(1):220.
\newblock doi:{10.1186/s12859-019-2798-1}.

\bibitem{cang_multiscale_2021}
Cang Z, Wang Y, Wang Q, Cho KWY, Holmes W, Nie Q.
\newblock A multiscale model via single-cell transcriptomics reveals robust patterning mechanisms during early mammalian embryo development.
\newblock PLOS Computational Biology. 2021;17(3):e1008571.
\newblock doi:{10.1371/journal.pcbi.1008571}.

\bibitem{forsyth_iven_2021}
Forsyth JE, Al-Anbaki AH, Fuente Rdl, Modare N, Perez-Cortes D, Rivera I, et~al.
\newblock {IVEN}: {A} quantitative tool to describe {3D} cell position and neighbourhood reveals architectural changes in {FGF4}-treated preimplantation embryos.
\newblock PLOS Biology. 2021;19(7):e3001345.
\newblock doi:{10.1371/journal.pbio.3001345}.

\bibitem{jiang_identification_2022}
Jiang R, Singh P, Wrede F, Hellander A, Petzold L.
\newblock Identification of dynamic mass-action biochemical reaction networks using sparse {Bayesian} methods.
\newblock PLOS Computational Biology. 2022;18(1):e1009830.
\newblock doi:{10.1371/journal.pcbi.1009830}.

\bibitem{dirk_recognition_2023}
Dirk R, Fischer JL, Schardt S, Ankenbrand MJ, Fischer SC.
\newblock Recognition and reconstruction of cell differentiation patterns with deep learning.
\newblock PLOS Computational Biology. 2023;19(10):e1011582.
\newblock doi:{10.1371/journal.pcbi.1011582}.

\bibitem{alamoudi_fitmulticell_2023}
Alamoudi E, Schälte Y, Müller R, Starruß J, Bundgaard N, Graw F, et~al.
\newblock {FitMultiCell}: simulating and parameterizing computational models of multi-scale and multi-cellular processes.
\newblock Bioinformatics. 2023;39(11):btad674.
\newblock doi:{10.1093/bioinformatics/btad674}.

\bibitem{prescott_efficient_2024}
Prescott TP, Warne DJ, Baker RE.
\newblock Efficient multifidelity likelihood-free {Bayesian} inference with adaptive computational resource allocation.
\newblock Journal of Computational Physics. 2024;496:112577.
\newblock doi:{10.1016/j.jcp.2023.112577}.

\bibitem{verdier_simulation-based_2023}
Verdier H, Laurent F, Cassé A, Vestergaard CL, Specht CG, Masson JB.
\newblock Simulation-based inference for non-parametric statistical comparison of biomolecule dynamics.
\newblock PLOS Computational Biology. 2023;19(2):e1010088.
\newblock doi:{10.1371/journal.pcbi.1010088}.

\bibitem{wang_missing_2024}
Wang Z, Hasenauer J, Schälte Y.
\newblock Missing data in amortized simulation-based neural posterior estimation.
\newblock PLOS Computational Biology. 2024;20(6):e1012184.
\newblock doi:{10.1371/journal.pcbi.1012184}.

\bibitem{wang_massive_2019}
Wang S, Fan K, Luo N, Cao Y, Wu F, Zhang C, et~al.
\newblock Massive computational acceleration by using neural networks to emulate mechanism-based biological models.
\newblock Nature Communications. 2019;10(1):4354.
\newblock doi:{10.1038/s41467-019-12342-y}.

\bibitem{coulier_multiscale_2021}
Coulier A, Hellander S, Hellander A.
\newblock A multiscale compartment-based model of stochastic gene regulatory networks using hitting-time analysis.
\newblock The Journal of Chemical Physics. 2021;154(18):184105.
\newblock doi:{10.1063/5.0010764}.

\bibitem{sukys_approximating_2022}
Sukys A, Öcal K, Grima R.
\newblock Approximating solutions of the {Chemical} {Master} equation using neural networks.
\newblock iScience. 2022;25(9).
\newblock doi:{10.1016/j.isci.2022.105010}.

\bibitem{coulier_systematic_2022}
Coulier A, Singh P, Sturrock M, Hellander A.
\newblock Systematic comparison of modeling fidelity levels and parameter inference settings applied to negative feedback gene regulation.
\newblock PLOS Computational Biology. 2022;18(12):e1010683.
\newblock doi:{10.1371/journal.pcbi.1010683}.

\bibitem{raimundez_posterior_2023}
Raimúndez E, Fedders M, Hasenauer J.
\newblock Posterior marginalization accelerates {Bayesian} inference for dynamical models of biological processes.
\newblock iScience. 2023;26(11).
\newblock doi:{10.1016/j.isci.2023.108083}.

\bibitem{schnoerr_approximation_2017}
Schnoerr D, Sanguinetti G, Grima R.
\newblock Approximation and inference methods for stochastic biochemical kinetics—a tutorial review.
\newblock Journal of Physics A: Mathematical and Theoretical. 2017;50(9):093001.
\newblock doi:{10.1088/1751-8121/aa54d9}.

\bibitem{cranmer_frontier_2020}
Cranmer K, Brehmer J, Louppe G.
\newblock The frontier of simulation-based inference.
\newblock Proceedings of the National Academy of Sciences. 2020;117(48):30055--30062.
\newblock doi:{10.1073/pnas.1912789117}.

\bibitem{franzin_landscape-based_2023}
Franzin A, Stützle T.
\newblock A landscape-based analysis of fixed temperature and simulated annealing.
\newblock European Journal of Operational Research. 2023;304(2):395--410.
\newblock doi:{10.1016/j.ejor.2022.04.014}.

\bibitem{stillman_generative_2023}
Stillman NR, Mayor R.
\newblock Generative models of morphogenesis in developmental biology.
\newblock Seminars in Cell \& Developmental Biology. 2023;147:83--90.
\newblock doi:{10.1016/j.semcdb.2023.02.001}.

\bibitem{lagergren_biologically-informed_2020}
Lagergren JH, Nardini JT, Baker RE, Simpson MJ, Flores KB.
\newblock Biologically-informed neural networks guide mechanistic modeling from sparse experimental data.
\newblock PLOS Computational Biology. 2020;16(12):e1008462.
\newblock doi:{10.1371/journal.pcbi.1008462}.

\bibitem{perez_efficient_2022}
Perez SM, Sailem H, Baker RE.
\newblock Efficient {Bayesian} inference for mechanistic modelling with high-throughput data.
\newblock PLOS Computational Biology. 2022;18(6):e1010191.
\newblock doi:{10.1371/journal.pcbi.1010191}.

\bibitem{papamakarios_masked_2018}
Papamakarios G, Pavlakou T, Murray I. Masked {Autoregressive} {Flow} for {Density} {Estimation}; 2018.
\newblock Available from: \url{http://arxiv.org/abs/1705.07057}.

\bibitem{greenberg_automatic_2019}
Greenberg DS, Nonnenmacher M, Macke JH. Automatic {Posterior} {Transformation} for {Likelihood}-{Free} {Inference}; 2019.
\newblock Available from: \url{http://arxiv.org/abs/1905.07488}.

\bibitem{deistler_truncated_2022}
Deistler M, Goncalves PJ, Macke JH. Truncated proposals for scalable and hassle-free simulation-based inference; 2022.
\newblock Available from: \url{http://arxiv.org/abs/2210.04815}.

\bibitem{boelts_simulation-based_2023}
Boelts J, Harth P, Gao R, Udvary D, Yáñez F, Baum D, et~al.
\newblock Simulation-based inference for efficient identification of generative models in computational connectomics.
\newblock PLOS Computational Biology. 2023;19(9):e1011406.
\newblock doi:{10.1371/journal.pcbi.1011406}.

\bibitem{xiong_efficient_2023}
Xiong Y, Yang X, Zhang S, He Z. An efficient likelihood-free {Bayesian} inference method based on sequential neural posterior estimation; 2023.
\newblock Available from: \url{http://arxiv.org/abs/2311.12530}.

\bibitem{dirmeier_simulation-based_2024}
Dirmeier S, Albert C, Perez-Cruz F. Simulation-based inference using surjective sequential neural likelihood estimation; 2024.
\newblock Available from: \url{http://arxiv.org/abs/2308.01054}.

\bibitem{goncalves_training_2020}
Gonçalves PJ, Lueckmann JM, Deistler M, Nonnenmacher M, Öcal K, Bassetto G, et~al.
\newblock Training deep neural density estimators to identify mechanistic models of neural dynamics.
\newblock eLife. 2020;9:e56261.
\newblock doi:{10.7554/eLife.56261}.

\bibitem{kaiser_simulation-based_2023}
Kaiser J, Stock R, Müller E, Schemmel J, Schmitt S. Simulation-based {Inference} for {Model} {Parameterization} on {Analog} {Neuromorphic} {Hardware}; 2023.
\newblock Available from: \url{http://arxiv.org/abs/2303.16056}.

\bibitem{kolmus_tuning_2024}
Kolmus A, Janquart J, Baka T, van Laarhoven T, Broeck CVD, Heskes T. Tuning neural posterior estimation for gravitational wave inference; 2024.
\newblock Available from: \url{http://arxiv.org/abs/2403.02443}.

\bibitem{dingeldein_simulation-based_2023}
Dingeldein L, Cossio P, Covino R.
\newblock Simulation-based inference of single-molecule force spectroscopy.
\newblock Machine Learning: Science and Technology. 2023;4(2):025009.
\newblock doi:{10.1088/2632-2153/acc8b8}.

\bibitem{dingeldein_amortized_2024}
Dingeldein L, Silva-Sánchez D, Dimprima E, Grigorieff N, Covino R, Cossio P.
\newblock Amortized identification of biomolecular conformations in {Cryo}-{EM} using simulation-based inference.
\newblock Biophysical Journal. 2024;123(3):282a.
\newblock doi:{10.1016/j.bpj.2023.11.1758}.

\bibitem{ramirez-sierra_ai-powered_2024}
Ramirez-Sierra MA, Sokolowski TR. {AI}-powered simulation-based inference of a genuinely spatial-stochastic model of early mouse embryogenesis; 2024.
\newblock Available from: \url{http://arxiv.org/abs/2402.15330}.

\bibitem{tejero-cantero_sbi_2020}
Tejero-Cantero A, Boelts J, Deistler M, Lueckmann JM, Durkan C, Gonçalves PJ, et~al.
\newblock sbi: {A} toolkit for simulation-based inference.
\newblock Journal of Open Source Software. 2020;5(52):2505.
\newblock doi:{10.21105/joss.02505}.

\bibitem{gorecki_amortized_2023}
Gorecki M, Macke JH, Deistler M. Amortized {Bayesian} {Decision} {Making} for simulation-based models; 2023.
\newblock Available from: \url{http://arxiv.org/abs/2312.02674}.

\bibitem{lueckmann_benchmarking_2021}
Lueckmann JM, Boelts J, Greenberg DS, Gonçalves PJ, Macke JH. Benchmarking {Simulation}-{Based} {Inference}; 2021.
\newblock Available from: \url{http://arxiv.org/abs/2101.04653}.

\bibitem{minsker_geometric_2015}
Minsker S.
\newblock Geometric median and robust estimation in {Banach} spaces.
\newblock Bernoulli. 2015;21(4):2308--2335.
\newblock doi:{10.3150/14-BEJ645}.

\bibitem{minsker_robust_2016}
Minsker S, Srivastava S, Lin L, Dunson DB. Robust and {Scalable} {Bayes} via a {Median} of {Subset} {Posterior} {Measures}; 2016.
\newblock Available from: \url{http://arxiv.org/abs/1403.2660}.

\bibitem{invernizzi_skipping_2022}
Invernizzi M, Krämer A, Clementi C, Noé F.
\newblock Skipping the {Replica} {Exchange} {Ladder} with {Normalizing} {Flows}.
\newblock The Journal of Physical Chemistry Letters. 2022;13(50):11643--11649.
\newblock doi:{10.1021/acs.jpclett.2c03327}.

\bibitem{tolley_methods_2024}
Tolley N, Rodrigues PLC, Gramfort A, Jones SR.
\newblock Methods and considerations for estimating parameters in biophysically detailed neural models with simulation based inference.
\newblock PLOS Computational Biology. 2024;20(2):e1011108.
\newblock doi:{10.1371/journal.pcbi.1011108}.

\bibitem{sgro_intracellular_2015}
Sgro AE, Schwab DJ, Noorbakhsh J, Mestler T, Mehta P, Gregor T.
\newblock From intracellular signaling to population oscillations: bridging size‐ and time‐scales in collective behavior.
\newblock Molecular Systems Biology. 2015;11(1):779.
\newblock doi:{10.15252/msb.20145352}.

\bibitem{chowdhary_journey_2022}
Chowdhary S, Hadjantonakis AK.
\newblock Journey of the mouse primitive endoderm: from specification to maturation.
\newblock Philosophical Transactions of the Royal Society B: Biological Sciences. 2022;377(1865):20210252.
\newblock doi:{10.1098/rstb.2021.0252}.

\bibitem{bruckner_information_2024}
Brückner DB, Tkačik G.
\newblock Information content and optimization of self-organized developmental systems.
\newblock Proceedings of the National Academy of Sciences. 2024;121(23):e2322326121.
\newblock doi:{10.1073/pnas.2322326121}.

\bibitem{franzin_revisiting_2019}
Franzin A, Stützle T.
\newblock Revisiting simulated annealing: {A} component-based analysis.
\newblock Computers \& Operations Research. 2019;104:191--206.
\newblock doi:{10.1016/j.cor.2018.12.015}.

\bibitem{boelts_flexible_2022}
Boelts J, Lueckmann JM, Gao R, Macke JH.
\newblock Flexible and efficient simulation-based inference for models of decision-making.
\newblock eLife. 2022;11:e77220.
\newblock doi:{10.7554/eLife.77220}.

\bibitem{albert_simulated_2015}
Albert C, Kuensch HR, Scheidegger A.
\newblock A {Simulated} {Annealing} {Approach} to {Approximate} {Bayes} {Computations}.
\newblock Statistics and Computing. 2015;25(6):1217--1232.
\newblock doi:{10.1007/s11222-014-9507-8}.

\bibitem{frank_input-output_2013}
Frank SA.
\newblock Input-output relations in biological systems: measurement, information and the {Hill} equation.
\newblock Biology Direct. 2013;8(1):31.
\newblock doi:{10.1186/1745-6150-8-31}.

\bibitem{saiz_coordination_2020}
Saiz N, Hadjantonakis AK.
\newblock Coordination between patterning and morphogenesis ensures robustness during mouse development.
\newblock Philosophical Transactions of the Royal Society B. 2020;doi:{10.1098/rstb.2019.0562}.

\bibitem{plusa_common_2020}
Płusa B, Piliszek A.
\newblock Common principles of early mammalian embryo self-organisation.
\newblock Development. 2020;147(dev183079).
\newblock doi:{10.1242/dev.183079}.

\bibitem{fange_stochastic_2010}
Fange D, Berg OG, Sjöberg P, Elf J.
\newblock Stochastic reaction-diffusion kinetics in the microscopic limit.
\newblock Proceedings of the National Academy of Sciences. 2010;107(46):19820--19825.
\newblock doi:{10.1073/pnas.1006565107}.

\bibitem{erban_stochastic_2020}
Erban R, Chapman SJ.
\newblock Stochastic {Modelling} of {Reaction}–{Diffusion} {Processes}.
\newblock Cambridge {Texts} in {Applied} {Mathematics}. Cambridge: Cambridge University Press; 2020.
\newblock Available from: \url{https://www.cambridge.org/core/books/stochastic-modelling-of-reactiondiffusion-processes/9BB8B46DE0B898FC019AFBEA95608FAE}.

\bibitem{barrows_parameter_2023}
Barrows D, Ilie S.
\newblock Parameter estimation for the reaction–diffusion master equation.
\newblock AIP Advances. 2023;13(6):065318.
\newblock doi:{10.1063/5.0150292}.

\bibitem{kang_multiscale_2019}
Kang HW, Erban R.
\newblock Multiscale {Stochastic} {Reaction}–{Diffusion} {Algorithms} {Combining} {Markov} {Chain} {Models} with {Stochastic} {Partial} {Differential} {Equations}.
\newblock Bulletin of Mathematical Biology. 2019;81(8):3185--3213.
\newblock doi:{10.1007/s11538-019-00613-0}.

\bibitem{varolgunes_interpretable_2020}
Varolgüneş YB, Bereau T, Rudzinski JF.
\newblock Interpretable embeddings from molecular simulations using {Gaussian} mixture variational autoencoders.
\newblock Machine Learning: Science and Technology. 2020;1(1):015012.
\newblock doi:{10.1088/2632-2153/ab80b7}.

\bibitem{allegre_nanog_2022}
Allègre N, Chauveau S, Dennis C, Renaud Y, Meistermann D, Estrella LV, et~al.
\newblock {NANOG} initiates epiblast fate through the coordination of pluripotency genes expression.
\newblock Nature Communications. 2022;13(1):3550.
\newblock doi:{10.1038/s41467-022-30858-8}.

\bibitem{bessonnard_icm_2017}
Bessonnard S, Coqueran S, Vandormael-Pournin S, Dufour A, Artus J, Cohen-Tannoudji M.
\newblock {ICM} conversion to epiblast by {FGF}/{ERK} inhibition is limited in time and requires transcription and protein degradation.
\newblock Scientific Reports. 2017;7(1):12285.
\newblock doi:{10.1038/s41598-017-12120-0}.

\bibitem{papamakarios_sequential_2019}
Papamakarios G, Sterratt DC, Murray I. Sequential {Neural} {Likelihood}: {Fast} {Likelihood}-free {Inference} with {Autoregressive} {Flows}; 2019.
\newblock Available from: \url{http://arxiv.org/abs/1805.07226}.

\bibitem{miller_contrastive_2023}
Miller BK, Weniger C, Forré P. Contrastive {Neural} {Ratio} {Estimation}; 2023.
\newblock Available from: \url{http://arxiv.org/abs/2210.06170}.

\bibitem{minsker_geometric_2023}
Minsker S, Strawn N. The {Geometric} {Median} and {Applications} to {Robust} {Mean} {Estimation}; 2023.
\newblock Available from: \url{http://arxiv.org/abs/2307.03111}.

\bibitem{arumugam_bayesian_2023}
Arumugam D, Ho MK, Goodman ND, Van~Roy B. Bayesian {Reinforcement} {Learning} with {Limited} {Cognitive} {Load}; 2023.
\newblock Available from: \url{http://arxiv.org/abs/2305.03263}.

\bibitem{kochenderfer_algorithms_2019}
Kochenderfer MJ, Wheeler TA.
\newblock Algorithms for {Optimization}.
\newblock The MIT Press; 2019.
\newblock Available from: \url{https://mitpress.mit.edu/9780262039420/algorithms-for-optimization/}.

\bibitem{glockler_variational_2022}
Glöckler M, Deistler M, Macke JH. Variational methods for simulation-based inference; 2022.
\newblock Available from: \url{http://arxiv.org/abs/2203.04176}.

\bibitem{marder_multiple_2011}
Marder E, Taylor AL.
\newblock Multiple models to capture the variability in biological neurons and networks.
\newblock Nature Neuroscience. 2011;14(2):133--138.
\newblock doi:{10.1038/nn.2735}.

\bibitem{massonis_distilling_2023}
Massonis G, Villaverde AF, Banga JR. Distilling identifiable and interpretable dynamic models from biological data; 2023.
\newblock Available from: \url{https://www.biorxiv.org/content/10.1101/2023.03.13.532340v2}.

\bibitem{lin_divergence_1991}
Lin J.
\newblock Divergence measures based on the {Shannon} entropy.
\newblock IEEE Transactions on Information Theory. 1991;37(1):145--151.
\newblock doi:{10.1109/18.61115}.

\bibitem{saiz_growth-factor-mediated_2020}
Saiz N, Mora-Bitria L, Rahman S, George H, Herder JP, Garcia-Ojalvo J, et~al.
\newblock Growth-factor-mediated coupling between lineage size and cell fate choice underlies robustness of mammalian development.
\newblock eLife. 2020;9.
\newblock doi:{10.7554/eLife.56079}.

\bibitem{raina_cell-cell_2021}
Raina D, Bahadori A, Stanoev A, Protzek M, Koseska A, Schröter C.
\newblock Cell-cell communication through {FGF4} generates and maintains robust proportions of differentiated cell types in embryonic stem cells.
\newblock Development. 2021;148(21):dev199926.
\newblock doi:{10.1242/dev.199926}.

\bibitem{fischer_salt-and-pepper_2023}
Fischer SC, Schardt S, Lilao-Garzón J, Muñoz-Descalzo S.
\newblock The salt-and-pepper pattern in mouse blastocysts is compatible with signaling beyond the nearest neighbors.
\newblock iScience. 2023;26(11).
\newblock doi:{10.1016/j.isci.2023.108106}.

\bibitem{ryan_lumen_2019}
Ryan AQ, Chan CJ, Graner F, Hiiragi T.
\newblock Lumen {Expansion} {Facilitates} {Epiblast}-{Primitive} {Endoderm} {Fate} {Specification} during {Mouse} {Blastocyst} {Formation}.
\newblock Developmental Cell. 2019;51(6):684--697.e4.
\newblock doi:{10.1016/j.devcel.2019.10.011}.

\bibitem{shahbazi_mechanisms_2020}
Shahbazi MN.
\newblock Mechanisms of human embryo development: from cell fate to tissue shape and back.
\newblock Development. 2020;147(14):dev190629.
\newblock doi:{10.1242/dev.190629}.

\end{thebibliography}

% Manuscript Organization: {Supporting information captions} % Careful!

% \input{Manuscript_Organization/Supporting_Information_Captions.tex} % Careful!

% Manuscript [FINAL]

\end{document}